\newtheorem{theorem}{Theorem}
\newtheorem{lemma}[theorem]{Lemma}
\newtheorem{corollary}[theorem]{Corollary}
\newtheorem{remark}{Remark}
\newenvironment{proof}{\smallskip\par\noindent\emph{Proof.\ }}%
{\par\medskip}
\newenvironment{mathdescription}%
  {\begin{list}{}%
    {\renewcommand{\makelabel}[1]%
      {\textbf{\mathversion{bold}{##1}\mathversion{normal}\hfil}}}}
  {\end{list}}
\newcommand{\figwidth}{}
\newcommand{\hide}[1]{}
\newcommand{\polyapprox}[1]{{\tilde {#1}}}
\newcommand{\compl}[1]{{{#1}^\alpha}}
\newcommand{\intrr}[1]{{{#1}^\circ}}
\newcommand{\tr}[1]{\mathcal{T}({#1})}
\newcommand{\sect}[1]{\mathcal{S}({#1})}
\newcommand{\proofbox}{\hfill$\Box$}
\newcommand{\ie}{i.e.,\xspace}
\newcommand{\resp}{resp.,\xspace}
\newcommand{\pconvex}{piecewise-convex\xspace}
\newcommand{\pconcave}{piecewise-concave\xspace}
\newcommand{\Pconvex}{Piecewise-convex\xspace}
\newcommand{\Pconcave}{Piecewise-concave\xspace}
\newcommand{\acsProjectNr}{IST-006413}
\newcommand{\acsAcronym}{ACS}
\newcommand{\acsTitle}{Algorithms for Complex Shapes}
\newcommand{\acsSubTitle}{with Certified Numerics and Topology}
\newcommand{\acsFullTitle}{\acsTitle\ \acsSubTitle}
\newcommand{\ACSacknowledgement}{Work partially supported by the IST
  Programme of the EU (FET Open) Project under Contract No
  \acsProjectNr\ -- (\acsAcronym\ - \acsFullTitle)}
\title{Guarding curvilinear art galleries with vertex or point guards}
\author{Menelaos I. Karavelas$^{\dagger,\ddagger}$\hfil{}
Elias P. Tsigaridas$^\star$\\[5pt]
\it{}$^\dagger$Department of Applied Mathematics,\\
\it{}University of Crete\\
\it{}GR-714 09 Heraklion, Greece,\\
\texttt{mkaravel@tem.uoc.gr}\\[5pt]
\it{}$^\ddagger$Institute of Applied and Computational Mathematics,\\
\it{}Foundation for Research and Technology - Hellas,\\
\it{}P.O. Box 1385, GR-711 10 Heraklion, Greece\\[5pt]
\it{}$^\star$LORIA-INRIA Lorraine,\\
\it{} 615 rue du Jardin Botanique, BP 101,\\
\it{}54602 Villers-l{\'e}-Nancy Cedex,  France,\\
\texttt{Elias.Tsigaridas@loria.fr}}
\begin{document}

\maketitle

\begin{abstract}
One of the earliest and most well known problems in computational
geometry is the so-called \emph{art gallery problem}. The goal is to
compute the minimum possible number guards placed on the vertices of a
simple polygon in such a way that they cover the interior of the
polygon.

In this paper we consider the problem of guarding an art gallery which
is modeled as a polygon with curvilinear walls. Our main focus is on
polygons the edges of which are convex arcs pointing towards the
exterior or interior of the polygon (but not both), named \pconvex
and \pconcave polygons. We prove that, in the case of
\pconvex polygons, if we only allow vertex guards,
$\lfloor\frac{4n}{7}\rfloor-1$ guards are sometimes necessary, and
$\lfloor\frac{2n}{3}\rfloor$ guards are always sufficient. Moreover,
an $O(n\log{}n)$ time and $O(n)$ space algorithm is described that
produces a vertex guarding set of size at most
$\lfloor\frac{2n}{3}\rfloor$. When we allow point guards the
afore-mentioned lower bound drops down to
$\lfloor\frac{n}{2}\rfloor$.
In the special case of monotone \pconvex polygons we can show
that $\lfloor\frac{n}{2}\rfloor$ vertex guards are always sufficient
and sometimes necessary; these bounds remain valid even if we allow
point guards.

In the case of \pconcave polygons, we show that $2n-4$ point
guards are always sufficient and sometimes necessary, whereas it might
not be possible to guard such polygons by vertex guards. We conclude
with bounds for other types of curvilinear polygons and future work.
\end{abstract}


\clearpage
\newcommand{\myparagraph}[1]{\par\noindent\textbf{#1}$\quad$}


\section{Introduction}
\label{sec:intro}

Consider a simple polygon $P$ with $n$ vertices. How many points with
omnidirectional visibility are required in order to see every point in
the interior of $P$? This problem, known as the \emph{art gallery
  problem} has been one of the earliest problems in Computational
Geometry. Applications areas include robotics
\cite{ks-errss,xcb-pvicb-86}, motion planning
\cite{lw-apcfp-79,m-aaspt-89}, computer vision and pattern recognition
\cite{sc-bwfmr-86,y-3damv-86,ae-caps-83,t-prgc-80}, graphics
\cite{m-wcohs-87,ci-tsc-84}, CAD/CAM \cite{b-beddc-88,ek-hstsr-89} and
wireless networks \cite{egs-gpepp-07}.
In the late 1980's to mid 1990's interest moved from linear polygonal
objects to curvilinear objects
\cite{uz-ics-89,cgl-iscd-89,cruz-pcs-94,cgru-ihcs-95} ---
see also the paper by Dobkin and Souvaine \cite{ds-cgcw-90} that
extends linear polygon algorithms to curvilinear polygons, as well as
the recent book by Boissonnat and Teillaud \cite{BoiTeil:ECG:book} for
a collection of results on non-linear computational geometry beyond
art gallery related problems.
In this context this paper addresses the classical art gallery problem
for various classes of polygonal regions the edges of which are arcs
of curves. To the best of our knowledge this is the first time that
the art gallery problem is considered in this context.

The first results on the art gallery problem or its variations date
back to the 1970's. Chv{\'a}tal \cite{c-ctpg-75} was the first to
prove that a simple polygon with $n$ vertices can be always guarded
with $\lfloor\frac{n}{3}\rfloor$ vertices; this bound is tight in the
worst case. The proof by Chv{\'a}tal was quite tedious and Fisk
\cite{f-spcwt-78} gave a much simpler proof by means of triangulating
the polygon and coloring its vertices using three colors in such a way
so that every triangle in the triangulation of the polygon does not
contain two vertices of the same color. The algorithm proposed by Fisk
runs in $O(T(n)+n)$ time, where $T(n)$ is the time to triangulate a
simple polygon. Following Chazelle's linear-time algorithm for
triangulating a simple polygon \cite{c-tsplt-90i,c-tsplt-91a}, the
algorithm proposed by Fisk runs in $O(n)$ time. Lee and Lin
\cite{ll-ccagp-86} showed that computing the minimum number of vertex
guards for a simple polygon is NP-hard, which was extended to
point guards by Aggarwal \cite{a-agpiv-84}. Soon afterwards other
types of polygons were considered. Kahn, Klawe and Kleitman
\cite{kkk-tgrfw-83} showed that orthogonal polygons of size $n$, \ie
polygons with axes-aligned edges, can be guarded with
$\lfloor\frac{n}{4}\rfloor$ vertex guards, which is also a lower
bound. Several $O(n)$ algorithms have been proposed for this variation
of the problem, notably by Sack \cite{s-rcg-84}, who gave the first
such algorithm, and later on by Lubiw \cite{l-dprcq-85}. Edelsbrunner,
O'Rourke and Welzl \cite{eow-sgrag-84} gave a linear time algorithm
for guarding orthogonal polygons with $\lfloor\frac{n}{4}\rfloor$
point guards.

Beside simple polygons and simple orthogonal polygons, polygons with
holes, and orthogonal polygons with holes have been investigated. As
far as the type of guards is concerned, \emph{edge guards} and
\emph{mobile guards} have been considered. An edge guard is an edge of
the polygon, and a point is visible from it if it is visible from at
least one point on the edge; mobile guards are essentially either
edges of the polygon, or diagonals of the polygon. Other types of
guarding problems have also been studied in the literature,
notably, the \emph{fortress problem} (guard the exterior of the
polygon against enemy raids) and the \emph{prison yard problem} (guard
both the interior and the exterior of the polygon which represents a
prison: prisoners must be guarded in the interior of the prison and
should not be allowed to escape out of the prison). For a detailed
discussion of these variations and the corresponding results the
interested reader should refer to the book by O'Rourke
\cite{o-agta-87}, the survey paper by Shermer \cite{s-rrag-92} and the
book chapter by Urrutia \cite{u-agip-00}.

In this paper we consider the original problem, that is the problem of
guarding a simple polygon. We are primarily interested in the case of
vertex guards, although results about point guards are also
described. In our case, polygons are not required to have linear
edges. On the contrary we consider polygons that have smooth curvilinear
edges. Clearly, these problems are NP-hard, since they are direct
generalizations of the corresponding original art gallery problems.
In the most general setting where we impose no restriction on
the type of edges of the polygon, it is very easy to see that there
exist curvilinear polygons that cannot be guarded with vertex guards, or
require an infinite number of point guards (see
Fig. \ref{fig:nonconvexinfinite}). Restricting the edges of the
polygon to be locally convex curves, pointing towards the exterior of
the polygon (\ie the polygon is a locally convex set, except
possibly at the vertices) we can construct polygons that require a
minimum of $n$ vertex or point guards, where $n$ is the number of
vertices of the polygon (see Fig. \ref{fig:locallyconvexall}); in fact
such polygons can always be guarded with their $n$ vertices.
The main focus of this paper is the class of polygons that are either
locally convex or locally concave (except possibly at the vertices),
the edges of which are convex arcs; we call such polygons
\emph{\pconvex} and \emph{\pconcave polygons},
respectively.

For the first class of polygons we show that it is always possible to
guard them with $\lfloor\frac{2n}{3}\rfloor$ vertex guards, where $n$
is the number of polygon vertices. On the other hand we describe
families of \pconvex polygons that require a minimum of
$\lfloor\frac{4n}{7}\rfloor-1$ vertex guards and
$\lfloor\frac{n}{2}\rfloor$ point guards. Aside from the combinatorial
complexity type of results, we describe an $O(n\log{}n)$ time and
$O(n)$ space algorithm which, given a \pconvex polygon,
computes a guarding set of size at most
$\lfloor\frac{2n}{3}\rfloor$. Our algorithm should be viewed as a
generalization of Fisk's algorithm \cite{f-spcwt-78}; in fact, when
applied to polygons with linear edges, it produces a guarding set of
size at most $\lfloor\frac{n}{3}\rfloor$.
For the purposes of our complexity analysis and results, we assume,
throughout the paper, that the curvilinear edges of our polygons are
arcs of algebraic curves of constant degree; as a result all
predicates required by the algorithms described in this paper take
$O(1)$ time in the Real RAM computation model.
The central idea for both obtaining the upper bound as well as for
designing our algorithm is to approximate the \pconvex polygon
by a linear polygon (a polygon with line segments as
edges). Additional auxiliary vertices are added on the boundary of the
curvilinear polygon in order to achieve this. The resulting linear polygon
has the same topology as the original polygon and captures the
essentials of the geometry of the \pconvex polygon; for
obvious reasons we term this linear polygon the
\emph{polygonal approximation}. Once the polygonal approximation has
been constructed, we compute a guarding set for it by applying a
slight modification of Fisk's algorithm \cite{f-spcwt-78}. The
guarding set just computed for the polygonal approximation turns out
to be a guarding set for the original curvilinear polygon. The final step
of both the proof and our algorithm consists in mapping the guarding
set of the polygonal approximation to another vertex guarding set
consisting of vertices of the original polygon only.

If we further restrict ourselves to \emph{monotone \pconvex
  polygons}, \ie \pconvex polygons that have the property
that there exists a line $L$, such that any line $L^\perp$
perpendicular to $L$ intersects the polygon at most twice, we can show
that $\lfloor\frac{n}{2}\rfloor+1$ vertex or $\lfloor\frac{n}{2}\rfloor$
point guards are always sufficient and sometimes necessary. Such a
line $L$ can be computed in $O(n)$ time (cf. \cite{ds-cgcw-90}). Given
$L$, it is very easy to compute a vertex guarding set of size
$\lfloor\frac{n}{2}\rfloor+1$, or a point guarding set of size
$\lfloor\frac{n}{2}\rfloor$: the problem of computing such 
a guarding set essentially reduces to merging two sorted arrays, thus
taking $O(n)$ time and $O(n)$ space. This result should be contrasted
against the case of monotone linear polygons where the corresponding
upper and lower bound on the number of vertex or point guards required
to guard the polygon matches that of general (\ie not necessarily
monotone) linear polygons. In other words, monotonicity seems to play
a crucial  role in the case of \pconvex polygons, which is not
the case for linear polygons.

For the second class of polygons, \ie the class of \pconcave
polygons, vertex guards may not be sufficient in order to guard the
interior of the polygon (see
Fig. \ref{fig:concave_no_vertex_guards}). We thus turn our attention
to point guards, and we show that $2n-4$ point guards are always
sufficient and sometimes necessary. Our method for showing the
sufficiency result is similar to the technique used to illuminate
sets of disjoint convex objects on the plane \cite{f-icd-77}. Given 
a \pconcave polygon $P$, we construct a new locally concave
polygon $Q$, contained inside $P$, and such that the tangencies
between edges of $Q$ are maximized. The problem of guarding $P$ then
reduces to the problem of guarding $Q$, which essentially consists of
a number of faces with pairwise disjoint interiors. The faces of $Q$
require, each, two point guards in order to be guarded, and are in
1--1 correspondence with the triangles of an appropriately defined
triangulation graph $\tr{R}$ of a polygon $R$ with $n$ vertices. Thus
the number point guards required to guard $P$ is at most two times
the number of faces of $\tr{R}$, \ie $2n-4$.

The rest of the paper is structured as follows. In Section
\ref{sec:preliminaries}  we introduce some notation and provide
various definitions. In Section \ref{sec:piececonvex} we present our
algorithm for computing a guarding set, of size
$\lfloor\frac{2n}{3}\rfloor$, for a \pconvex polygon with $n$
vertices. Section \ref{sec:piececonvex} is further subdivided into five
subsections. In Subsection \ref{sec:polyapprox} we define the
polygonal approximation of our curvilinear polygon and prove some geometric
and combinatorial properties. In Subsection \ref{sec:triangulation} we
show how to construct a, properly chosen, \emph{constrained}
triangulation of the polygonal approximation. In Subsection
\ref{sec:guardingset} we describe how to compute the guarding set for
the original curvilinear polygon from the guarding set of the polygonal
approximation due to Fisk's algorithm and prove the upper bound on the
cardinality of the guarding set. In Subsection
\ref{sec:guardingalgo} we show how to compute the guarding set in
$O(n\log{}n)$ time and $O(n)$ space. Finally, in Subsection
\ref{sec:lowerbound} is devoted to the presentation of the family of
polygons that attains the lower bound of
$\lfloor\frac{4n}{7}\rfloor-1$ vertex guards.
The special case of guarding monotone \pconvex
polygons is discussed in Section \ref{sec:monotonepiececonvex}.
We show that $\lfloor\frac{n}{2}\rfloor+1$ vertex (or
$\lfloor\frac{n}{2}\rfloor$ point) guards are
always necessary and sometimes sufficient, and present an $O(n)$ time
and $O(n)$ space algorithm for computing such a guarding set.
In Section \ref{sec:piececoncave} we present our results for
\pconcave polygons, namely, that $2n-4$ point guards are
always necessary and sometimes sufficient for this class of polygons.
Section \ref{sec:otherlb} contains further results. More
precisely, we present bounds for locally convex polygons, monotone
locally convex polygons and general polygons. The final section of the
paper, Section \ref{sec:summary}, summarizes our results and discusses
open problems.


\section{Definitions}
\label{sec:preliminaries}

\begin{figure}[t]
\begin{center}
  \subfigure[\label{fig:linearpoly}]
            {\includegraphics[width=0.2\textwidth]{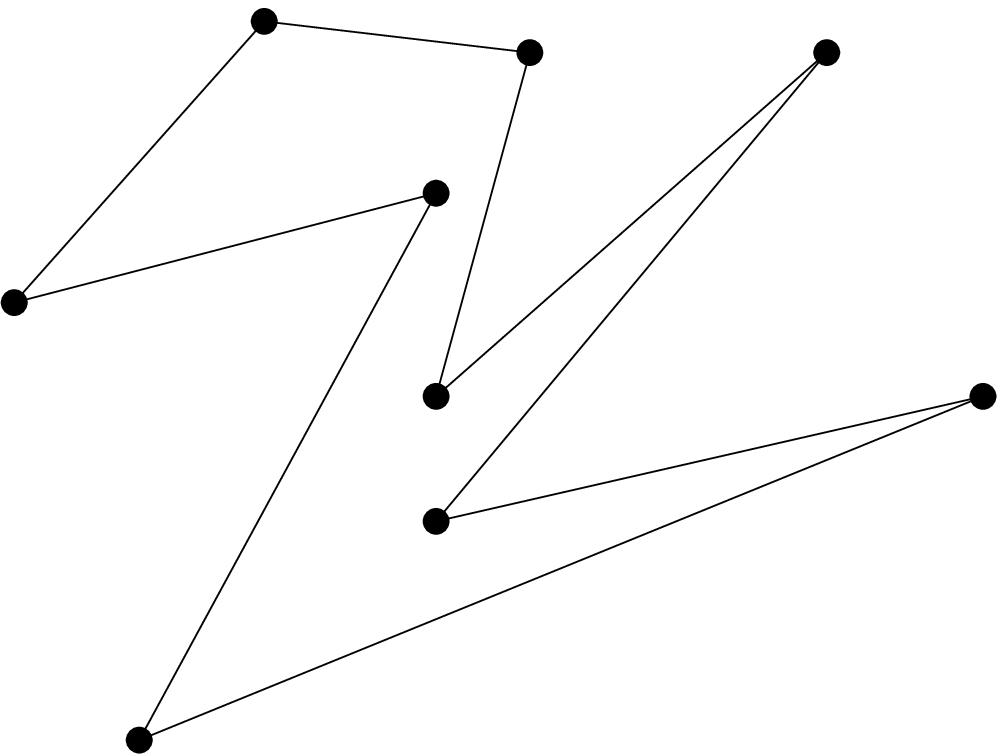}}
            \hfil
  \subfigure[\label{fig:convexpoly}]
            {\includegraphics[width=0.2\textwidth]{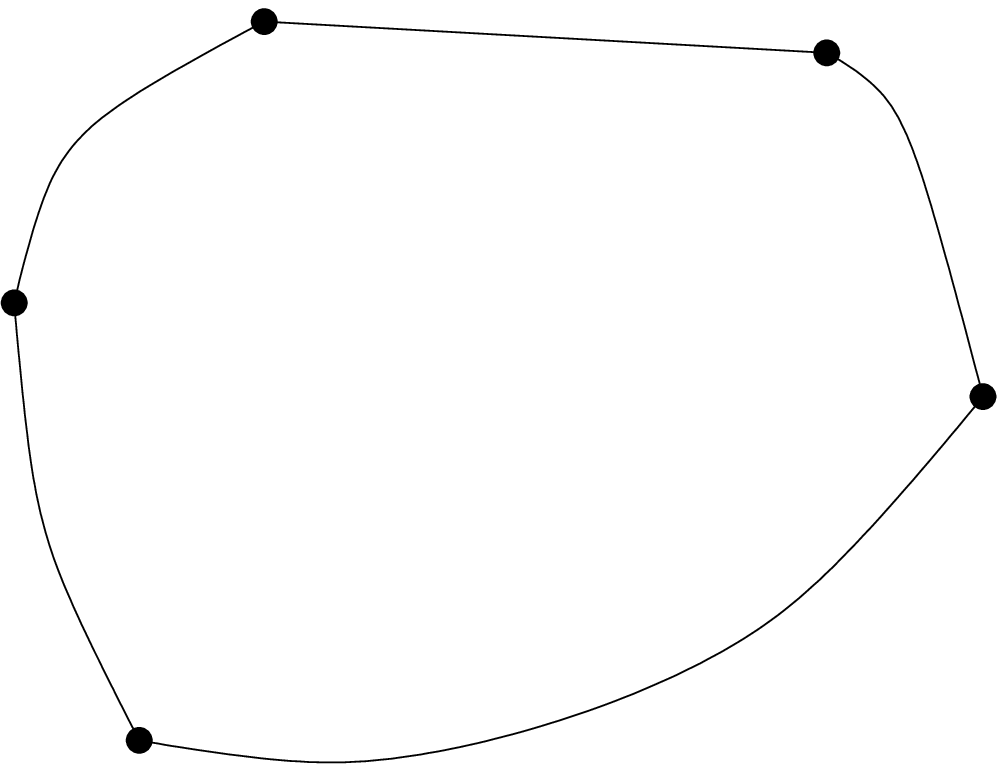}}
            \hfil
  \subfigure[\label{fig:piececonvexpoly}]
            {\includegraphics[width=0.2\textwidth]{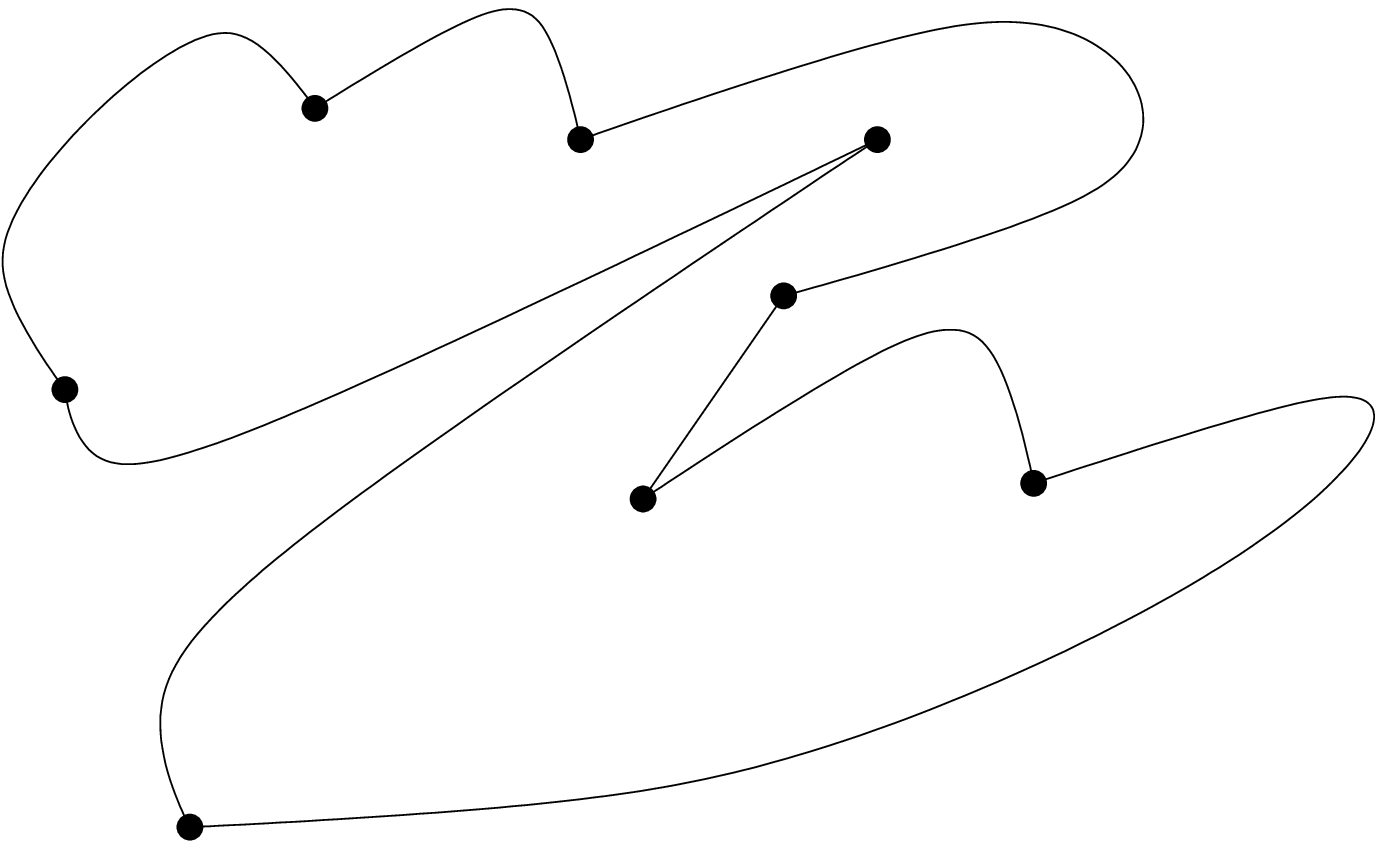}}
            \\
  \subfigure[\label{fig:locallyconvexpoly}]
            {\includegraphics[width=0.2\textwidth]{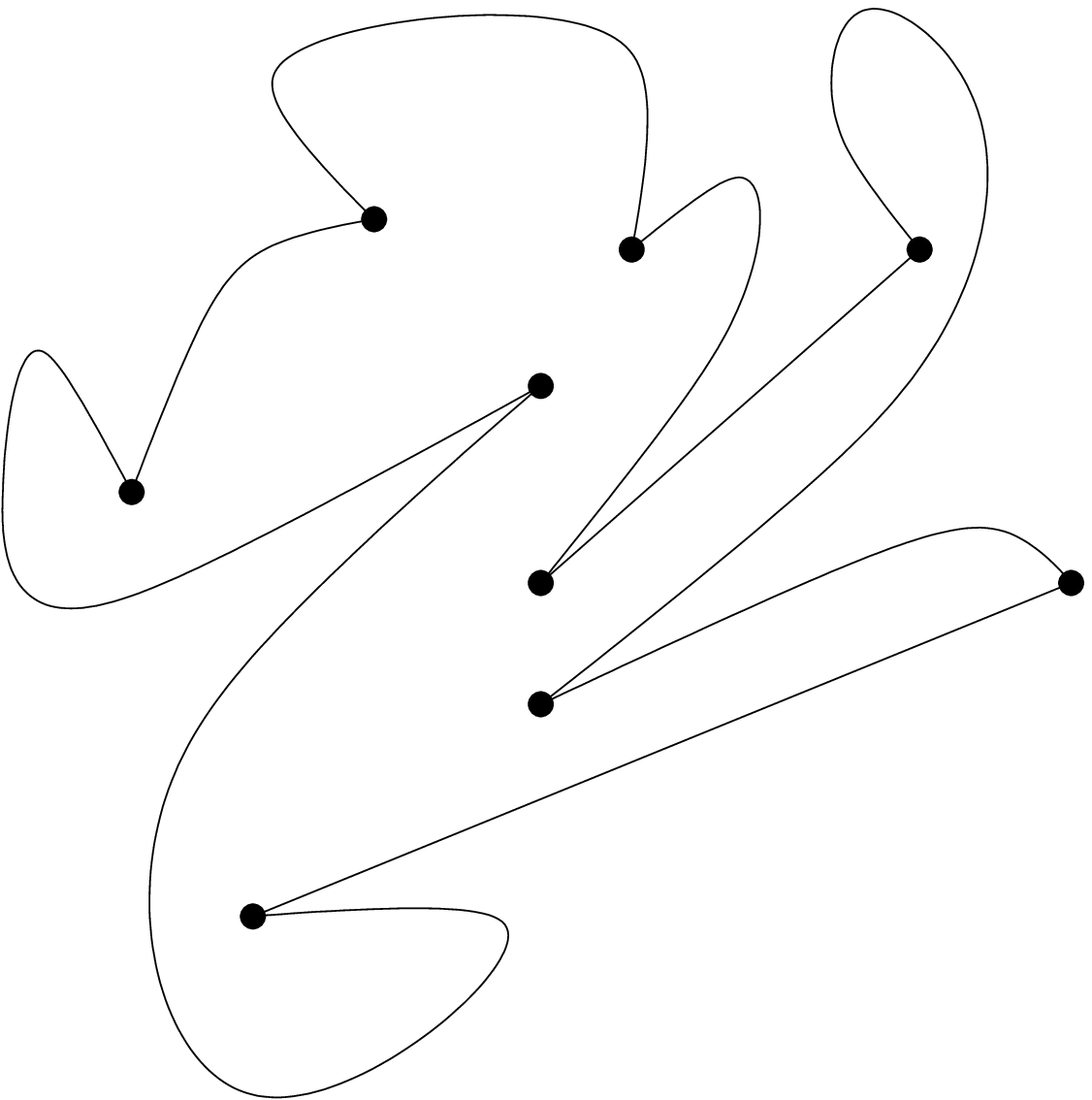}}
            \hfil
  \subfigure[\label{fig:piececoncavepoly}]
            {\includegraphics[width=0.2\textwidth]{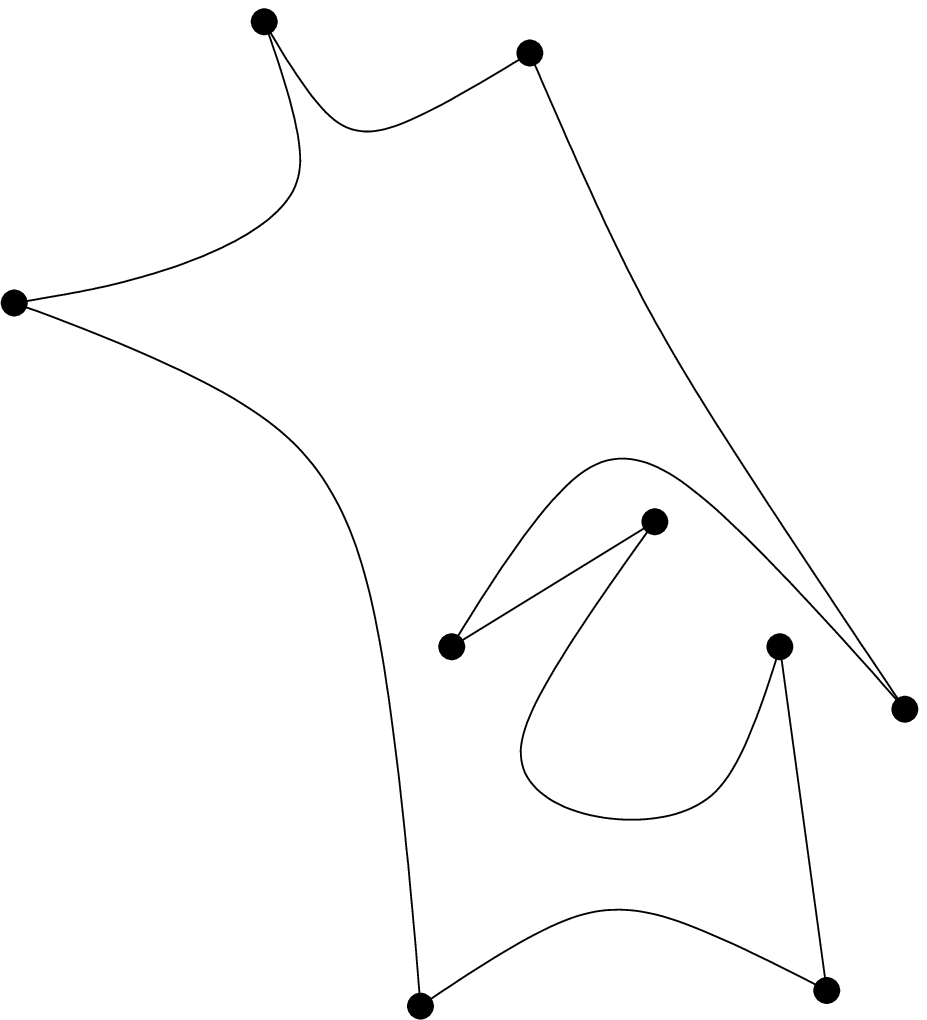}}
            \hfil
  \subfigure[\label{fig:nonconvexpoly}]
            {\includegraphics[width=0.2\textwidth]{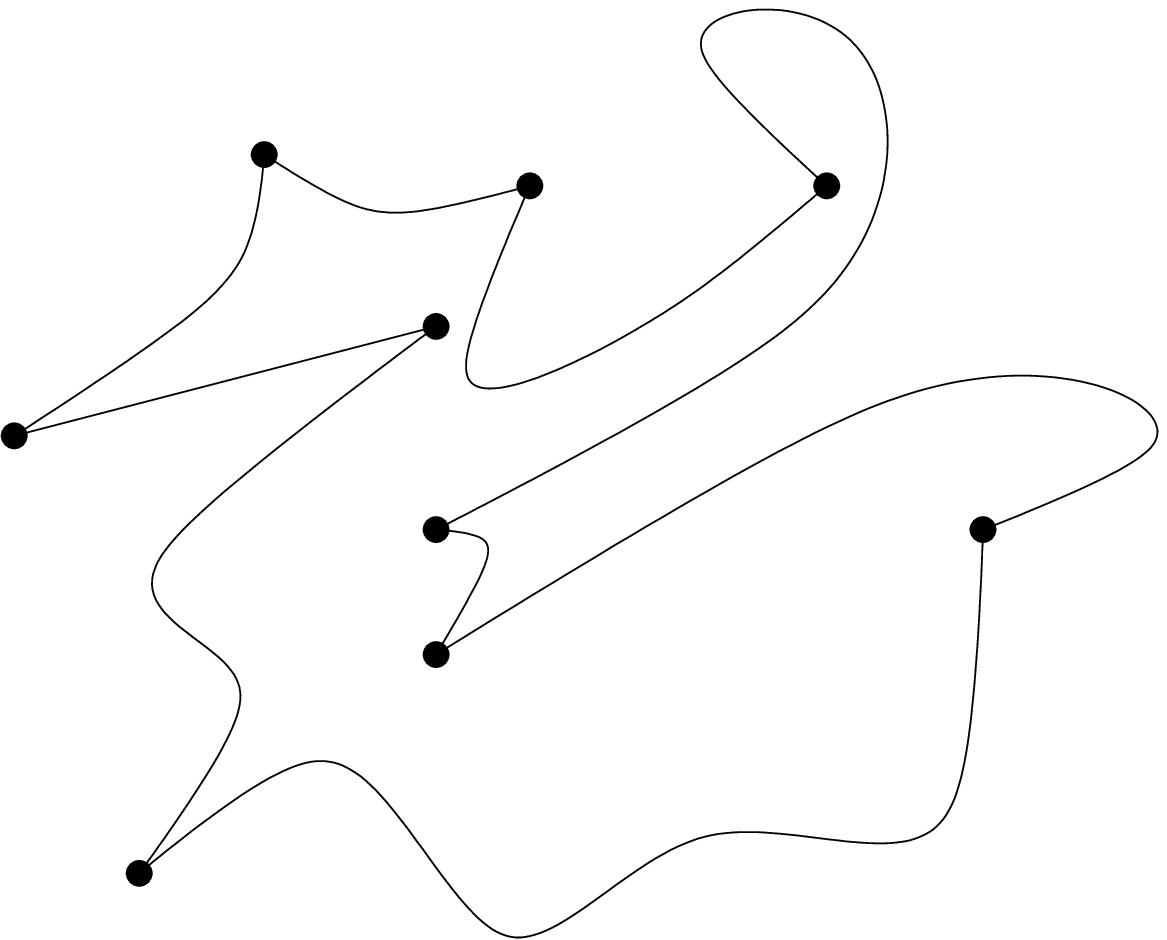}}
  \caption{Different types of curvilinear polygons: (a) a linear polygon,
    (b) a convex polygon, (c) a \pconvex polygon, (d) a
    locally convex polygon, (e) a \pconcave polygon and (f) a
    general polygon.}
  \label{fig:polygontypes}
\end{center}
\end{figure}

\myparagraph{Curvilinear arcs.}
Let $S$ be a sequence of points $v_1,\ldots,v_n$ and $E$ a set of
curvilinear arcs $a_1,\ldots,a_n$, such that $a_i$ has as endpoints
the points $v_i$ and $v_{i+1}$\footnote{Indices are considered to be
  evaluated modulo $n$.}. We will assume that
the arcs $a_i$ and $a_j$, $i\ne j$, do not intersect, except when
$j=i-1$ or $j=i+1$, in which case they intersect only at the points
$v_i$ and $v_{i+1}$, respectively . We define a \emph{curvilinear polygon} $P$
to be the closed region delimited by the arcs $a_i$. The points $v_i$
are called the vertices of $P$. An arc $a_i$ is
a \emph{convex arc} if every line on the plane intersects $a_i$ at
either at most two points or along a linear segment. 
If $q$ is a point in the interior of $a_i$, an
$\varepsilon$-neighborhood $n_\varepsilon(q)$ of $q$ is defined to be
the intersection of $a_i$ with a disk centered at $q$ with radius
$\varepsilon$. An arc $a_i$ is a \emph{locally convex arc} if for
every point $q$ in the interior of $a_i$, there exists an
$\varepsilon_q$ such that for every $0<\varepsilon \le \varepsilon_q$,
the $\varepsilon$-neighborhood of $q$ lies entirely in one of the two
halfspaces defined by the line $\ell$ tangent to $a_i$ at $q$; note
that if $\ell$ is not uniquely defined, then the
containment-in-halfspace property mentioned just above has to hold for
any such line $\ell$. Finally, note that a convex arc is also a
locally convex arc.

Our definition does not really require that the arcs $a_i$ are
smooth. In fact the arcs $a_i$ can be polylines, in which case the
results presented in this paper are still valid. What might be
different, however, is our complexity analyses, since we have assumed
that the $a_i$'s have constant complexity. In the remainder of this
paper, and unless otherwise stated, we will assume that the arcs $a_i$
are $G^1$-continuous and have constant complexity.

\myparagraph{Curvilinear polygons.}
A polygon $P$ is a \emph{linear polygon} if its edges are line
segments (see Fig. \ref{fig:linearpoly}).
A polygon $P$ consisting of curvilinear arcs as edges is called a
\emph{convex polygon} if every line on the plane intersects its
boundary at either at most two points or along a line segment (see
Fig. \ref{fig:convexpoly}).
A polygon is called a \emph{\pconvex polygon}, if every arc is
a convex arc and for every point $q$ in the interior of an arc $a_i$
of the polygon, the interior of the polygon is locally on the same
side as the arc $a_i$ with respect to the line tangent to $a_i$ at
$q$ (see Fig. \ref{fig:piececonvexpoly}).
A polygon is called a \emph{locally convex polygon} if the
boundary of the polygon is a locally convex curve, except possibly at
its vertices (see Fig. \ref{fig:locallyconvexpoly}). 
Note that a convex polygon is a \pconvex polygon and that a
\pconvex polygon is also a locally convex polygon. 
A polygon $P$ is called a \emph{\pconcave polygon}, if every
arc of $P$ is convex and for every point $q$ in the interior of
a non-linear arc $a_i$, the interior of $P$ lies locally on both sides
of the line tangent to $a_i$ at $q$ (see
Fig. \ref{fig:piececoncavepoly}).
Finally, a polygon is said to be a \emph{general polygon} if we impose
no restrictions on the type of its edges (see
Fig. \ref{fig:nonconvexpoly}).
We will use the term \emph{curvilinear polygon} to refer to a polygon the
edges of which are either line or curve segments.

\myparagraph{Guards and guarding sets.}
In our setting, a \emph{guard} or \emph{point guard} is a point in the
interior or on the boundary of a curvilinear polygon $P$. A guard of $P$
that is also a vertex of $P$ is called a \emph{vertex guard}.
We say that a curvilinear polygon $P$ is \emph{guarded} by a set $G$ of
guards if every point in $P$ is visible from at least one point in
$G$. The set $G$ that has this property is called a
\emph{guarding set} for $P$. A guarding set that consists solely of
vertices of $P$ is called a \emph{vertex guarding set}.


\section{\Pconvex polygons}\label{sec:piececonvex}

In this section we present an algorithm which, given a \pconvex
polygon $P$ of size $n$, it computes a vertex guarding set $G$ of
size $\lfloor\frac{2n}{3}\rfloor$. The basic steps of the algorithm are
as follows:
\begin{enumerate}
\item Compute the polygonal approximation $\polyapprox{P}$ of $P$.
\item\label{step:trPA}
  Compute a constrained triangulation $\tr{\polyapprox{P}}$ of
  $\polyapprox{P}$.
\item Compute a guarding set $G_{\polyapprox{P}}$ for
  $\polyapprox{P}$, by coloring the vertices of $\tr{\polyapprox{P}}$
  using three colors.
\item\label{algo:compg}
  Compute a guarding set $G_P$ for $P$ from the guarding set
  $G_{\polyapprox{P}}$.
\end{enumerate}


\subsection{Polygonalization of a \pconvex polygon}
\label{sec:polyapprox}

Let $a_i$ be a convex arc with endpoints $v_i$ and
$v_{i+1}$. We call the convex region $r_i$ delimited by $a_i$
and the line segment $v_iv_{i+1}$ a \emph{room}. A room is
called degenerate if the arc $a_i$ is a line segment. A line segment
$pq$, where $p,q\in a_i$ is called a \emph{chord}, and the region
delimited by the chord $pq$ and $a_i$ is called a \emph{sector}. The
chord of a room $r_i$ is defined to be the line segment $v_iv_{i+1}$
connecting the endpoints of the corresponding arc $a_i$.
A degenerate sector is a sector with empty interior.
We distinguish between two types of rooms (see Fig. \ref{fig:rooms}):
\begin{enumerate}
\item \emph{empty rooms}: these are non-degenerate rooms that do not
  contain any vertex of $P$ in the interior of $r_i$ or in the
  interior of the chord $v_iv_{i+1}$.
\item \emph{non-empty rooms}: these are non-degenerate rooms that
  contain at least one vertex of $P$ in the interior of $r_i$ or in
  the interior of the chord $v_iv_{i+1}$.
\end{enumerate}

\begin{figure}[t]
\begin{center}
\psfrag{r1}[][]{$r_{ne}'$}
\psfrag{r2}[][]{$r_{e}'$}
\psfrag{r3}[][]{$r_{ne}''$}
\psfrag{r4}[][]{$r_{e}''$}
\includegraphics[width=0.5\textwidth]{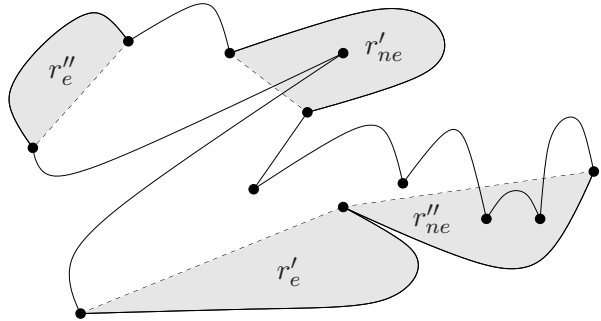}
\caption{The two types of rooms in a \pconvex polygon:
  $r_{e}'$ and $r_{e}''$ are empty rooms, whereas $r_{ne}'$ and
  $r_{ne}''$ are non-empty rooms.}
\label{fig:rooms}
\end{center}
\end{figure}

In order to polygonalize $P$ we are going to add new vertices in the
interior of non-linear convex arcs. To distinguish between the two
types of vertices, the $n$ vertices of $P$ will be called
\emph{original vertices}, whereas the additional vertices will be
called \emph{auxiliary vertices}.

\begin{figure}[hbt]
\begin{center}
\psfrag{v1}[][]{$v_1$}
\psfrag{v2}[][]{$v_2$}
\psfrag{v3}[][]{$v_3$}
\psfrag{v4}[][]{$v_4$}
\psfrag{v5}[][]{$v_5$}
\psfrag{v6}[][]{$v_6$}
\psfrag{v7}[][]{$v_7$}
\psfrag{m5}[][]{$m_5$}
\psfrag{a3}[][]{$a_3$}
\psfrag{a5}[][]{$a_5$}
\psfrag{r3}[][]{$r_3$}
\psfrag{r5}[][]{$r_5$}
\psfrag{w3,1}[][]{$w_{3,1}$}
\psfrag{w5,1}[][]{$w_{5,1}$}
\psfrag{w5,2}[][]{$w_{5,2}$}
\includegraphics[width=0.45\textwidth]{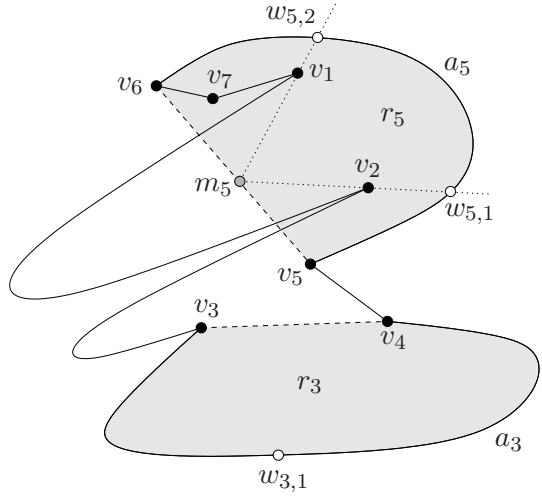}
\caption{The auxiliary vertices (white points) for rooms $r_3$ (empty)
  and $r_5$ (non-empty). $w_{3,1}$ is a point in the interior of $a_3$.
  $m_5$ is the midpoint of $v_5$ and $v_6$, whereas $w_{5,1}$ and
  $w_{5,2}$ are the intersections of the lines $m_5v_2$ and $m_5v_1$
  with the arc $a_5$, respectively. In this example
  $R_5=\{v_1,v_2,v_7\}$, whereas $C_5^*=\{v_1,v_2\}$.}
\label{fig:auxiliaryvertices}
\end{center}
\end{figure}

More specifically, for each empty room $r_i$ we add a vertex $w_{i,1}$
(anywhere) in the interior of the arc $a_i$ (see
Fig. \ref{fig:auxiliaryvertices}).
For each non-empty room $r_i$, let $X_i$ be the set of vertices of $P$
that lie in the interior of the chord $v_iv_{i+1}$ of $r_i$, and $R_i$
be the set of vertices of $P$ that are contained in the interior of
$r_i$ or belong to $X_i$ (by assumption $R_i\ne\emptyset$). If
$R_i\ne{}X_i$, let $C_i$ be the set of vertices on the convex hull of
the vertex set $(R_i\setminus{}X_i)\cup\{v_i,v_{i+1}\}$; if $R_i=X_i$,
let $C_i=X_i\cup\{v_i,v_{i+1}\}$. Finally, let
$C_i^*=C_i\setminus\{v_i,v_{i+1}\}$. Clearly, $v_i$ and $v_{i+1}$
belong to the set $C_i$ and, furthermore, $C_i^*\ne\emptyset$.

Let $m_i$ be the midpoint of $v_iv_{i+1}$ and $\ell_i^\perp(p)$ the
line perpendicular to $v_iv_{i+1}$ passing through a point $p$. If
$C_i^*\ne{}X_i$, then, for each $v_k\in{}C_i^*$, let $w_{i,j_k}$,
$1\le{}j_k\le{}|C_i^*|$, be the (unique) intersection of the line
$m_iv_k$ with the arc $a_i$; if $C_i^*=X_i$, then, for each
$v_k\in{}C_i^*$, let $w_{i,j_k}$, $1\le{}j_k\le{}|C_i^*|$, be the
(unique) intersection of the line $\ell_i^\perp(v_k)$ with the arc
$a_i$.

Now consider the sequence $\polyapprox{S}$ of the original vertices of
$P$ augmented by the auxiliary vertices added to empty and non-empty
rooms; the order of the vertices in $\polyapprox{S}$ is the order in
which we encounter them as we traverse the boundary of $P$ in the
counterclockwise order. The linear polygon defined by the sequence
$\polyapprox{S}$ of vertices is denoted by $\polyapprox{P}$ (see
Fig. \ref{fig:polygonalapprox}). It is easy to show that:

\begin{figure}[t]
\begin{center}
\psfrag{v1}[][]{$v_1$}
\psfrag{v2}[][]{$v_2$}
\psfrag{v3}[][]{$v_3$}
\psfrag{v4}[][]{$v_4$}
\psfrag{v5}[][]{$v_5$}
\psfrag{v6}[][]{$v_6$}
\psfrag{v7}[][]{$v_7$}
\psfrag{w1,1}[][]{$w_{1,1}$}
\psfrag{w2,1}[][]{$w_{2,1}$}
\psfrag{w3,1}[][]{$w_{3,1}$}
\psfrag{w5,1}[][]{$w_{5,1}$}
\psfrag{w5,2}[][]{$w_{5,2}$}
\subfigure[\label{fig:polygonalapprox}]
{\includegraphics[width=0.47\textwidth]{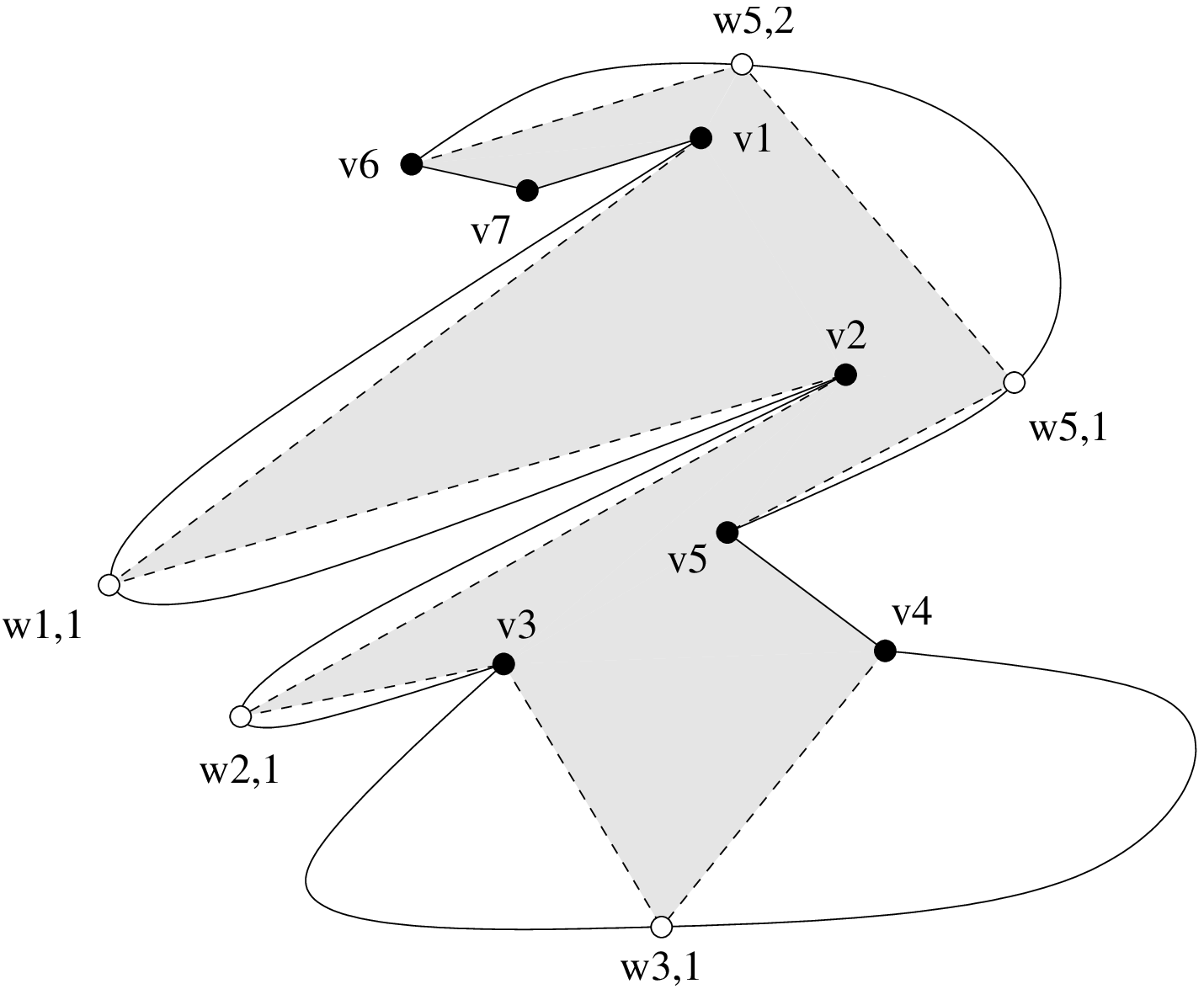}}\hfill%
\subfigure[\label{fig:triangulatedapprox}]
{\includegraphics[width=0.47\textwidth]{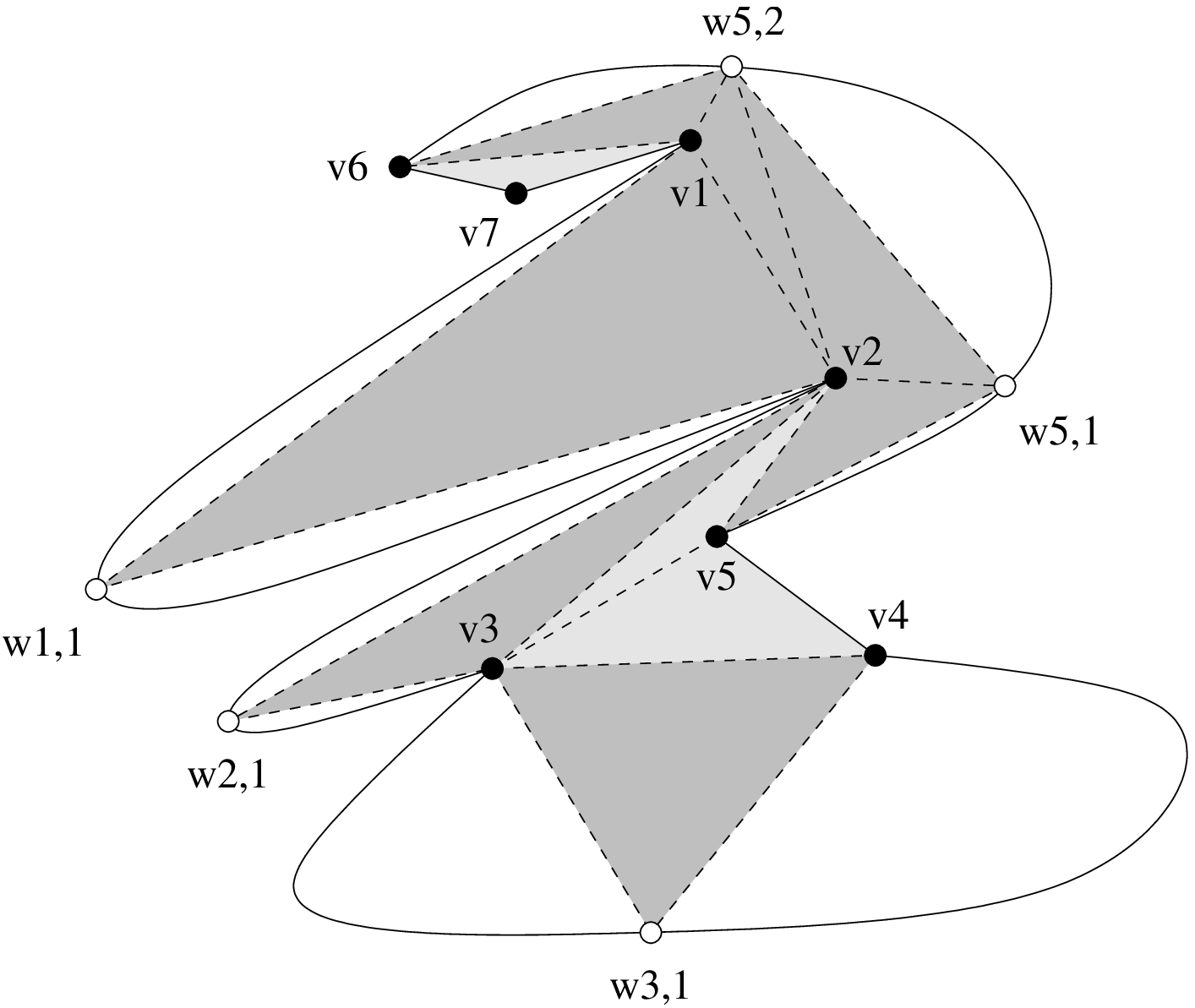}}
\caption{(a) The polygonal approximation $\polyapprox{P}$, shown in gray,
  of the \pconvex polygon $P$ with vertices $v_i$,
  $i=1,\ldots,7$. (b) The constrained triangulation
  $\tr{\polyapprox{P}}$ of $\polyapprox{P}$. The dark gray triangles
  are the constrained triangles.
  The polygonal region $v_5w_{5,1}w_{5,2}v_6v_1v_2v_5$ is a
  crescent. The triangles $w_{5,1}v_2v_5$ and  $v_1w_{5,2}v_6$ are
  boundary crescent triangles. The triangle $v_2w_{5,2}v_1$ is an
  upper crescent triangle, whereas the triangle $v_2w_{5,1}w_{5,2}$ is
  a lower crescent triangle.}
\label{fig:triangulatedpolygonalapprox}
\end{center}
\end{figure}

\begin{lemma}
The linear polygon $\polyapprox{P}$ is a simple polygon.
\end{lemma}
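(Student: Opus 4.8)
The plan is to show that the closed polygonal curve bounding $\polyapprox{P}$ never crosses itself. By construction the vertices of $\polyapprox{P}$ are the original vertices $v_i$ together with the auxiliary vertices $w_{i,\cdot}$, listed in the cyclic order in which the counterclockwise traversal of $\partial P$ encounters them; so $\polyapprox{P}$ is automatically a closed polygonal chain, and simplicity reduces to the statement that no two of its edges meet except when they are consecutive (sharing exactly their common vertex). I would group the edges according to the arc they replace: the edges obtained from $a_i$ form the subchain $\pi_i$ that runs through $v_i$, the points $w_{i,j_k}$ in their order along $a_i$, and $v_{i+1}$. It then suffices to rule out improper intersections (i) within a single $\pi_i$ and (ii) between $\pi_i$ and $\pi_j$ for $i\ne j$.

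For (i) the key remark is that every edge of $\polyapprox{P}$ is a chord of a convex arc, i.e. a segment joining two points of $a_i$; since the room $r_i$ bounded by $a_i$ and the chord $v_iv_{i+1}$ is convex, the whole chain $\pi_i$ lies in the closed room $\bar r_i$, and the relative interior of each of its edges lies in the interior of $P$, meeting $\partial P$ only at vertices of $\polyapprox{P}$. Because the $w_{i,j_k}$ are taken in their order along the convex arc $a_i$, the chain $\pi_i$ is a convex polygonal chain and hence cannot self-intersect. This settles all pairs of edges coming from the same arc.

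The heart of the argument is (ii). When the rooms $\bar r_i$ and $\bar r_j$ have disjoint interiors, the containment just established forces $\pi_i$ and $\pi_j$ apart, the only possible contact being the original vertex shared by two adjacent arcs, where the relevant edges of $\pi_i$ and $\pi_j$ are consecutive in $\polyapprox{P}$ and are therefore permitted to meet. The genuinely delicate case is a non-empty room $r_i$, which contains in its interior or on its chord other vertices of $P$, and with them entire arcs $a_j$ and their subchains $\pi_j$. Here I would invoke precisely the placement of the auxiliary vertices: each $v_k\in C_i^*$ lies on the segment $m_iw_{i,j_k}$ (respectively on $\ell_i^\perp(v_k)$), hence strictly on the chord side of the convex chain $\pi_i$; and since $C_i$ is the convex hull of the contained vertices together with $v_i,v_{i+1}$, every vertex of $P$ lying in $r_i$, and therefore every contained arc and its subchain, is trapped strictly between $\pi_i$ and the chord, i.e. strictly nested inside $\pi_i$. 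Consequently no contained $\pi_j$ can reach $\pi_i$, and the two chains do not cross.

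Combining the three cases — same arc (convexity of $a_i$), separated rooms (containment in disjoint rooms plus consecutiveness at shared vertices), and nested rooms (strict enclosure via the convex-hull/fan placement) — shows that no two edges of $\polyapprox{P}$ intersect improperly, so $\polyapprox{P}$ is a simple polygon. The step I expect to be the real obstacle is the non-empty room case: one must verify rigorously that the fan of lines $m_iv_k$ (or the perpendiculars $\ell_i^\perp(v_k)$) meets the convex arc $a_i$ in the same cyclic order as the points $v_k$, so that $\pi_i$ is well defined and convex, and that the convex-hull choice of $C_i$ really forces all contained structure strictly to the chord side. Once this strict-nesting property is in hand, the rest follows routinely from the convexity of the rooms and the simplicity of the original curvilinear polygon $P$.
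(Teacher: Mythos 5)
Your overall strategy (reduce to pairwise non-crossing of the subchains $\pi_i$, handle the same-arc case by convexity, and isolate the non-empty-room case as the crux) is reasonable and close in spirit to the paper's argument, which replaces each arc $a_i$ by its polyline and argues that the new segments meet no edge of $P$. You also correctly identify where the difficulty lies. However, your resolution of that difficulty has a genuine gap. The step ``every vertex of $P$ lying in $r_i$ \ldots\ \emph{and therefore every contained arc and its subchain} \ldots\ is trapped strictly between $\pi_i$ and the chord'' does not follow. What you actually establish is that the \emph{vertices} in $R_i$ lie in the convex polygon bounded by $\pi_i$ and the chord $v_iv_{i+1}$ (each $v_k\in C_i^*$ lies on the open segment $m_iw_{i,j_k}$, and the remaining vertices lie in the hull $C_i$). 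But an edge of a \pconvex polygon is a convex arc that bulges \emph{away} from the interior of $P$, so an arc $a_j$ both of whose endpoints lie strictly inside that convex polygon can still bulge outward, enter one of the sectors between a segment $w_{i,k}w_{i,k+1}$ and the corresponding subarc of $a_i$, and cross that segment twice --- without ever touching $a_i$ or containing a vertex outside the hull. Ruling this out is precisely the content of the paper's assertion that consecutive auxiliary points $w_{i,k}$, $w_{i,k+1}$ are mutually visible, and it requires using the \emph{radial} placement of the $w_{i,k}$ (as intersections of the rays $m_iv_k$ through consecutive hull vertices with $a_i$): one must show that for an arc to reach the pie slice between the rays $m_iw_{i,k}$ and $m_iw_{i,k+1}$ it would have to cross one of the segments $v_kw_{i,k}$, $v_{k+1}w_{i,k+1}$ or place a vertex outside the hull, each of which contradicts $v_k,v_{k+1}\in C_i^*$. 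Your convex-hull containment of vertices alone does not deliver this.

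A second, smaller gap is that your case analysis for two distinct rooms is not exhaustive: you treat only ``disjoint interiors'' and ``$r_j$ (entirely) contained in $r_i$,'' but two rooms of a \pconvex polygon can partially overlap --- an arc $a_j$ may cross the chord $v_iv_{i+1}$, or the chord $v_jv_{j+1}$ may cross $a_i$, with neither room containing the other and possibly with $r_i$ empty (these are exactly the configurations the paper wrestles with in the proof of Lemma~\ref{lem:cistardisjoint}). In the empty-room case the paper's argument is that no edge of $P$ can enter an empty room across its chord without forcing a vertex into the room or violating local convexity; some such argument is needed in your framework as well, since ``containment in disjoint rooms'' is not available there.
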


\begin{proof}
It suffices show that the line segments replacing the curvilinear segments
of $P$ do not intersect other edges of $P$ or $\polyapprox{P}$.

Let $r_i$ be an empty room, and let $w_{i,1}$ be the point added in the
interior of $a_i$. The interior of the line segments $v_iw_{i,1}$ and
$w_{i,1}v_{i+1}$ lie in the interior of $r_i$. Since $P$ is a
\pconvex polygon, and $r_i$ is an empty room, no edge of $P$
could potentially intersect $v_iw_{i,1}$ or $w_{i,1}v_{i+1}$. Hence
replacing $a_i$ by the polyline $v_iw_{i,1}v_{i+1}$ gives us a new
\pconvex polygon.

Let $r_i$ be a non-empty room. Let $w_{i,1},\ldots,w_{i,K_i}$ be the
points added on $a_i$, where $K_i$ is the cardinality of $C_i^*$. By
construction, every point $w_{i,k}$ is visible from $w_{i,k+1}$,
$k=1,\ldots{}K_i-1$, and every point $w_{i,k}$ is visible from
$w_{i,k-1}$, $k=2,\ldots{}K_i$. Moreover, $w_{i,1}$ is visible from
$v_i$ and $w_{i,K_i}$ is visible from $v_{i+1}$. Therefore, the interior
of the segments in the polyline $v_iw_{i,1}\ldots{}w_{i,K_i}v_{i+1}$ lie
in the interior of $r_i$ and do not intersect any arc in $P$. Hence,
substituting $a_i$ by the polyline $v_iw_{i,1}\ldots{}w_{i,K_i}v_{i+1}$
gives us a new \pconvex polygon.

As a result, the linear polygon $\polyapprox{P}$ is a simple
polygon.\proofbox
\end{proof}

We call the linear polygon $\polyapprox{P}$, defined by
$\polyapprox{S}$, the \emph{straight-line polygonal approximation} of
$P$, or simply the \emph{polygonal approximation} of $P$. An obvious
result for $\polyapprox{P}$ is the following:

\begin{corollary}\label{cor:inclusion}
If $P$ is a \pconvex polygon the polygonal approximation
$\polyapprox{P}$ of $P$ is a linear polygon that is contained inside
$P$.
\end{corollary}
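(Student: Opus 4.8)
The plan is to realize $\polyapprox{P}$ as the region obtained from $P$ by shaving off, along each non-degenerate arc, the thin sliver lying between the arc and the inscribed polyline that replaces it; since every such sliver will turn out to lie inside $P$, removing all of them leaves a region still contained in $P$. First I would record that $\polyapprox{P}$ is genuinely a \emph{linear} polygon: by construction every edge of $\polyapprox{P}$ is either the chord $v_iv_{i+1}$ of a degenerate room or one of the segments of the polyline $b_i = v_iw_{i,1}\ldots w_{i,K_i}v_{i+1}$ replacing a non-degenerate arc $a_i$, and by the preceding lemma the closed curve $\bigcup_i b_i$ is simple. It therefore bounds a well-defined planar region, which is what Corollary~\ref{cor:inclusion} calls $\polyapprox{P}$.

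Next, for each non-degenerate room $r_i$ I would introduce the closed region $\sigma_i$ bounded by the arc $a_i$ together with the polyline $b_i$. Because every auxiliary vertex $w_{i,k}$ lies on $a_i$ and, by the lemma, the interiors of the segments of $b_i$ lie in the interior of the room $r_i$, the sliver $\sigma_i$ is contained in the closed room $r_i$. The crux is then to show $\sigma_i\subseteq P$. Since $P$ is \pconvex, at every interior point of $a_i$ the interior of $P$ lies locally on the same side of the tangent line as the arc; equivalently, the room sits on the interior side of $a_i$, so $\sigma_i$ abuts $\intrr{P}$ from within along $a_i$. It then remains to rule out that a foreign portion of $\partial P$ re-enters $\sigma_i$, which can happen only for non-empty rooms. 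Here I would use the placement of the auxiliary vertices: every $v_k\in C_i^*$ lies on the segment $m_iw_{i,k}$ and hence on the chord side of $b_i$, while the remaining vertices of $R_i$ lie in the convex hull $C_i$, again on the chord side of $b_i$. Thus no vertex of $P$ contained in $r_i$ lies in $\sigma_i$, so no arc of $P$ can be trapped inside $\sigma_i$: such an arc cannot cross $a_i$ (arcs are pairwise non-crossing), cannot cross $b_i$ (by the lemma), and cannot terminate inside $\sigma_i$. Consequently $\sigma_i\cap\partial P=a_i$, and therefore $\sigma_i\subseteq P$.

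Finally, since $\partial\polyapprox{P}=\bigcup_i b_i$ is obtained from $\partial P=\bigcup_i a_i$ by replacing each arc with $b_i$ without ever crossing $\partial P$, the set $P$ decomposes as $\polyapprox{P}$ together with the slivers, so that $\polyapprox{P}=P\setminus\bigcup_i\bigl(\sigma_i\setminus b_i\bigr)\subseteq P$, which is exactly the claim. I expect the main obstacle to be the third step, namely certifying that every sliver stays inside $P$ in the case of non-empty rooms, where other parts of $\partial P$ wander through the room; the whole argument hinges on the fact — deliberately built into the auxiliary-vertex construction via the convex-hull set $C_i^*$ and the rays from $m_i$ — that all vertices of $P$ inside a room lie on the chord side of the inscribed polyline $b_i$, and hence outside the sliver.
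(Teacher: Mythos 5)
Your proof is correct and follows essentially the same route as the paper, which states the corollary as an immediate consequence of the preceding simplicity lemma: that lemma's proof already establishes the two facts your sliver argument rests on, namely that the interiors of the segments of each inscribed polyline lie in the interior of the corresponding room and meet no arc of $P$, so that each arc replacement keeps the boundary inside $P$. Your additional observation that the construction of the auxiliary vertices via $m_i$ and $C_i^*$ places all vertices of $R_i$ on the chord side of the polyline is a correct and worthwhile elaboration of what the paper leaves implicit for non-empty rooms.
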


We end this section by proving a tight upper bound on the size of the
polygonal approximation of a \pconvex polygon. We start by
stating and proving an intermediate result, namely that the sets
$C_i^*$ are pairwise disjoint.

\begin{lemma}\label{lem:cistardisjoint}
Let $i$, $j$, with $1\le{}i<j\le{}n$. Then $C_i^*\cap{}C_j^*=\emptyset$.
\end{lemma}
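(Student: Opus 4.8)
The plan is to argue by contradiction. Suppose some vertex $v_k$ belongs to both $C_i^*$ and $C_j^*$ for some $i<j$. First I would unpack what membership in a single $C_i^*$ entails. By construction the vertices of $C_i$ are drawn from $(R_i\setminus X_i)\cup\{v_i,v_{i+1}\}$ when $R_i\neq X_i$, and from $X_i\cup\{v_i,v_{i+1}\}$ when $R_i=X_i$; after removing the two endpoints, $C_i^*$ therefore consists either of vertices lying strictly in the interior of the room $r_i$, or of vertices lying in the interior of the chord $v_iv_{i+1}$. In either case $v_k$ lies in the closed region bounded by the arc $a_i$ and the chord $v_iv_{i+1}$, and $v_k\notin\{v_i,v_{i+1}\}$; the same holds with $j$ in place of $i$. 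Thus the assumption places the single vertex $v_k$ inside both rooms $r_i$ and $r_j$, with $v_k$ distinct from all four arc endpoints $v_i,v_{i+1},v_j,v_{j+1}$.

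The three ingredients I would combine are: (a) the arcs of $P$ are pairwise non-crossing, meeting only at shared endpoints, since $P$ is a simple polygon; (b) the \pconvex property fixes, for each room, the side of the chord $v_iv_{i+1}$ on which the interior of $P$ locally lies, so that each room $r_i$ is attached to the interior of $P$ along a definite orientation; and (c) being a vertex of the convex hull $C_i$ distinct from $v_i,v_{i+1}$ supplies a supporting line $L_i$ through $v_k$ with all of $(R_i\setminus X_i)\cup\{v_i,v_{i+1}\}$ in the closed chord-side halfplane. Equivalently, the open region of $r_i$ lying between $L_i$ and the arc $a_i$ contains no vertex of $P$ belonging to $R_i$, so that $v_k$ is \emph{exposed toward} $a_i$; the segment joining $v_k$ to its projected auxiliary point $w_{i,j_k}$ on $a_i$ realises this exposure, running from $v_k$ to the arc through the interior of $r_i$ without meeting another vertex. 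I would then show that, because $v_k$ is exposed toward $a_i$ inside $r_i$ and simultaneously exposed toward $a_j$ inside $r_j$, the boundary of $P$ is forced either to make $a_i$ and $a_j$ cross, or to cross the chord of one room so as to re-enter the same room, contradicting ingredient (a).

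The main obstacle is the geometric case analysis underlying this final contradiction, and I expect it to split according to the relative position of the two rooms. When the chords $v_iv_{i+1}$ and $v_jv_{j+1}$ cross, or the rooms overlap with crossing boundaries, ingredient (a) applied to an arc against the opposing chord yields the self-intersection directly. The delicate case is when one room is nested inside the other, say $r_j\subseteq r_i$: here the arc $a_j$ and its chord lie inside $r_i$, and the exposure directions become the crux. Exposure toward $a_i$ means $v_k$ faces away from the chord $v_iv_{i+1}$ toward the outer arc, whereas $v_k\in C_j^*$ requires $v_k$ to face away from $v_jv_{j+1}$ toward $a_j$; since $a_j$ lies strictly between $v_k$ and $a_i$ (or between $v_k$ and the chord $v_iv_{i+1}$), one of the two exposure conditions must fail, because the intervening arc $a_j$ places a boundary point of $P$ in a region that exposure toward $a_i$ requires to be empty. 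Establishing this contradiction in every case gives $C_i^*\cap C_j^*=\emptyset$. I would also verify the boundary sub-cases in which $v_k$ lies on an open chord rather than strictly inside a room, where the same supporting-line argument applies with the chord itself serving as the exposed direction.
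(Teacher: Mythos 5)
Your skeleton matches the paper's: assume a common vertex $v_k\in C_i^*\cap C_j^*$, use the supporting line coming from the convex-hull condition together with the segment from $v_k$ to its projection $w_{i,j_k}$ on $a_i$, and split into cases according to how the two rooms sit relative to one another. But the contradiction mechanisms you propose for the decisive cases do not work as stated. First, your dichotomy ends with ``\dots or to cross the chord of one room so as to re-enter the same room, contradicting ingredient (a).'' That is not a contradiction: chords are not edges of $P$, and the boundary of $P$ crossing the chord $v_iv_{i+1}$ is exactly what makes a room non-empty in the first place, so ingredient (a) (arcs pairwise non-crossing) is untouched. Second, in the nested case you argue that ``the intervening arc $a_j$ places a boundary point of $P$ in a region that exposure toward $a_i$ requires to be empty.'' Being a hull vertex of $(R_i\setminus X_i)\cup\{v_i,v_{i+1}\}$ only forces that region to be free of \emph{vertices} of $R_i$; it says nothing about pieces of edges passing through it. The arc $a_j$ can bulge across your supporting line $L_i$ while both of its endpoints $v_j,v_{j+1}$ stay on the hull side, so no vertex lands in the forbidden region and your exposure condition is not violated.

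The paper closes exactly these gaps with arguments you would still need to supply. When the chord $v_jv_{j+1}$ crosses the open segment $uu_i$, the contradiction is that $u$ ends up strictly inside a convex polygon (a triangle $v_iv_{i+1}v_j$ or quadrilateral $v_iv_{i+1}v_jv_{j+1}$) spanned by points of $R_i\cup\{v_i,v_{i+1}\}$, hence $u$ cannot be a hull vertex; this can be salvaged from your ingredients, e.g.\ by noting that the ray $m_iv_k$ beyond $v_k$ leaves the hull and cannot meet the chord $v_jv_{j+1}$, whose endpoints lie in the hull. But the remaining cases --- the chord $v_jv_{j+1}$ passing exactly through $u$, the case where exactly one of $v_j,v_{j+1}$ lies in $R_i$, and the degenerate sub-cases with $R_i=X_i$ --- are handled in the paper by a genuinely different device: one tracks a path along $\partial P$ from $u$ to a specified vertex and shows it must cross one of the witness segments $v_jv_j'$, $v_{i+1}v_{i+1}'$, etc., contradicting that the corresponding vertex belongs to $C_i^*$ or $C_j^*$. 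Your proposal contains no substitute for these boundary-path arguments, and they carry most of the weight of the lemma.
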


\renewcommand{\figwidth}{0.3\textwidth}
\newcommand{\figtextsize}{\scriptsize}

\begin{proof}
If one of the rooms $r_i$ and $r_j$ is a degenerate or an empty room,
the result is obvious.

Consider two non-empty rooms $r_i$ and $r_j$. For simplicity of
presentation we assume that $R_i\ne{}X_i$ and $R_j\ne{}X_j$; the
proof easily carries on to the case $R_i=X_i$ or $R_j=X_j$.

Suppose that there exists a vertex $u\in{}P$ that is contained in
$C_i^*\cap{}C_j^*$. Let $v_i$, $v_{i+1}$, and $v_j$, $v_{j+1}$ be the
endpoints of the arcs $a_i$ and $a_j$, and $m_i$, $m_j$ the midpoints
of the chords $v_iv_{i+1}$, $v_jv_{j+1}$, respectively. Let $u_i$ be
the intersection of the line $m_iu$ with the convex arc $a_i$ and
$u_j$ be the intersection of the line $m_ju$ with the convex arc $a_j$,
respectively. Consider the following cases.
\begin{mathdescription}
\item[$v_j,v_{j+1}\not\in{}R_i,v_i,v_{i+1}\not\in{}R_j$.]
  This is the easy case (see Fig. \ref{fig:proof-lem3:case1}). Since
  $u\in{}C_i^*\cap{}C_j^*$ we have that
  $r_i\cap{}r_j\ne\emptyset$. Moreover, it is either the case that
  $a_j$ intersects the chord $v_iv_{i+1}$ or $a_i$ intersects the
  chord $v_jv_{j+1}$. Without loss of generality we can assume that
  $a_j$ intersects the chord $v_iv_{i+1}$. In this case the boundary
  of $r_i\cap{}r_j$ that lies in the interior of $r_i$ is a subarc of
  $a_j$. But then the segment $uu_i$ has to intersect $a_j$, which
  contradicts the fact that $u\in{}C_i^*$.
  \begin{figure}[b]
    \begin{center}
      \psfrag{ai}[][]{\figtextsize$a_i$}
      \psfrag{aj}[][]{\figtextsize$a_j$}
      \psfrag{vi}[][]{\figtextsize$v_i$}
      \psfrag{vi+1}[][]{\figtextsize$v_{i+1}$}
      \psfrag{vj}[][]{\figtextsize$v_j$}
      \psfrag{vj+1}[][]{\figtextsize$v_{j+1}$}
      \psfrag{u}[][]{\figtextsize$u$}
      \psfrag{ui}[][]{\figtextsize$u_i$}
      \psfrag{mi}[][]{\figtextsize$m_i$}
      \includegraphics[width=\figwidth]{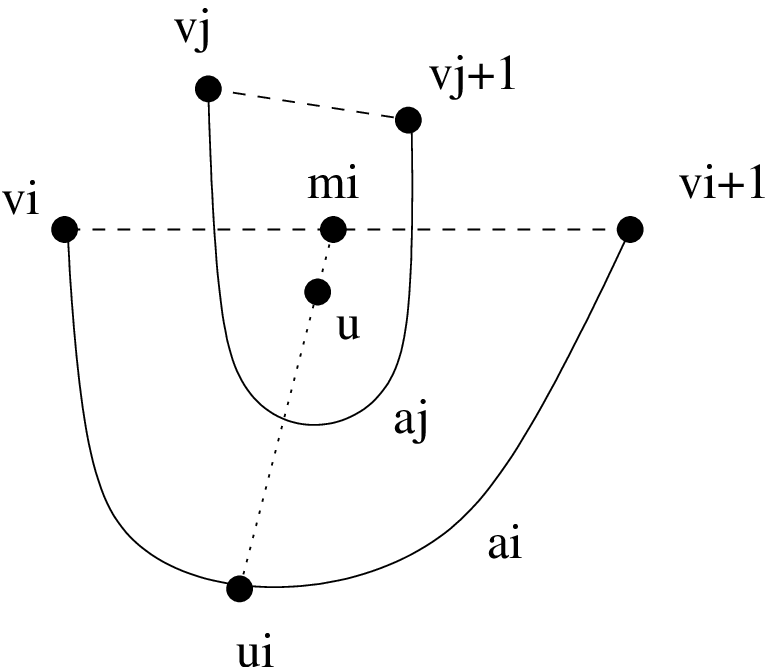}
    \end{center}
    \caption{Proof of Lemma \ref{lem:cistardisjoint}. The case
      $v_j,v_{j+1}\not\in{}R_i,v_i,v_{i+1}\not\in{}R_j$.}
    \label{fig:proof-lem3:case1}
  \end{figure}
\item[$v_j,v_{j+1}\in{}R_i$.]
  Since $u$ belongs to $C_i^*$, the line segment $uu_i$ cannot contain
  any vertices of $P$ and it cannot intersect any edge of $P$ (since
  otherwise $u$ would not belong to $C_i^*$). For this reason, and
  since $u$ belongs to $C_j^*$, $uu_i$ has to intersect the chord of
  $r_j$. We distinguish between the following two cases (see
  Fig. \ref{fig:proof-lem3:case2}):
  \begin{enumerate}
    \item \emph{The chord $v_jv_{j+1}$ intersects the interior of $uu_i$}.
      Depending on whether the supporting line of
      $v_jv_{j+1}$ intersects the chord $v_iv_{i+1}$ of $r_i$ or not,
      $u$ will be either contained in the interior of one of the
      triangles $v_iv_{i+1}v_j$ and $v_iv_{i+1}v_{j+1}$ (this happens
      if the supporting line of $v_jv_{j+1}$ intersects $v_iv_{i+1}$
      --- see Fig. \ref{fig:proof-lem3:case2:a}),
      or inside the convex quadrilateral $v_iv_{i+1}v_jv_{j+1}$ (this
      happens if the supporting line of $v_jv_{j+1}$ does not
      intersect $v_iv_{i+1}$ --- see Fig.
      \ref{fig:proof-lem3:case2:b}). In either case, $u$ is
      in the interior of a convex polygon, the vertices of which are
      in $R_i\cup\{v_i,v_{i+1}\}$, and, thus, it cannot belong to
      $C_i^*$, hence a contradiction.
      \begin{figure}[t]
        \begin{center}
          \psfrag{ai}[][]{\figtextsize$a_i$}
          \psfrag{aj}[][]{\figtextsize$a_j$}
          \psfrag{vi}[][]{\figtextsize$v_i$}
          \psfrag{vi+1}[][]{\figtextsize$v_{i+1}$}
          \psfrag{vj}[][]{\figtextsize$v_j$}
          \psfrag{vjp}[][]{\figtextsize$v_j'$}
          \psfrag{vj+1}[][]{\figtextsize$v_{j+1}$}
          \psfrag{vj+1p}[][]{\figtextsize$v_{j+1}'$}
          \psfrag{u}[][]{\figtextsize$u$}
          \psfrag{x}[][]{\figtextsize$x$}
          \psfrag{ui}[][]{\figtextsize$u_i$}
          \psfrag{mi}[][]{\figtextsize$m_i$}
          \subfigure[\label{fig:proof-lem3:case2:a}]%
          {\includegraphics[width=\figwidth]{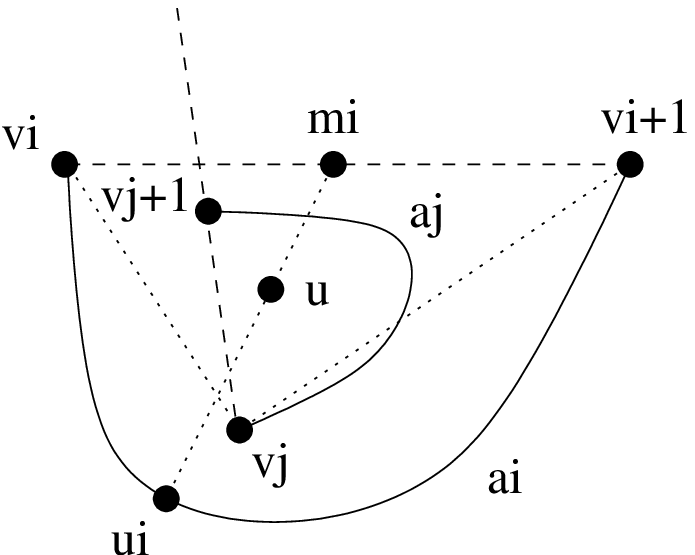}}%
          \hfill%
          \subfigure[\label{fig:proof-lem3:case2:b}]%
          {\includegraphics[width=\figwidth]{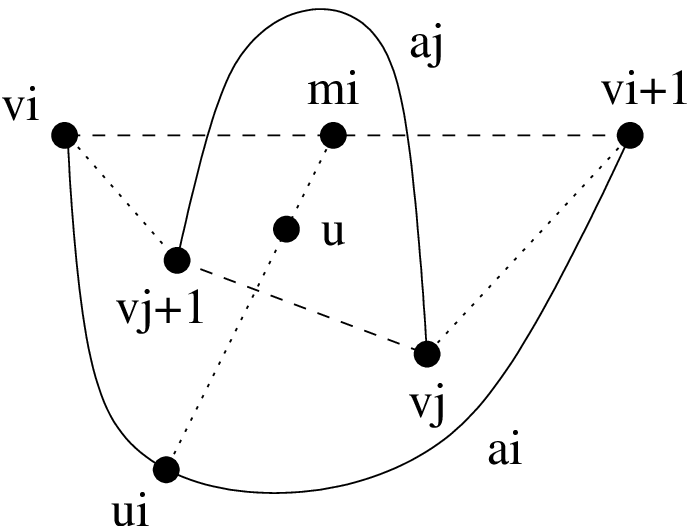}}%
          \hfill%
          \subfigure[\label{fig:proof-lem3:case2:c}]%
          {\includegraphics[width=\figwidth]{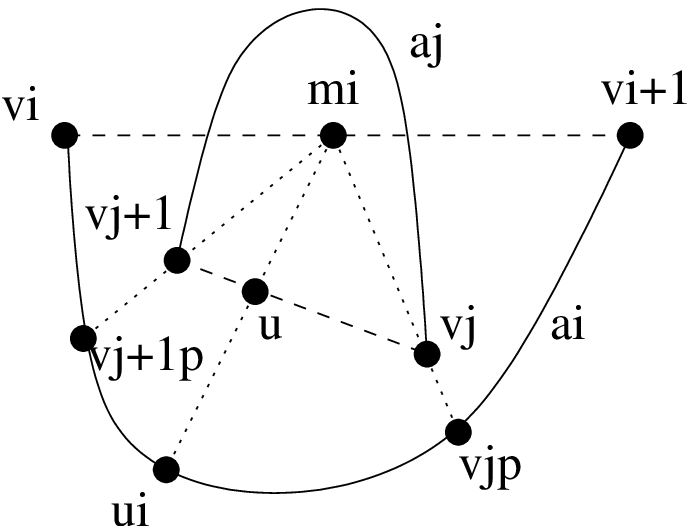}}
        \end{center}
        \caption{Proof of Lemma \ref{lem:cistardisjoint}. The case
          $v_j,v_{j+1}\in{}R_i$. (a) the chord $v_jv_{j+1}$ intersects
          the interior of $uu_i$ and $u$ is contained inside the
          triangle $v_iv_{i+1}v_j$. (b) the chord $v_jv_{j+1}$
          intersects the interior of $uu_i$ and $u$ is contained
          inside the convex quadrilateral $v_iv_{i+1}v_jv_{j+1}$.
          (c) the chord $v_jv_{j+1}$ intersects $uu_i$ at $u$.}
        \label{fig:proof-lem3:case2}
      \end{figure}
    \item \emph{The chord $v_jv_{j+1}$ intersects $uu_i$ at $u$}.
      We can assume without loss of generality that $v_{i+1}$, $v_j$
      are to the right and $v_i$, $v_{j+1}$ to the left of the
      oriented line $u_iu$ (see Fig. \ref{fig:proof-lem3:case2:c}). 
      Notice that both $v_j$ and $v_{j+1}$ have 
      to belong to $C_i^*$, since otherwise $u$ would not belong to
      $C_i^*$. Let $v_j'$ and $v_{j+1}'$ be the intersections of the
      lines $m_iv_j$ and $m_iv_{j+1}$ with $a_i$. Consider the path
      $\pi$ from $u$ to $v_i$ on the boundary $\partial{}P$ of
      $P$, that does not contain the edge $a_j$. $\pi$ has to
      intersect either the interior of the line segment $v_jv_j'$ or
      the interior of the line segment $v_{j+1}v_{j+1}'$; either case
      yields a contradiction with the fact that both $v_j$ and
      $v_{j+1}$ belong to $C_i^*$.
  \end{enumerate}
\item[$v_i,v_{i+1}\in{}R_j$.]
  This case is symmetric to the previous one.
\item[$|\{v_j,v_{j+1}\}\cap{}R_i|=1$.]
  Without loss of generality we may assume that $v_j\in{}R_i$ and
  $v_{j+1}\not\in{}R_i$. Consider the following two cases (see
  Fig. \ref{fig:proof-lem3:case4}):
  \begin{enumerate}
  \item \emph{The chord $v_jv_{j+1}$ intersects the chord $v_iv_{i+1}$}.
    If $v_jv_{j+1}$ intersects the interior of $v_iv_{i+1}$ (see
    Fig. \ref{fig:proof-lem3:case4:a}), then $u$
    has to lie in the interior of the triangle $v_iv_{i+1}v_j$, which
    contradicts the fact that $u\in{}C_i^*$.

    Suppose now that $v_jv_{j+1}$ intersects one of the endpoints of
    $v_iv_{i+1}$, and let us assume that this endpoint is $v_i$ (see
    Fig. \ref{fig:proof-lem3:case4:b}). $u$
    has to lie in the interior of $v_iv_j$, since otherwise it would
    have been in the interior of the triangle $v_iv_{i+1}v_j$, which
    contradicts the fact that $u\in{}C_i^*$. Moreover, $v_i$ (\resp
    $v_j$) has to belong to $R_j$ (\resp $R_i$), since otherwise
    $u\not\in{}C_j^*$ (\resp $u\not\in{}C_i^*$). Let $v_j'$ be the
    intersection of $m_iv_j$ with $a_i$ and $v_i'$ be the intersection
    with $a_j$ of the line perpendicular to $v_jv_{j+1}$ at $v_i$.
    Consider the paths $\pi_1$ and $\pi_2$ on $\partial{}P$ from $u$
    to $v_{i+1}$ and $v_{j+1}$, respectively. One of these two
    paths has to intersect either the interior of the line segment
    $v_iv_i'$ or the interior of line segment $v_jv_j'$; either case
    yields a contradiction with the fact that $v_i$ belongs to $C_j^*$
    and $v_j$ belongs to $C_i^*$.
    \begin{figure}[t]
      \begin{center}
        \psfrag{ai}[][]{\figtextsize$a_i$}
        \psfrag{aj}[][]{\figtextsize$a_j$}
        \psfrag{vi}[][]{\figtextsize$v_i$}
        \psfrag{vi+1}[][]{\figtextsize$v_{i+1}$}
        \psfrag{vj}[][]{\figtextsize$v_j$}
        \psfrag{vj+1}[][]{\figtextsize$v_{j+1}$}
        \psfrag{vip}[][]{\figtextsize$v_i'$}
        \psfrag{vi+1p}[][]{\figtextsize$v_{i+1}'$}
        \psfrag{vjp}[][]{\figtextsize$v_j'$}
        \psfrag{vj+1p}[][]{\figtextsize$v_{j+1}'$}
        \psfrag{u}[][]{\figtextsize$u$}
        \psfrag{ui}[][]{\figtextsize$u_i$}
        \psfrag{mi}[][]{\figtextsize$m_i$}
        \psfrag{mj}[][]{\figtextsize$m_j$}
        \subfigure[\label{fig:proof-lem3:case4:a}]%
        {\includegraphics[width=0.27\textwidth]{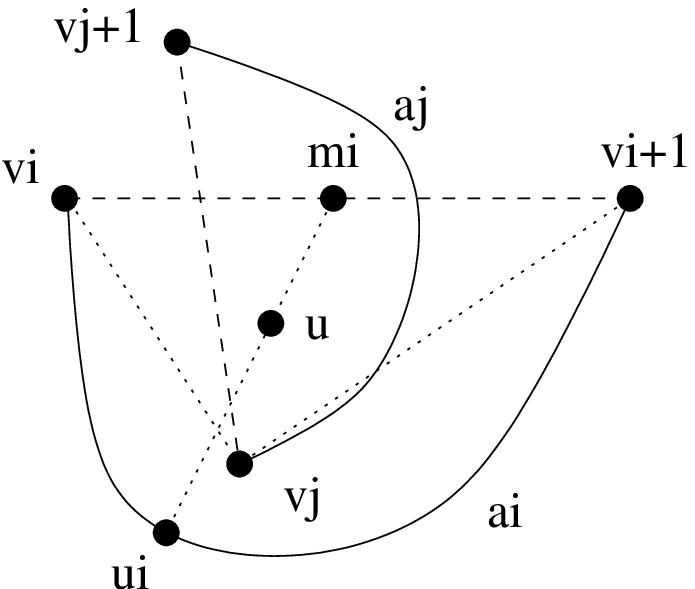}}%
        \hfill%
        \subfigure[\label{fig:proof-lem3:case4:b}]%
        {\includegraphics[width=0.27\textwidth]{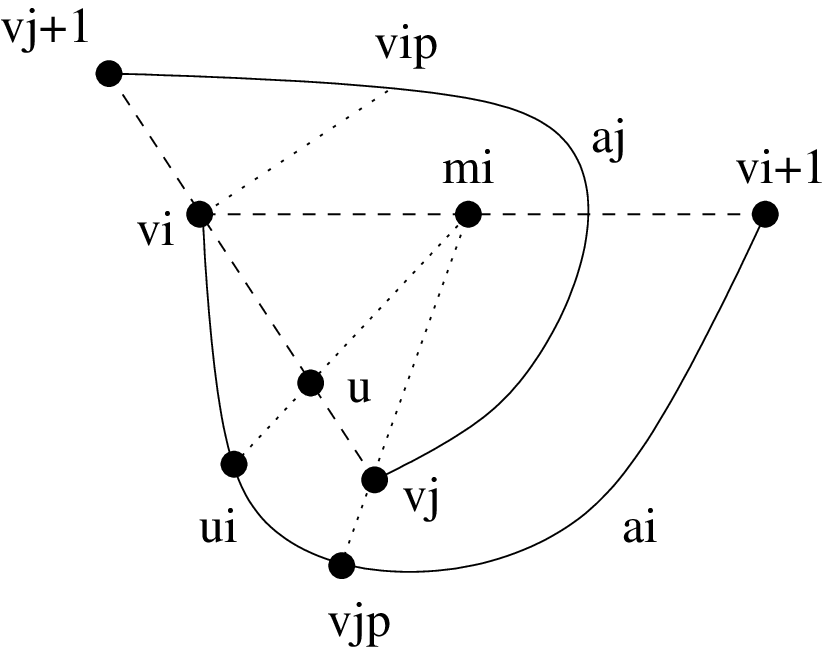}}%
        \hfill%
        \subfigure[\label{fig:proof-lem3:case4:c}]%
        {\includegraphics[width=0.35\textwidth]{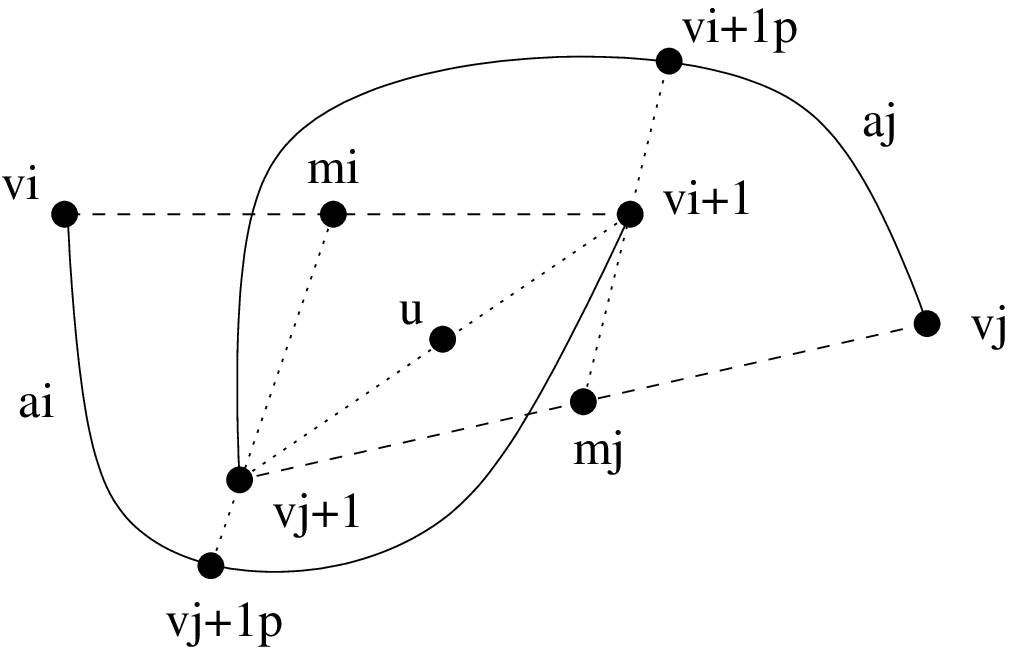}}
      \end{center}
      \caption{Proof of Lemma \ref{lem:cistardisjoint}. The case
        $|\{v_j,v_{j+1}\}\cap{}R_i|=1$. (a) the chord $v_jv_{j+1}$
        intersects the chord $v_iv_{i+1}$ and $v_jv_{j+1}$ intersects
        the interior of $v_iv_{i+1}$. (b) the chord $v_jv_{j+1}$
        intersects the chord $v_iv_{i+1}$ and $v_jv_{j+1}$ intersects
        $v_iv_{i+1}$ at $v_i$. (c) the chord $v_jv_{j+1}$
        intersects $a_i$.}
      \label{fig:proof-lem3:case4}
    \end{figure}
  \item \emph{The chord $v_jv_{j+1}$ intersects the edge $a_i$}.
    In this case we also have that either $v_i\in{}R_j$ or
    $v_{i+1}\in{}R_j$, but not both. Without loss of generality we
    may assume that $v_{i+1}\in{}R_j$ (see Fig. 
    \ref{fig:proof-lem3:case4:c}). Since $u$ belongs to both
    $C_i^*$ and $C_j^*$, it has to lie on the line segment
    $v_{i+1}v_{j+1}$. Moreover, $v_{j+1}$ (\resp $v_{i+1}$) has to belong to
    $C_i^*$ (\resp $C_j^*$), since otherwise $u$ would not belong to
    $C_i^*$ (\resp $C_j^*$). Let $v_{i+1}'$ and $v_{j+1}'$ be the
    intersections of the lines $m_jv_{i+1}$ and $m_iv_{j+1}$ with the arcs
    $a_j$ and $a_i$, respectively. Consider the paths $\pi_1$ and
    $\pi_2$ on $\partial{}P$ from $u$ to $v_i$ and $v_j$,
    respectively. One of these two paths has to intersect either the
    interior of the line segment $v_{i+1}v_{i+1}'$ or the interior of
    the line segment $v_{j+1}v_{j+1}'$. In the former case, we get a
    contradiction with the fact that $v_{i+1}$ belongs to $C_j^*$; in
    the latter case we get a contradiction with the fact that $v_{j+1}$
    belongs to $C_i^*$.
  \end{enumerate}
\item[$|\{v_i,v_{i+1}\}\cap{}R_j|=1$.]
  This case is symmetric to the previous one.\proofbox
\end{mathdescription}
\end{proof}

An immediate consequence of Lemma \ref{lem:cistardisjoint} is the
following corollary that gives us a tight bound on the size of the
polygonal approximation $\polyapprox{P}$ of $P$.

\begin{corollary}\label{cor:polyapproxsize}
If $n$ is the size of a \pconvex polygon $P$, the size of its
polygonal approximation $\polyapprox{P}$ is at most $3n$. This bound
is tight (up to a constant).
\end{corollary}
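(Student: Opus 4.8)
The plan is to separate the statement into an upper bound $|\polyapprox{P}|\le 3n$ and a matching lower-bound construction witnessing that the leading coefficient $3$ is best possible. For the upper bound I would count the vertices of $\polyapprox{S}$ as the $n$ original vertices plus the auxiliary vertices, and bound only the latter. Each arc $a_i$ determines exactly one room $r_i$, so there are exactly $n$ rooms, each of which is degenerate, empty, or non-empty. By construction a degenerate room contributes no auxiliary vertex, an empty room contributes exactly one (the point $w_{i,1}$), and a non-empty room contributes exactly $K_i=|C_i^*|$. Hence the number of auxiliary vertices equals $E+\sum_{r_i\text{ non-empty}}|C_i^*|$, where $E$ denotes the number of empty rooms.

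I would then invoke Lemma~\ref{lem:cistardisjoint}: the sets $C_i^*$ are pairwise disjoint and each is a subset of the $n$-element vertex set of $P$, so $\sum_{r_i\text{ non-empty}}|C_i^*|\le n$. Since there are at most $n$ rooms in total, also $E\le n$. Adding the two estimates bounds the number of auxiliary vertices by $2n$, whence $|\polyapprox{P}|\le n+2n=3n$, as claimed. (One can even note that $E=n$ forces every room to be empty, hence $\sum|C_i^*|=0$, so the two estimates cannot both be tight simultaneously; this is not needed for the stated bound but explains why the extremal examples fall just short of $3n$.)

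For tightness I would exhibit a family of \pconvex polygons with $|\polyapprox{P}|=3n-O(1)$, so that the coefficient $3$ cannot be improved. The construction is a \emph{crescent}: let $a_1$ be a large convex arc with endpoints $v_1,v_2$ bulging to one side of the chord $v_1v_2$, and let the remaining vertices $v_3,\dots,v_n$ lie strictly inside the room $r_1$, forming a convex chain $v_2,v_3,\dots,v_n,v_1$ whose connecting arcs $a_2,\dots,a_n$ are small convex arcs bulging toward $a_1$. One verifies that this is a genuine \pconvex polygon (locally the interior lies on the same side as each arc), that the inner vertices are in convex position so that $C_1^*=\{v_3,\dots,v_n\}$ and hence $|C_1^*|=n-2$, and that each of the $n-1$ rooms $r_2,\dots,r_n$ is a non-degenerate \emph{empty} room contributing exactly one auxiliary vertex. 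The auxiliary count is therefore $(n-2)+(n-1)=2n-3$, giving $|\polyapprox{P}|=3n-3$.

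The main obstacle is not the upper bound, which is essentially immediate once Lemma~\ref{lem:cistardisjoint} is available, but the tightness construction. There the care lies in confirming that the crescent is legitimately \pconvex, that the single large room $r_1$ genuinely captures all $n-2$ interior vertices in its set $C_1^*$ (which is exactly where convex position of the inner chain is used, so that no inner vertex is absorbed into the convex hull interior and dropped from $C_1$), and that none of the small inner rooms degenerates or accidentally becomes non-empty. These are geometric consistency checks rather than difficult arguments, and once they are settled the pair of bounds $3n-3\le|\polyapprox{P}|\le 3n$ establishes the corollary.
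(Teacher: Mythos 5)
Your proof is correct and follows essentially the same route as the paper: the upper bound comes from charging at most one auxiliary vertex to each of the at most $n$ empty rooms and invoking Lemma~\ref{lem:cistardisjoint} to bound $\sum_i|C_i^*|$ by $n$, and the tightness example is the same family the paper uses (one large non-empty room with $|C_1^*|=n-2$ together with $n-1$ empty rooms, yielding $3n-3$ vertices).
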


\begin{proof}
Let $a_i$ be a convex arc of $P$, and let $r_i$ be the corresponding
room. If $a_i$ is an empty room, then $\polyapprox{P}$ contains one
auxiliary vertex due to $a_i$. Hence $\polyapprox{P}$ contains at most
$n$ auxiliary vertices attributed to empty rooms in $P$.
If $a_i$ is a non-empty room, then $\polyapprox{P}$ contains $|C_i^*|$
auxiliary vertices due to $a_i$. By Lemma \ref{lem:cistardisjoint} the
sets $C_i^*$, $i=1,\ldots,n$ are pairwise disjoint, which implies that
$\sum_{i=1}^n|C_i^*|\le|P|=n$.
Therefore $\polyapprox{P}$ contains the $n$ vertices of $P$, contains
at most $n$ vertices due to empty rooms in $P$ and at most $n$
vertices due to non-empty rooms in $P$. We thus conclude that the size
of $\polyapprox{P}$ is at most $3n$.

The upper bound of the paragraph above is tight up to a
constant. Consider the \pconvex polygon $P$ of
Fig. \ref{fig:polyapproxlb}. It consists of $n-1$ empty rooms and one
non-empty room $r_1$, such that $|C_1^*|=n-2$. It is easy to see that
$|\polyapprox{P}|=3n-3$.\proofbox
\end{proof}

\begin{figure}[t]
\begin{center}
\psfrag{m1}[][]{\small$m_1$}
\psfrag{v1}[][]{\small$v_1$}
\psfrag{v2}[][]{\small$v_2$}
\psfrag{v3}[][]{\small$v_3$}
\psfrag{v4}[][]{\small$v_4$}
\psfrag{v5}[][]{\small$v_5$}
\psfrag{v6}[][]{\small$v_6$}
\psfrag{vn-3}[][]{\small$v_{n-3}$}
\psfrag{vn-2}[][]{\small$v_{n-2}$}
\psfrag{vn-1}[][]{\small$v_{n-1}$}
\psfrag{vn}[][]{\small$v_n$}
\includegraphics[width=\textwidth]{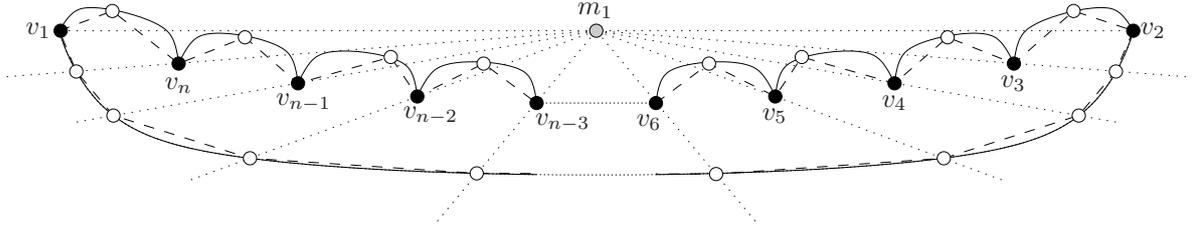}
\caption{A \pconvex polygon $P$ of size $n$ (solid
  curve), the polygonal approximation $\polyapprox{P}$ of which
  consists of $3n-3$ vertices (dashed polyline).}
\label{fig:polyapproxlb}
\end{center}
\end{figure}


\subsection{Triangulating the polygonal approximation}
\label{sec:triangulation}

Let $P$ be a \pconvex polygon and $\polyapprox{P}$ is its
polygonal approximation. We are going to construct a
\emph{constrained triangulation} of $\polyapprox{P}$, \ie we are
going to triangulate $\polyapprox{P}$, while enforcing some triangles
to be part of this triangulation. Let
$\compl{P}=\polyapprox{P}\setminus{}P$ be the set of auxiliary
vertices in $\polyapprox{P}$. The main idea behind the way this
particular triangulation is constructed is to enforce that:
\begin{enumerate}
\item all triangles of $\tr{\polyapprox{P}}$, that contain a vertex in
  $\compl{P}$, also contain at least one vertex of $P$, \ie
  no triangles contain only auxiliary vertices,
\item every vertex in $\compl{P}$ belongs to at least one triangle in
  $\tr{\polyapprox{P}}$ the other two vertices of which are both
  vertices of $P$, and
\item the triangles of $\tr{\polyapprox{P}}$ that contain vertices of
  $\polyapprox{P}$ can be guarded by vertices of $P$.
\end{enumerate}
These properties are going to be exploited in Step
\ref{algo:compg} of the algorithm presented in Section
\ref{sec:piececonvex}.

More precisely, we are going to enforce the way the triangles of
$\tr{\polyapprox{P}}$ are created in the neighborhoods of the vertices
in $\compl{P}$. By enforcing the triangles in these neighborhoods, we
effectively triangulate parts of $\polyapprox{P}$. The remaining
untriangulated parts of $\polyapprox{P}$ consist of one of more
disjoint polygons, which can then be  triangulated by means of any 
$O(n\log{}n)$ polygon triangulation algorithm. In other words, the
triangulation of $\polyapprox{P}$ that we want to construct is a
constrained triangulation, in the sense that we pre-specify some of
the edges of the triangulation. In fact, as we will see below we
pre-specify triangles, rather than edges, which are going to be
referred to as \emph{constrained triangles}.

Let us proceed to define the constrained triangles in
$\tr{\polyapprox{P}}$. If $r_i$ is an empty room, and $w_{i,1}$ is the
point added on $a_i$, add the edges $v_iv_{i+1}$, $v_iw_{i,1}$ and
$w_{i,1}v_{i+1}$, thus formulating the constrained triangle
$v_iw_{i,1}v_{i+1}$ (see Fig. \ref{fig:triangulatedapprox}).
If $r_i$ is a non-empty room, $\{c_1,\ldots,c_{K_i}\}$ the vertices in
$C_i^*$, $K_i=|C_i^*|$, and $\{w_{i,1},\ldots,w_{i,K_i}\}$ the
vertices added on $a_i$, add the following edges, if they do not
already exist:
\begin{enumerate}
\item $c_k,c_{k+1}$, $k=1,\ldots,K_i-1$; $v_ic_1$; $c_{K_i}v_{i+1}$;
\item $c_iw_{i,k}$, $k=1,\ldots,K_i$;
\item $c_iw_{i,k+1}$, $k=1,\ldots,K_i-1$;
\item $w_{i,k},w_{i,k+1}$, $k=1,\ldots,K_i-1$; $v_iw_{i,1}$; $w_{i,K_i}v_{i+1}$.
\end{enumerate}
These edges formulate $2K_i$ constrained triangles, namely,
$c_kc_{k+1}w_{i,k+1}$, $k=1,\ldots,K_i-1$, $c_kw_{i,k}w_{i,k+1}$,
$k=1,\ldots,K_i-1$, $v_ic_1w_{i,1}$ and $v_{i+1}c_{K_i}w_{i,K_i}$. We call the
polygonal region delimited by these triangles a \emph{crescent}. The
triangles $v_ic_1w_{i,1}$ and $v_{i+1}c_{K_i}w_{i,K_i}$ are called
\emph{boundary crescent triangles}, the triangles $c_kc_{k+1}w_{i,k+1}$,
$k=1,\ldots,K_i-1$ are called \emph{upper crescent triangles} and the
triangles $c_kw_{i,k}w_{i,k+1}$, $k=1,\ldots,K_i-1$ are called
\emph{lower crescent triangles}.

Note that almost all points in $\compl{P}$ belong to exactly one
triangle the other two points of which are in $P$; the only exception
are the points $w_{i,K_i}$ which belong to exactly two such
triangles.

As we have already mentioned, having created the constrained triangles
mentioned above, there may exist additional possibly disjoint
polygonal non-triangulated regions of $\polyapprox{P}$. The
triangulation procedure continues by triangulating these additional
polygonal non-triangulated regions; any $O(n\log{}n)$ polygon
triangulation algorithm may be used.


\subsection{Computing a guarding set for the original polygon}
\label{sec:guardingset}

To compute a guarding set for $P$ we will perform the following two steps:
\begin{enumerate}
\item
  Compute a guarding set $G_{\polyapprox{P}}$ for $\polyapprox{P}$.
\item
  From the guarding set $G_{\polyapprox{P}}$ for $\polyapprox{P}$
  compute a guarding set $G_P$ for $P$ of size at most
  $\lfloor\frac{2n}{3}\rfloor$, consisting of vertices of $P$ only.
\end{enumerate}

Assume that we have colored the vertices of $\polyapprox{P}$ with
three colors, so that every triangle in $\tr{\polyapprox{P}}$ does not
contain two vertices of the same color. This can be easily done by the
standard three-coloring algorithm for linear polygons presented in
\cite{m-phe-75,f-spcwt-78}. Let red, green and blue be the three
colors, and let $K_A$ be the set of vertices of red color, $\Pi_A$ be
the set of vertices of green color and $M_A$ be the set of vertices of
blue color in a subset $A$ of $\polyapprox{P}$. 
Clearly, all three sets $K_{\polyapprox{P}}$, $\Pi_{\polyapprox{P}}$ and
$M_{\polyapprox{P}}$ are guarding sets for $\polyapprox{P}$. In fact,
they are also guarding sets for $P$, as the following theorem suggests
(see also Fig. \ref{fig:vertexcoloring}).

\begin{figure}[t]
\begin{center}
\psfrag{v1}[][]{$v_1$}
\psfrag{v2}[][]{$v_2$}
\psfrag{v3}[][]{$v_3$}
\psfrag{v4}[][]{$v_4$}
\psfrag{v5}[][]{$v_5$}
\psfrag{v6}[][]{$v_6$}
\psfrag{v7}[][]{$v_7$}
\psfrag{w1,1}[][]{$w_{1,1}$}
\psfrag{w2,1}[][]{$w_{2,1}$}
\psfrag{w3,1}[][]{$w_{3,1}$}
\psfrag{w5,1}[][]{$w_{5,1}$}
\psfrag{w5,2}[][]{$w_{5,2}$}
\includegraphics[width=0.47\textwidth]{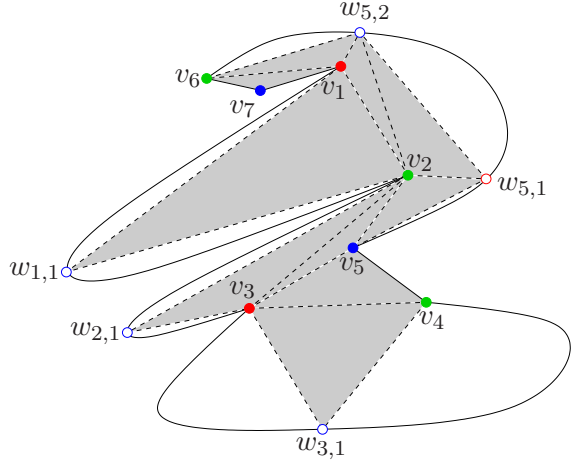}
\caption{The three guarding sets for $\polyapprox{P}$, are also
  guarding sets for $P$, as Theorem \ref{thm:gsapprox2orig} suggests.}
\label{fig:vertexcoloring}
\end{center}
\end{figure}

\begin{theorem}\label{thm:gsapprox2orig}
Each one of the sets $K_{\polyapprox{P}}$, $\Pi_{\polyapprox{P}}$ and
$M_{\polyapprox{P}}$ is a guarding set for $P$.
\end{theorem}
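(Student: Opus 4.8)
The plan is to exploit the inclusion $\polyapprox{P}\subseteq P$ (Corollary~\ref{cor:inclusion}) together with the fact that each colour class guards $\polyapprox{P}$, which is the standard consequence of the three-colouring of $\tr{\polyapprox{P}}$. Since $\polyapprox{P}\subseteq P$, every point that already lies in $\polyapprox{P}$ is seen by a vertex of each colour, so the entire content of the theorem is to cover the region $P\setminus\polyapprox{P}$. I would first observe that this region is exactly the disjoint union of the \emph{sectors} cut off between each arc $a_i$ and the polyline that replaces it in $\polyapprox{P}$. Each such sector $S$ is bounded by a chord $xy$ whose two endpoints lie on $a_i$ and by the corresponding subarc, and $S\subseteq P$ (by the simplicity lemma the polyline lies in the interior of the room, while the subarc is part of $\partial P$). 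Because the room $r_i$ is convex and $xy$ is a chord of it, $S$ is convex.

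Next I would attach to each sector $S$ the unique triangle of $\tr{\polyapprox{P}}$ incident, on the $\polyapprox{P}$-side, to the boundary edge $xy$; by construction this is a constrained triangle $T=xyz$ (a boundary or lower crescent triangle for a non-empty room, or the triangle $v_iw_{i,1}v_{i+1}$ for an empty room). Since the colouring assigns three distinct colours to $x,y,z$, it suffices to show that every point of $S$ is visible from each of $x$, $y$ and $z$. Visibility from $x$ and $y$ is immediate: they lie on $\partial S$ and $S$ is convex, so the segment joining either of them to any point of $S$ stays in $S\subseteq P$. For the apex $z$ I would form the region $F=T\cup S$, bounded by the two segments $zx$, $zy$ and the subarc. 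Note that $F\subseteq P$ holds automatically, since $T\subseteq\polyapprox{P}\subseteq P$ and $S\subseteq P$; hence it is enough to prove that $z$ sees all of $F$, for which it suffices that $F$ be convex, equivalently that the subarc lie inside the cone bounded by the rays $zx$ and $zy$ prolonged beyond $xy$.

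Establishing this convexity is the crux. For an empty room the apex $z$ is an endpoint of the arc $a_i$, and $F$ is exactly one of the two pieces into which the chord $w_{i,1}z$ divides the convex room $r_i$; since a chord splits a convex region into two convex pieces, $F$ is convex and contained in $r_i\subseteq P$. For a non-empty room the apex is a convex-hull vertex $c_k$ lying in the interior of $r_i$, and here I would exploit the radial placement of the auxiliary vertices: each $w_{i,k}$ is the intersection of the line $m_ic_k$ with $a_i$, so the segment $c_kw_{i,k}$ is directed radially toward $m_i$. Together with the convexity of $a_i$ (which makes the subarc bulge away from $z$), this radial alignment is what forces the interior angles of $F$ at the two chord endpoints to be at most $\pi$, yielding convexity of $F$; the boundary crescent triangles, whose chord is $v_iw_{i,1}$ or $w_{i,K_i}v_{i+1}$, are handled by the same radial argument.

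Finally, assembling the pieces: every point of $\polyapprox{P}$ is guarded by each colour class, and every point of $P\setminus\polyapprox{P}$ lies in some sector $S$ which, by the above, is seen in its entirety from the three differently-coloured vertices $x,y,z$. Hence each of $K_{\polyapprox{P}}$, $\Pi_{\polyapprox{P}}$ and $M_{\polyapprox{P}}$ guards all of $P$. I expect the main obstacle to be the angle/convexity verification for $F$ in the non-empty-room case, where one must argue carefully from the radial construction that the subarc never escapes the cone at the apex $c_k$; the empty-room case and the reduction itself are routine.
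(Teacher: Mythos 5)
Your proof follows essentially the same route as the paper's: each colour class guards every triangle of $\tr{\polyapprox{P}}$ by Fisk's argument, and each sector of $P\setminus\polyapprox{P}$ is guarded because it is adjacent to a constrained triangle all three of whose vertices see it, the two chord endpoints by convexity of the sector and the apex via the union $F=T\cup S$. In fact you supply more justification than the paper does for the one nontrivial point --- why the apex $c_k$ of a lower or boundary crescent triangle sees the entire sector --- and your radial-alignment idea is the right one: $F$ is the intersection of the convex ``pie slice'' of the room cut off by the rays from $m_i$ through $w_{i,k}$ and $w_{i,k+1}$ with the half-plane bounded by the line $c_kw_{i,k+1}$ not containing $m_i$, hence convex (with the obvious parallel-strip variant when $C_i^*=X_i$).
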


\begin{proof}
Let $G_{\polyapprox{P}}$ be one of $K_{\polyapprox{P}}$,
$\Pi_{\polyapprox{P}}$ and $M_{\polyapprox{P}}$. 
By construction, $G_{\polyapprox{P}}$ guards all triangles in
$\tr{\polyapprox{P}}$. To show that $G_{\polyapprox{P}}$ is a guarding
set for $P$, it suffices to show that $G_{\polyapprox{P}}$ also guards
the non-degenerate sectors defined by the edges of $\polyapprox{P}$
and the corresponding convex subarcs of $P$.

Let $s_i$ be a non-degenerate sector associated with the convex arc
$a_i$. We consider the following two cases:
\begin{enumerate}
\item
  The room $r_i$ is an empty room. Then $s_i$ is adjacent to the
  triangle $v_iw_{i,1}v_{i+1}$ of $\tr{\polyapprox{P}}$. Note that
  since $a_i$ is a convex arc, all three points $v_i$, $v_{i+1}$ and
  $w_{i,1}$ guard $s_i$. Since one of them has to be in
  $G_{\polyapprox{P}}$, we conclude that $G_{\polyapprox{P}}$ guards $s_i$.
\item
  The room $r_i$ is a non-empty room. Then $s_i$ is adjacent to either
  a boundary crescent triangle or a lower crescent triangle in
  $\tr{\polyapprox{P}}$ . Let $T$ be this triangle, and let $x$, $y$
  and $z$ be its vertices. Since $a_i$ is a convex arc, all three $x$,
  $y$ and $z$ guard $s_i$. Therefore, since one of the three vertices
  $x$, $y$ and $z$ is in $G_{\polyapprox{P}}$, we conclude that
  $G_{\polyapprox{P}}$ guards $s_i$.
\end{enumerate}
Therefore every non-degenerate sector in $\compl{P}$ is guarded by at
least one vertex in $G_{\polyapprox{P}}$, which implies that
$G_{\polyapprox{P}}$ is a guarding set for $P$.\proofbox
\end{proof}

Let as now assume, without loss of generality that, among $K_P$, $\Pi_P$
and $M_P$, $K_P$ has the smallest cardinality and that $\Pi_P$ has the
second smallest cardinality, \ie $|K_P|\le|\Pi_P|\le|M_P|$. We are
going to define a mapping $f$ from $K_{\compl{P}}$
to the power set $2^{\Pi_P}$ of $\Pi_P$. Intuitively, $f$ maps a
vertex $x$ in $K_{\compl{P}}$ to all the
neighboring vertices of $x$ in $\tr{\polyapprox{P}}$ that belong to
$\Pi_P$. We are going to give a more precise definition of $f$
below (consult Fig. \ref{fig:mapping}). Let
$x\in{}K_{\compl{P}}$. We distinguish between the
following cases: 
\begin{enumerate}
\item
  $x$ is an auxiliary vertex added to an empty room $r_i$ (see
  Fig. \ref{fig:mapempty}). Then $x$ is one of the vertices of the
  constrained triangle $v_iv_{i+1}x$ contained inside $r_i$. One of
  $v_i$, $v_{i+1}$ must be a vertex in $\Pi_P$, say $v_{i+1}$. Then we set
  $f(x)=\{v_{i+1}\}$.
\item
  $x$ is an auxiliary vertex added to a non-empty room $r_i$. Consider
  the following subcases:
\begin{enumerate}
\item
  $x$ is not the last auxiliary vertex on $a_i$, as we walk along
  $a_i$ in the counterclockwise sense (see
  Fig. \ref{fig:mapnotlast}). Then $x$ is incident to a single
  triangle in $\tr{\polyapprox{P}}$ the other two vertices of which
  are vertices in $P$. Let $y$ and $z$ be these other two
  vertices. One of $y$ and $z$ has to be a green vertex, say $y$. Then
  we set $f(x)=\{y\}$.
\item
  $x$ is the last auxiliary vertex on $a_i$ as we walk along $a_i$ in
  the counterclockwise sense (see Figs. \ref{fig:maplast1} and
  \ref{fig:maplast2}). Then $x$ is incident to a boundary crescent
  triangle and an upper crescent triangle. Let $xv_{i+1}y$ be the
  boundary crescent triangle and $xyz$ the upper crescent
  triangle. Clearly, all three vertices $v_{i+1}$, $y$ and $z$ are vertices
  of $P$. If $y\in{}\Pi_P$ (this is the case in
  Fig. \ref{fig:maplast1}), then we set $f(x)=\{y\}$. Otherwise (this is
  the case in Fig. \ref{fig:maplast2}), both $v_{i+1}$ and
  $z$ have to be green vertices, in which case we set $f(x)=\{v_{i+1},z\}$.
\end{enumerate}
\end{enumerate}

\begin{figure}[t]
\begin{center}
\psfrag{x}[][]{\small$x$}
\psfrag{y}[][]{\small$y$}
\psfrag{z}[][]{\small$z$}
\psfrag{vi}[][]{\small$v_i$}
\psfrag{vi+1}[][]{\small$v_{i+1}$}
\subfigure[\label{fig:mapempty}]
          {\includegraphics[width=0.4\textwidth]{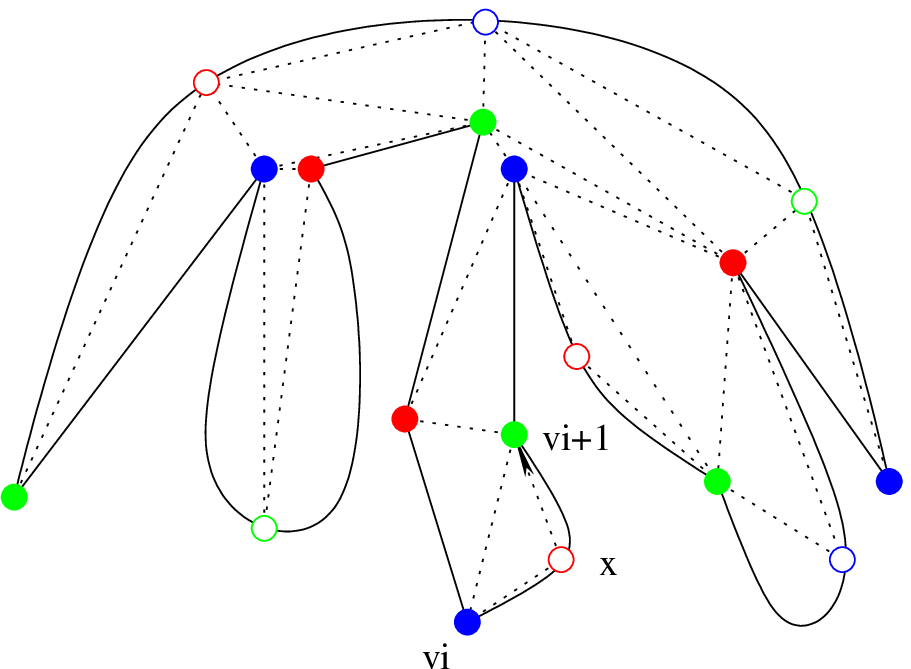}}\hfil%
\subfigure[\label{fig:mapnotlast}]
          {\includegraphics[width=0.4\textwidth]{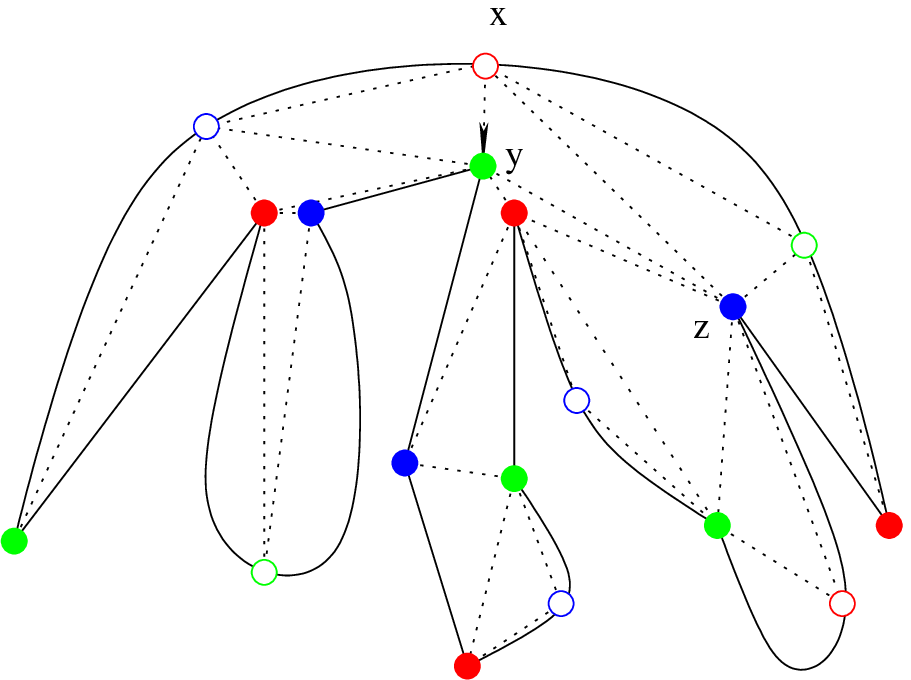}}
\subfigure[\label{fig:maplast1}]
          {\includegraphics[width=0.4\textwidth]{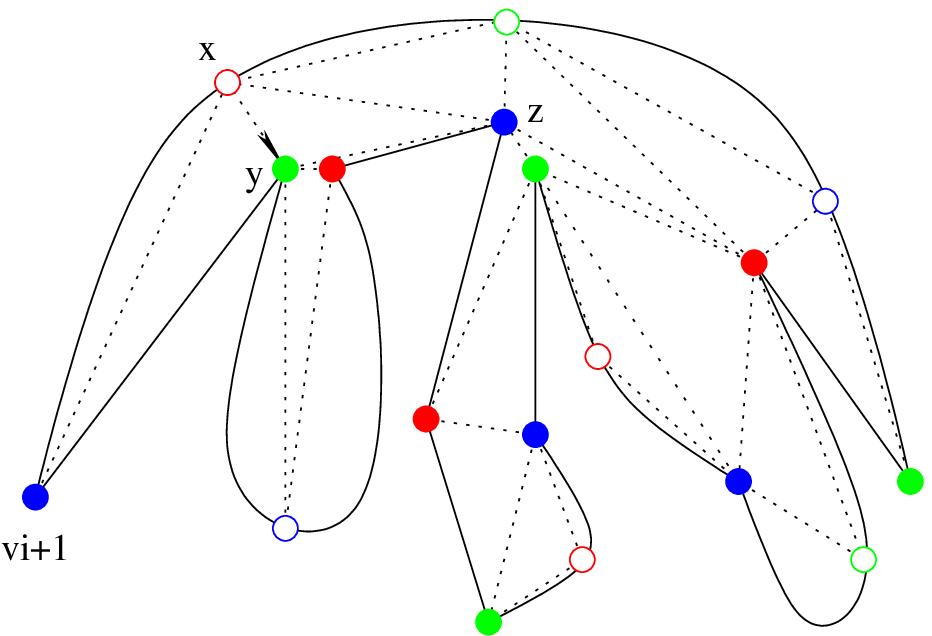}}\hfil%
\subfigure[\label{fig:maplast2}]
          {\includegraphics[width=0.4\textwidth]{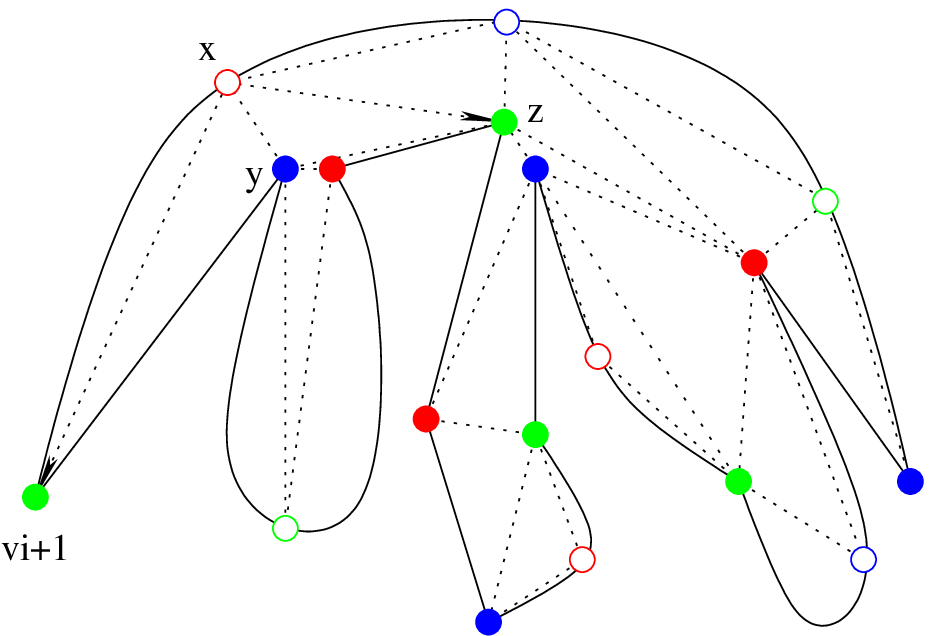}}
\caption{The three cases in the definition of the mapping $f$. Case
  (a): $x$ is a auxiliary vertex in an empty room. Case (b): $x$ is an
  auxiliary vertex in a non-empty room and is not the last auxiliary vertex
  added on the curvilinear arc. Cases (c) and (d): $x$ is the last
  auxiliary vertex added on the curvilinear arc of a non-empty room (in (c)
  only one of its neighbors in $P$ is green, whereas in (d) two of its
  neighbors in $P$ are green).}
\label{fig:mapping}
\end{center}
\end{figure}

Now define the set
$G_P=K_P\cup\left(\bigcup_{{x\in{}K_{\compl{P}}}}f(x)\right)$.
We claim that $G_P$ is a guarding set for $P$.

\begin{lemma}\label{lem:guardingset}
The set
$G_P=K_P\cup\left(\bigcup_{{x\in{}K_{\compl{P}}}}f(x)\right)$
is a guarding set for $P$.
\end{lemma}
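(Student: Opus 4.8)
The plan is to lean on Theorem~\ref{thm:gsapprox2orig}, which already tells us that the full set of red vertices $K_{\polyapprox{P}}=K_P\cup K_{\compl{P}}$ guards $P$, and to show that trading each red auxiliary vertex $x\in K_{\compl{P}}$ for the green original vertices in $f(x)$ loses no coverage. Since $\polyapprox{P}\subseteq P$ (Corollary~\ref{cor:inclusion}), the polygon $P$ is exactly the union of the triangles of $\tr{\polyapprox{P}}$ with the non-degenerate sectors cut off by the arcs $a_i$, just as in the proof of Theorem~\ref{thm:gsapprox2orig}. Hence it suffices to show that every such triangle and every such sector is seen by some vertex of $G_P$.

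First I would clear away the pieces that never relied on an auxiliary guard. Because the three-colouring gives each triangle of $\tr{\polyapprox{P}}$ three distinct colours, every triangle has exactly one red vertex; if that vertex is original it lies in $K_P\subseteq G_P$ and the triangle is guarded. This immediately disposes of all triangles in the non-crescent (``leftover'') regions, whose boundaries---the upper chains $v_i,c_1,\ldots,c_{K_i},v_{i+1}$ of the crescents and the chords $v_iv_{i+1}$ of the empty rooms---consist solely of original vertices, so these triangles have only original vertices. Likewise, each non-degenerate sector $s_i$ is adjacent to a boundary or lower crescent triangle (or to an empty-room triangle) $T$, and by convexity of $a_i$ \emph{every} vertex of $T$ sees $s_i$ (precisely the argument of Theorem~\ref{thm:gsapprox2orig}); so if the red vertex of $T$ is original, $s_i$ is guarded by $K_P$. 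The entire burden therefore reduces to the constrained triangles whose red vertex is an auxiliary vertex $x$, together with the sectors hanging off them.

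For these I would proceed case by case through the definition of $f$, showing that $f(x)$ sees every triangle in which $x$ is the red vertex as well as the adjacent sector. Many cases are immediate because $f(x)$ is chosen to contain a \emph{vertex} of the triangle at hand: case~(a), where $f(x)=\{v_{i+1}\}$ is a vertex of the unique triangle $v_iw_{i,1}v_{i+1}$ and, being a vertex of it, also sees both empty-room sectors by convexity; the base triangle $T^{*}$ in cases~(b)--(d), guarded by the green $y\in f(x)$; and case~(d), where $f(x)=\{v_{i+1},z\}$ is engineered so that the boundary crescent triangle $xv_{i+1}y$ is guarded by $v_{i+1}$, the upper crescent triangle $xyz$ by $z$, and the lower crescent triangle $c_{K_i-1}w_{i,K_i-1}w_{i,K_i}$ again by $z=c_{K_i-1}$. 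The configuration of Fig.~\ref{fig:maplast1} is clean for the same reason.

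The main obstacle is the one remaining configuration: $x=w_{i,m}$ is the red vertex of a \emph{lower} crescent triangle $c_mw_{i,m}w_{i,m+1}$ (and of its sector), while the green vertex $y\in f(x)$ is the \emph{non-incident} base vertex $c_{m-1}$ rather than $c_m$ (this arises in cases~(b) and (c)). Here I must prove a genuine visibility statement inside the thin crescent, namely that $c_{m-1}$ sees all of that triangle and its sector. I would derive it from the geometry built into the construction: $a_i$ is convex, the $c_1,\ldots,c_{K_i}$ form the inner convex chain, and each $w_{i,k}$ is the radial image of $c_k$ from the chord midpoint $m_i$, so that $m_i$, $c_k$ and $w_{i,k}$ are collinear. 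These facts force the quadrilateral $c_{m-1}c_mw_{i,m+1}w_{i,m}$ (together with the sector on $a_i$) to be star-shaped from $c_{m-1}$, with no edge of $P$ nor of $\tr{\polyapprox{P}}$ able to occlude the view; hence $c_{m-1}\in G_P$ indeed guards the triangle and its sector. Checking this non-occlusion uniformly over all crescents and both end-orientations is the delicate step; once it is in hand, the remaining cases are pure bookkeeping over the definition of $f$, and every triangle and sector of $P$ is covered, so $G_P$ guards $P$.
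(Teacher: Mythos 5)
Your overall architecture matches the paper's: decompose $P$ into the triangles of $\tr{\polyapprox{P}}$ plus the non-degenerate sectors, dispose of everything whose red vertex is an original vertex via $K_P$, and then chase the definition of $f$ through the constrained triangles. The gap is in the one configuration you yourself flag as delicate, and it is not merely unproven --- the visibility claim you need there is false in general. When $x=w_{i,m}$ is the red vertex of the lower crescent triangle $T=c_mw_{i,m}w_{i,m+1}$ and $f(x)=\{c_{m-1}\}$ (i.e.\ $c_m$ is blue and $c_{m-1}$ is green), you want $c_{m-1}$ to see $T$ and its sector. But $c_{m-1},c_m,c_{m+1}$ are consecutive vertices of the convex hull chain of the room, so $c_m$ lies strictly on the arc side of the segment $c_{m-1}c_{m+1}$; if $c_{m+1}$ is close to $a_i$, then $w_{i,m+1}$ is close to $c_{m+1}$ and $c_m$ still lies on the arc side of the segment $c_{m-1}w_{i,m+1}$. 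The sightline from $c_{m-1}$ to $w_{i,m+1}$ then dips below $c_m$, into the region between the hull chain and the chord of $r_i$ --- and that region is not obstacle-free: $c_m$ is a vertex of $P$ interior to the room, so two edges of $P$ descend from it toward the chord and can cut the sightline (for instance when $c_m$ is the tip of a thin finger of the exterior of $P$ protruding into $r_i$). Hence the quadrilateral $c_{m-1}c_mw_{i,m+1}w_{i,m}$ need not be star-shaped from $c_{m-1}$, and leaving the crescent does not rescue the argument.

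The paper sidesteps this case by a device you missed: $w_{i,m}$ and $c_{m+1}$ are the two vertices opposite the shared edge $c_mw_{i,m+1}$ of the adjacent lower and upper crescent triangles, hence the three-colouring forces them to receive the \emph{same} colour. So when $w_{i,m}$ is red, $c_{m+1}$ is a red original vertex already in $K_P$, and no appeal to $f(w_{i,m})$ is needed for $T$; the only nontrivial visibility statement required is that $c_{m+1}$ (not $c_{m-1}$) sees $T$ and its sector. That statement is sound for a reason that does not transfer to your choice of vertex: the corner of $T$ and its sector farthest from $c_{m+1}$ is $w_{i,m}$, which lies on the ray from $m_i$ through $c_m$ radially \emph{beyond} $c_m$, so sightlines from $c_{m+1}$ stay on the arc side of $c_m$; by contrast the corner farthest from $c_{m-1}$ is $w_{i,m+1}$, which lies beyond $c_{m+1}$ rather than beyond $c_m$, which is exactly what allows $c_m$ to occlude. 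To repair the proof, replace the $c_{m-1}$ argument by this same-colour observation, and keep $f$ only for the orientation in which the red vertex of $T$ is $w_{i,m+1}$, where $f(w_{i,m+1})$ yields either $c_m$ (a vertex of $T$) or $c_{m+1}$ (covered by the same visibility statement).
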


\begin{proof}
Let us consider the triangulation $\tr{\polyapprox{P}}$ of
$\polyapprox{P}$. The regions in $\compl{P}$ are sectors defined
by a curvilinear arc, which is a subarc of an edge of $P$ and the
corresponding chord connecting the endpoints of this subarc. Let us
consider the set of triangles in $\tr{\polyapprox{P}}$ and the set
$\sect{P}$ of sectors in $\compl{P}$. To show that $G_P$ is a guarding
set for $P$, it suffices show that every triangle in
$\tr{\polyapprox{P}}$ and every sector in $\sect{P}$ is guarded by at
least one vertex in $G_P$.

If $T$ is a triangle in $\tr{\polyapprox{P}}$ that is defined over
vertices of $P$, one of its vertices is colored red and belongs to
$K_P\subseteq{}G_P$. Hence, $T$ is guarded.

Consider now a triangle $T$ that is defined inside an empty room
$r_i$. If the auxiliary vertex of $T$ is not red, then one of the two
endpoints of $a_i$ has to be red, and thus it belongs to $G_P$. Hence
both $T$ and the two sectors adjacent to it in $r_i$ are guarded. If
the auxiliary vertex is red, then one of the other two vertices of $T$
is green and belongs to $G_P$; again, $T$ is guarded.

Suppose now that $T$ is a boundary crescent triangle, and let
$s$ be the sector adjacent to it (consult
Fig. \ref{fig:guardingsetproof1}). Let $x$ be the endpoint of $a_i$
contained in $T$, $y$ be the second point of $T$ that belongs to $P$ 
and $z$ the point in $\compl{P}$. Note that all three vertices
guard the sector $s$. If $x$ (\resp $y$) is a red vertex it will also
be a vertex in $G_P$. Hence, in this case both $T$ and $s$ are guarded
by $x$ (\resp $y$). If $z$ is the red vertex in $T$, either $x$ or $y$
has to be a green vertex. Hence either $x$ or $y$ will be in $G_P$, and
thus again both $T$ and $s$ will be guarded.

\begin{figure}[t]
\begin{center}
\psfrag{x}[][]{\small$x$}
\psfrag{y}[][]{\small$y$}
\psfrag{z}[][]{\small$z$}
\psfrag{w}[][]{\small$w$}
\psfrag{s}[][]{\small$s$}
\psfrag{T}[][]{\small$T$}
\psfrag{T1}[][]{\small$T'$}
\psfrag{ai}[][]{\small$a_i$}
\psfrag{mi}[][]{\small$m_i$}
\subfigure[\label{fig:guardingsetproof1}]
          {\includegraphics[width=0.3\textwidth]{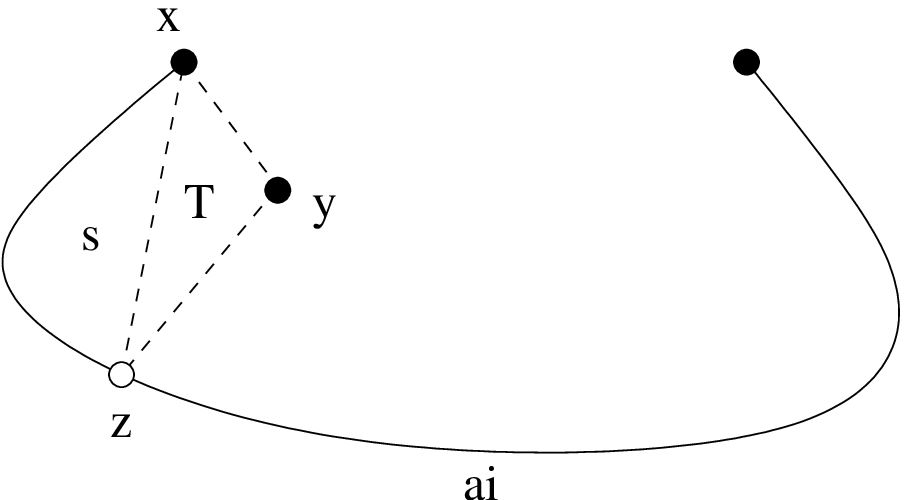}}\hfill%
\subfigure[\label{fig:guardingsetproof2}]
          {\includegraphics[width=0.3\textwidth]{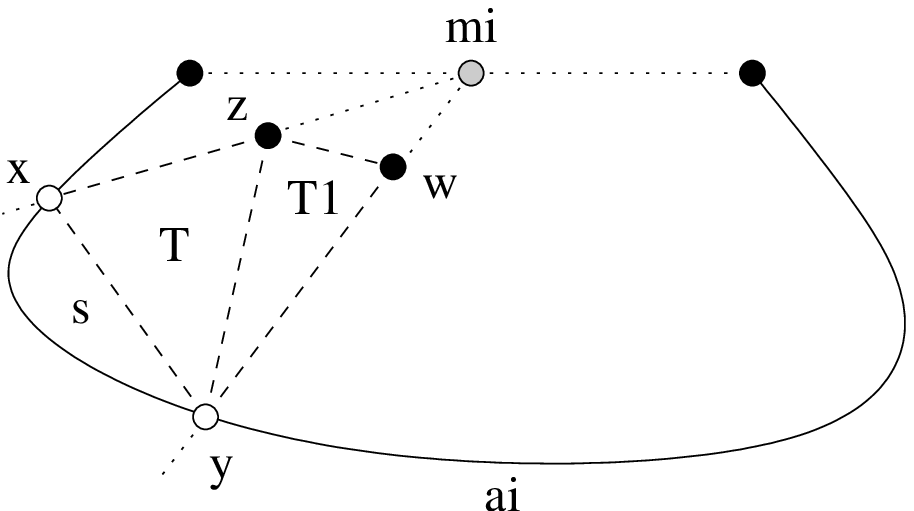}}\hfill%
\subfigure[\label{fig:guardingsetproof3}]
          {\includegraphics[width=0.3\textwidth]{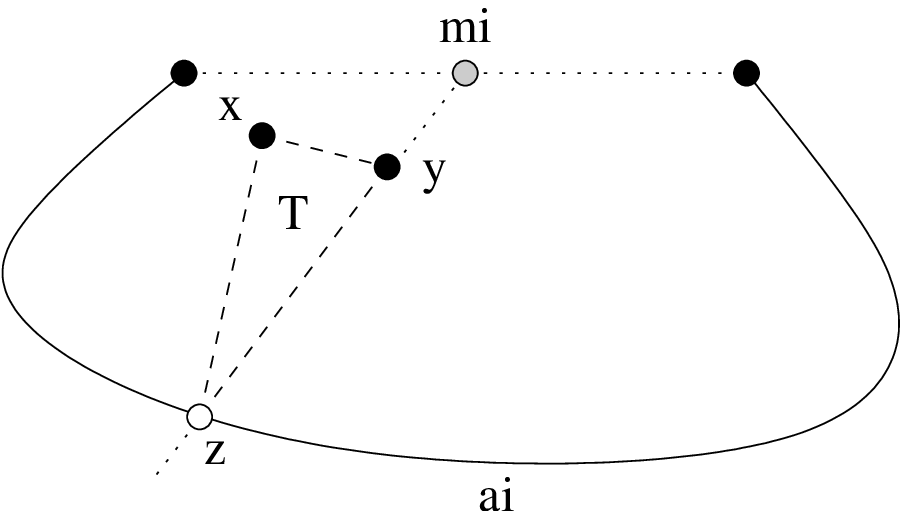}}%
\caption{Three of the five cases in the proof of Lemma
  \ref{lem:guardingset}: (a) the triangle $T$ is a boundary crescent
  triangle; (b) the triangle $T$ is a lower crescent triangle; (c) the
  triangle $T$ is an upper crescent triangle.}
\label{fig:guardingsetproof}
\end{center}
\end{figure}

If $T$ is a lower crescent triangle, let $s$ be the sector adjacent to
it (consult Fig. \ref{fig:guardingsetproof2}).
Let $x$, $y$ be the endpoints of the chord of $s$ on $a_i$ and
let $z$ be the point of $P$ in $T$. Let us also assume we encounter
$x$ and $y$ in that order as we walk along $a_i$ in the
counterclockwise sense, which implies that $x$ is the intersection of
the line $zm_i$ and the arc $a_i$. Finally, let $T'$ be the upper
crescent triangle incident to the edge $yz$, and let $w$ be the third
vertex of $T'$, beyond $y$ and $z$. It is interesting to note that all
four vertices $x$, $y$, $z$ and $w$ guard $T$, $T'$ and $s$. Moreover,
$x$ and $w$ have to be of the same color. In order to show that $T$
and $s$ are guarded by $G_P$, it suffices to show that one of $x$, $y$,
$z$ and $w$ belongs to $G_P$. Consider the following cases:
\begin{enumerate}
\item
  $z$ is a red vertex. Since $z\in{}K_P$, we get that $z\in{}G_P$.
\item
  $x$ is a red vertex. But then $w$ is also a red vertex. Since
  $w\in{}K_P$, we conclude that $w$ belongs to $G_P$ as well.
\item
  $y$ is a red vertex. Then either $z$ is a green vertex or both $x$
  and $w$ are green vertices. If $z$ is a green vertex, then
  $\{z\}\subseteq{}f(y)$, which implies that $z\in{}G_P$. If $z$ is a
  blue vertex, then both $x$ and $w$ are green vertices, and in
  particular $\{w\}\subseteq{}f(y)$. Hence $w\in G_P$.
\end{enumerate}

Finally, consider the case that $T$ is an upper crescent triangle, let
$x$ and $y$ be the vertices of $P$ in $T$ and let $z$ be the vertex of
$T$ in $\compl{P}$ (consult Fig. \ref{fig:guardingsetproof3}). Let us
also assume that $z$ is the intersection of
the line $ym_i$ with $a_i$. To show that $T$ is guarded by $G_P$, it
suffices to show that one of $x$ and $y$ belongs to
$G_P$. Consider the following cases:
\begin{enumerate}
\item
  $x$ is red vertex. Since $x\in{}K_P$ we have that $x\in{}G_P$.
\item
  $y$ is red vertex. Since $y\in{}K_P$ we have that $y\in{}G_P$.
\item
  $z$ is a red vertex. If $x$ is a green vertex, then
  $\{x\}\subseteq{}f(z)$. Hence $x\in{}G_P$. If $x$ is blue vertex,
  then $y$ has to be a green vertex, and
  $\{y\}\subseteq{}f(z)$. Therefore, $y\in{}G_P$.\proofbox
\end{enumerate}
\end{proof}

Since $f(x)\subseteq{}\Pi_P$ for every $x$ in
$K_{\compl{P}}$ we get that
$\bigcup_{x\in{}K_{\compl{P}}}f(x)\subseteq\Pi_P$.
But this, in turn implies that $G_P\subseteq K_P\cup\Pi_P$. Since
$K_P$ and $\Pi_P$ are the two sets of smallest cardinality among
$K_P$, $\Pi_P$ and $M_P$, we can easily verify that
$|K_P|+|\Pi_P|\le\lfloor\frac{2n}{3}\rfloor$. Hence,
$|G_P|\le|K_P|+|\Pi_P|\le\lfloor\frac{2n}{3}\rfloor$,
which yields the following theorem.

\begin{theorem}\label{thm:ubound}
Let $P$ be a \pconvex polygon with $n\ge{}2$ vertices. $P$ can
be guarded with at most $\lfloor\frac{2n}{3}\rfloor$ vertex guards.
\end{theorem}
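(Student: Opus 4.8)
The plan is to read the bound off directly from the guarding set $G_P$ that has already been constructed. By Lemma \ref{lem:guardingset} the set $G_P=K_P\cup\left(\bigcup_{x\in{}K_{\compl{P}}}f(x)\right)$ guards $P$, so the only things left to establish are that $G_P$ consists of vertices of $P$ and that $|G_P|\le\lfloor\frac{2n}{3}\rfloor$. First I would record the crucial structural feature of the mapping $f$: by its very definition each $f(x)$ is a (possibly two-element) set of \emph{green original} vertices, that is $f(x)\subseteq\Pi_P$. Consequently $\bigcup_{x\in{}K_{\compl{P}}}f(x)\subseteq\Pi_P$, and therefore $G_P\subseteq K_P\cup\Pi_P$; in particular every element of $G_P$ is an original vertex of $P$, so $G_P$ is a genuine \emph{vertex} guarding set.

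Next I would bound $|K_P\cup\Pi_P|$ by a counting argument carried out over the original vertices only. The three-coloring of $\tr{\polyapprox{P}}$ assigns exactly one of the colors red, green, blue to each vertex, so the restrictions $K_P$, $\Pi_P$, $M_P$ of the three color classes to the $n$ original vertices of $P$ form a partition of those vertices; hence $|K_P|+|\Pi_P|+|M_P|=n$. Under the standing assumption $|K_P|\le|\Pi_P|\le|M_P|$ the largest class satisfies $|M_P|\ge\lceil\frac{n}{3}\rceil$, whence $|K_P|+|\Pi_P|=n-|M_P|\le n-\lceil\frac{n}{3}\rceil=\lfloor\frac{2n}{3}\rfloor$. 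Combining this with $G_P\subseteq K_P\cup\Pi_P$ gives $|G_P|\le|K_P|+|\Pi_P|\le\lfloor\frac{2n}{3}\rfloor$, which is exactly the asserted bound.

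The argument is essentially bookkeeping once Lemma \ref{lem:guardingset} is in hand, so I do not expect a genuine obstacle here. The single point that deserves care — and the reason the whole construction was set up as it was — is that the counting must run over the $n$ original vertices and not over all of $\polyapprox{P}$. A naive attempt that simply discarded the largest color class of $\polyapprox{P}$ would be useless, since $\polyapprox{P}$ may have as many as $3n$ vertices (Corollary \ref{cor:polyapproxsize}). It is precisely the fact that $f$ pushes every red auxiliary vertex onto a green \emph{original} vertex that confines $G_P$ to $K_P\cup\Pi_P$ and lets the bound be taken over the original vertices of $P$, yielding $\lfloor\frac{2n}{3}\rfloor$ rather than a quantity proportional to $|\polyapprox{P}|$.
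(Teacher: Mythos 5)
Your argument is correct and is essentially identical to the paper's: the paper likewise observes that $f(x)\subseteq\Pi_P$, hence $G_P\subseteq K_P\cup\Pi_P$, and then bounds $|K_P|+|\Pi_P|\le\lfloor\frac{2n}{3}\rfloor$ using the ordering $|K_P|\le|\Pi_P|\le|M_P|$ over the $n$ original vertices. Your closing remark about why the count must run over the original vertices rather than all of $\polyapprox{P}$ is exactly the point of the construction.
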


We close this subsection by making two remarks:
\begin{remark}\sl
  The bound on the size of the vertex guarding set in Theorem
  \ref{thm:ubound} is tight: our algorithm will produce a vertex
  guarding set of size exactly $\lfloor\frac{2n}{3}\rfloor$ when
  applied to the \pconvex polygon of
  Fig. \ref{fig:polyapproxlb} or the crescent-like \pconvex
  polygon of Fig. \ref{fig:crescent}.
\end{remark}

\begin{remark}\sl
  When the input to our algorithm is a linear polygon all rooms are
  degenerate; consequently, no auxiliary vertices are created, and the
  guarding set computed corresponds to the set of colored vertices of
  smallest cardinality, hence producing a vertex guarding set of size at
  most $\lfloor\frac{n}{3}\rfloor$. In that respect, it can be
  considered as a generalization of Fisk's algorithm \cite{f-spcwt-78}
  to the class of \pconvex polygons.
\end{remark}


\subsection{Time and space complexity}
\label{sec:guardingalgo}

In this section we will show how to compute a vertex guarding set $G_P$, of
size at most $\lfloor\frac{2n}{3}\rfloor$, for $P$, in $O(n\log n)$ time and
$O(n)$ space. The algorithm presented at the beginning of this section
consists of four phases:
\begin{enumerate}
\item\label{step:pap}
  The computation of the polygonal approximation $\polyapprox{P}$ of
  $P$.
\item\label{step:ctr}
  The computation of the constrained triangulation
  $\tr{\polyapprox{P}}$ of $\polyapprox{P}$.
\item\label{step:gspap}
  The computation of a guarding set $G_{\polyapprox{P}}$ for
  $\polyapprox{P}$.
\item\label{step:gsp}
  The computation of a guarding set $G_P$ for $P$ from the guarding set
  $G_{\polyapprox{P}}$.
\end{enumerate}

Step \ref{step:ctr} of the algorithm presented above can be done in
$O(T(n))$ time and $O(n)$ space, where $T(n)$ is the time complexity
of any $O(n\log{}n)$ polygon triangulation algorithm: we need linear
time and space to create the constrained triangles of
$\tr{\polyapprox{P}}$, whereas the subpolygons created after the
introduction of the constrained triangles may be triangulated in
$O(T(n))$ time and linear space.

Step \ref{step:gspap} of the algorithm takes also linear time and
space with respect to the size of the polygon $\polyapprox{P}$. By
Corollary \ref{cor:polyapproxsize}, $|\polyapprox{P}|\le 3n$, which
implies that the guarding set $G_{\polyapprox{P}}$ can be computed in
$O(n)$ time and space.

Step \ref{step:gsp} also requires $O(n)$ time. Computing $G_P$ from
$G_{\polyapprox{P}}$ requires determining for each vertex $v$ of
$K_{\compl{P}}$ all the vertices of $\Pi_P$
adjacent to it. This takes time proportional to the degree $deg(v)$ 
of $v$ in $\tr{\polyapprox{P}}$, \ie a total of
$\sum_{v\in{}K_{\compl{P}}}deg(v)=O(|\polyapprox{P}|)=O(n)$
time. The space requirements for performing Step
\ref{step:gsp} is $O(n)$.

To complete our time and space complexity analysis, we need to show
how to compute the polygonal approximation $\polyapprox{P}$ of $P$ in
$O(n\log n)$ time and linear space. 
In order to compute the polygonal approximation $\polyapprox{P}$ or
$P$, it suffices to compute for each room $r_i$ the set of vertices
$C_i^*$. If $C_i^*=\emptyset$, then $r_i$ is empty, otherwise we have
the set of vertices we wanted. From $C_i^*$ we can compute the points
$w_{i,k}$ and the linear polygon $\polyapprox{P}$ in $O(n)$ time and
space.

The underlying idea is to split $P$ into $y$-monotone \pconvex
subpolygons. For each room $r_i$ within each such $y$-monotone
subpolygon, corresponding to a convex arc $a_i$ with endpoints $v_i$
and $v_{i+1}$, we will then compute the corresponding set
$C_i^*$. This will be done by first computing a subset $S_i$ of the
set $R_i$ of the points inside the room $r_i$, such that
$S_i\supseteq{}C_i^*$, and then apply an optimal time and space convex
hull algorithm to the set $S_i\cup\{v_i,v_{i+1}\}$ in order to compute
$C_i$, and subsequently from that $C_i^*$. In the discussion that
follows, we assume that for each convex arc $a_i$ of $P$ we associate
a set $S_i$, which is initialized to be the empty set. The sets $S_i$
will be progressively filled with vertices of $P$, so that in the end
they fulfill the containment property mentioned above.

Splitting $P$ into $y$-monotone \pconvex subpolygons
can be done in two steps:
\begin{enumerate}
\item
  First we need to split each convex arc $a_i$ into $y$-monotone
  pieces. Let $P'$ be the \pconvex polygon we get by
  introducing the $y$-extremal points for each $a_i$. Since each $a_i$
  can yield up to three $y$-monotone convex pieces, we conclude that
  $|P'|\le 3n$. Obviously splitting the convex arcs $a_i$ into
  $y$-monotone pieces takes $O(n)$ time and space. A vertex added to
  split a convex arc into $y$-monotone pieces will be called an
  \emph{added extremal vertex}.
\item
  Second, we need to apply the standard algorithm for computing
  $y$-monotone subpolygons out of a linear polygon to $P'$
  (cf. \cite{lp-lppsi-77} or \cite{bkos-cgaa-00}). The algorithm in
  \cite{lp-lppsi-77} (or \cite{bkos-cgaa-00}) is
  valid not only for line segments, but also for \pconvex
  polygons consisting of $y$-monotone arcs (such as $P'$). Since
  $|P'|\le{}3n$, we conclude that computing the $y$-monotone subpolygons
  of $P'$ takes $O(n\log n)$ time and requires $O(n)$ space.
\end{enumerate}
Note that a non-split arc of $P$ belongs to exactly one $y$-monotone
subpolygon. $y$-monotone pieces of a split arc of $P$ may belong to at
most three $y$-monotone subpolygons (see Fig. \ref{fig:split2monotone}).

\begin{figure}[t]
\begin{center}
\psfrag{Q1}[][]{\small$Q_1$}
\psfrag{Q2}[][]{\small$Q_2$}
\psfrag{Q3}[][]{\small$Q_3$}
\psfrag{Q4}[][]{\small$Q_4$}
\psfrag{Q5}[][]{\small$Q_5$}
\psfrag{Q6}[][]{\small$Q_6$}
\psfrag{Q7}[][]{\small$Q_7$}
\psfrag{Q8}[][]{\small$Q_8$}
\psfrag{Q9}[][]{\small$Q_9$}
\psfrag{Q10}[][]{\small$Q_{10}$}
\includegraphics[width=0.65\textwidth]{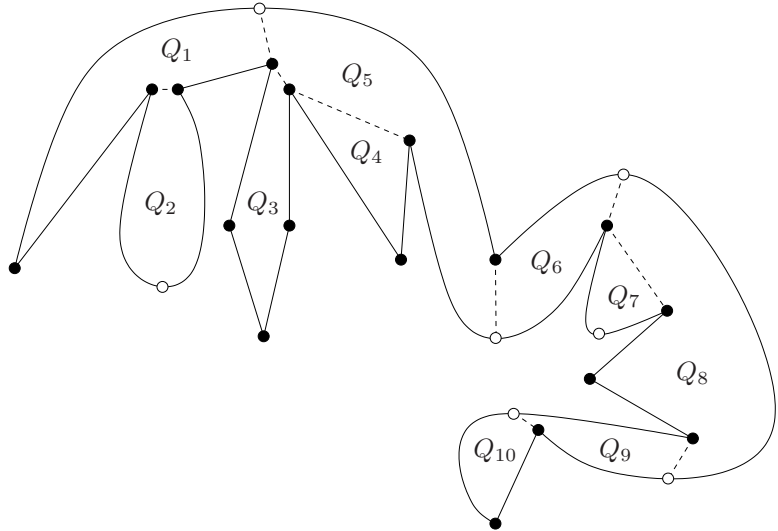}
\caption{Decomposition of a \pconvex polygon into ten
  $y$-monotone subpolygons. The white points are added extremal
  vertices that have been added in order to split non-$y$-monotone
  arcs to $y$-monotone pieces. The bridges are shown as dashed segments.}
\label{fig:split2monotone}
\end{center}
\end{figure}

At the beginning of our algorithm we associate to each arc $a_i$ of $P$
a set of vertices $S_i$, which is initialized to the empty set.
Suppose now that we have a $y$-monotone polygon $Q$. The edges of $Q$
are either convex arcs of $P$, or pieces of convex arcs of $P$, or
line segments between mutually visible vertices of $P$, added in order
to form the $y$-monotone subpolygons of $P$; we call these line
segments \emph{bridges} (see Fig. \ref{fig:split2monotone}).
For each non-bridge edge $e_i$ of $Q$, we want to compute the set
$C_i^*$. This can be done by sweeping $Q$ in the negative
$y$-direction (\ie by moving the sweep line from $+\infty$ to
$-\infty$). The events of the sweep correspond to the $y$ coordinates
of the vertices of $Q$, which are all known before-hand and can be put
in a decreasing sorted list. The first event of the sweep corresponds
to the top-most vertex of $Q$: since $Q$ consists of $y$-monotone
convex arcs, the $y$-maximal point of $Q$ is necessarily a vertex. The
last event of the sweep corresponds to the bottom-most vertex of $Q$,
which is also the $y$-minimal point of $Q$. We distinguish between
four different types of events:
\begin{enumerate}
\item the first event: corresponds to the top-most vertex of $Q$,
\item the last event: corresponds to the bottom-most vertex of $Q$,
\item a left event: corresponds to a vertex of the left $y$-monotone
  chain of $Q$, and
\item a right event: corresponds to a vertex of the right $y$-monotone
  chain of $Q$.
\end{enumerate}
Our sweep algorithm proceeds as follows. Let $\ell$ be the sweep
line parallel to the $x$-axis at some $y$. For each $y$ in between the
$y$-maximal and $y$-minimal values of $Q$, $\ell$ intersects $Q$ at two
points which belong to either a left edge $e_l$ (\ie an edge on the
left $y$-monotone chain of $Q$) or is a left vertex $v_l$ (\ie a
vertex on the left $y$-monotone chain of $Q$), and either a right edge
$e_r$ (\ie an arc on the right $y$-monotone chain of $Q$) or a right
vertex $v_r$ (\ie a vertex on the right $y$-monotone chain of
$Q$). We are going to associate the current left edge $e_l$ at position
$y$ to a point set $S_L$ and the current right edge at position $y$ to
a point set $S_R$. If the edge $e_l$ (\resp $e_r$) is a non-bridge
edge, the set $S_L$ (\resp $S_R$) will contain vertices of
$Q$ that are inside the room of the convex arc of $P$ corresponding
$e_l$ (\resp $e_r$).

When the $y$-maximal vertex $v_{max}$ is encountered, \ie during the
first event, we initialize $S_L$ and $S_R$ to be the empty set.
When a left event is encountered due a vertex $v$, let $e_{l,up}$ be the
left edge above $v$ and $e_{l,down}$ be the left edge below $v$
and let $e_r$ be the current right edge (\ie the right edge at the
$y$-position of $v$). If $e_{l,up}$ is an non-bridge edge, and $a_i$
is the corresponding convex arc of $P$, we augment the set $S_i$ by
the vertices in $S_L$. Then, irrespectively of whether or not
$e_{l,up}$ is a bridge edge, we re-initialize $S_L$ to be the
empty set. Finally, if $e_r$ is a non-bridge edge, and $a_k$ is the
corresponding convex arc in $P$, we check if $v$ is inside the room
$r_k$ or lies in the interior of the chord of $r_k$; if this is the
case we add $v$ to $S_R$.
When a right event is encountered our sweep algorithm behaves
symmetrically. If the right event is due to a vertex $v$, let $e_{r,up}$ be
right edge of $Q$ above $v$ and $e_{r,down}$ be the right edge of $Q$ below
$v$ and let $e_l$ be the current left edge of $Q$. If $e_{l,up}$ is a
non-bridge edge, and $a_i$ is the corresponding convex arc of $P$, we
augment $S_i$ by the vertices in $S_R$. Then, irrespectively of
whether or not $e_{r,up}$ is a bridge edge or not, we re-initialize
$S_R$ to be the empty set. Finally, If $e_l$ is a non-bridge edge, and
$a_k$ is the corresponding convex arc of $P$, we check if $v$ is
inside the room $r_k$ or lies in the interior of the chord of
$r_k$; if this is the case we add $v$ to $S_L$.
When the last event is encountered due to the $y$-minimal vertex $v_{min}$,
let $e_l$ and $e_r$ be the left and right edges of $Q$ above $v_{min}$,
respectively. If $e_l$ is a non-bridge edge, let $a_i$ be the
corresponding convex arc in $P$. In this case we simply augment $S_i$
by the vertices in $S_L$. Symmetrically, if $e_r$ is a non-bridge
edge, let $a_j$ be the corresponding convex arc in $P$. In this case
we simply augment $S_j$ by the vertices in $S_R$.

We claim that our sweep-line algorithm computes a set $S_i$ such that
$S_i\supseteq{}C_i^*$. To prove this we need the following
intermediate result:

\begin{lemma}\label{lem:sidecomp}
Given a non-empty room $r_i$ of $P$, with $a_i$ the corresponding
convex arc, the vertices of the set $C_i^*$ belong to the $y$-monotone
subpolygons of $P'$ computed via the algorithm in \cite{lp-lppsi-77}
(or \cite{bkos-cgaa-00}), which either contain the entire arc $a_i$ or
$y$-monotone pieces of $a_i$.
\end{lemma}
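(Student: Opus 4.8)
Fix a vertex $v_k\in C_i^*$. By the definition of $C_i^*$, the vertex $v_k$ lies in the interior of the room $r_i$ (or on the chord $v_iv_{i+1}$ in the degenerate case $R_i=X_i$), and it is a vertex of the convex hull $C_i$ lying on the side of the chord that faces the arc $a_i$. Let $w_{i,j_k}\in a_i$ be its associated arc-point, \ie the intersection of the line $m_iv_k$ with $a_i$ (or of $\ell_i^\perp(v_k)$ with $a_i$ when $R_i=X_i$). I would prove the lemma by showing that $v_k$ and $w_{i,j_k}$ lie on the boundary of a common $y$-monotone subpolygon of $P'$; since $w_{i,j_k}\in a_i$, that subpolygon then contains a $y$-monotone piece of $a_i$, which is exactly the assertion.

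The first step is to show that the relatively open segment $v_kw_{i,j_k}$ lies inside the room $r_i$ and does not cross $\partial P$. It is a sub-segment of the spoke from $m_i$ through $v_k$ to $w_{i,j_k}$, and this whole spoke lies in the convex room $r_i$. The real content is that, because $v_k$ is an arc-facing hull vertex of $C_i$, no other vertex of $P$ and no other arc of $P$ lies between $v_k$ and $a_i$ along this spoke: the edges of $P$ incident to $v_k$ emanate toward the chord side of $C_i$, whereas $v_kw_{i,j_k}$ points toward $a_i$. This is the same clearance phenomenon that was exploited in the proof that $\polyapprox{P}$ is simple.

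The second step is to show that the open segment $v_kw_{i,j_k}$ crosses no bridge of the decomposition. Every bridge is a diagonal of $P'$ whose endpoints are either vertices of $P$ or added extremal vertices, the latter sitting on the arcs themselves. A bridge meeting $v_kw_{i,j_k}$ would have to traverse the clearance region between $v_k$ and $a_i$; using once more that this region contains no vertex of $P$ and that added extremal vertices lie on $a_i$, I would rule out any such crossing. Combining the two steps, the open segment $v_kw_{i,j_k}$ meets neither $\partial P$ nor any bridge, so $v_k$ and $w_{i,j_k}$ belong to the same subpolygon, and that subpolygon contains a piece of $a_i$.

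The step I expect to be the main obstacle is establishing the clearance used in both previous paragraphs. The difficulty is that a non-empty room generically contains further vertices and arcs of $P$ --- a finger of the boundary poking into $r_i$, as in Fig. \ref{fig:auxiliaryvertices}, where $v_7$ lies in $r_5$ but not in $C_5^*$ --- so $P\cap r_i$ is not a single simple pocket and the intersection pattern near $a_i$ can be intricate. Making the convex-hull-emptiness intuition rigorous, in particular verifying that the arcs incident to the arc-facing hull vertices curve away from $a_i$ and that no added-extremal bridge is inserted between $v_k$ and $a_i$, is where the real work (and a short case distinction) lies. The degenerate cases --- a horizontal chord, or $v_k$ on the chord with $R_i=X_i$ --- I would treat separately, since there the perpendicular spoke $\ell_i^\perp(v_k)$ reaches $a_i$ inside $r_i$ with no room for obstruction.
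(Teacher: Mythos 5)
Your skeleton is the same as the paper's: join $v_k\in C_i^*$ to a point of $a_i$ by a segment, show that segment meets neither $\partial P$ nor any bridge of the monotone decomposition, and conclude that $v_k$ bounds a subpolygon containing a piece of $a_i$. (The paper argues by contradiction and uses the \emph{horizontal} segment $uu'$, where $u'=\ell_u\cap a_i$, rather than your spoke $v_kw_{i,j_k}$; the horizontal choice is not cosmetic, since it guarantees that any crossing bridge has one endpoint $w_+$ strictly above and one endpoint $w_-$ strictly below $\ell_u$, which is the scaffolding for the whole endpoint analysis.) Your first step --- no edge of $P$, no vertex of $P$, and no added extremal vertex lies on the open connecting segment --- matches the paper's corresponding claims and is fine at the paper's own level of rigor.

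The genuine gap is your second step. A bridge is a straight diagonal of $P'$; it can cross the ``clearance region'' between the arc-facing chain of $C_i$ and $a_i$ with \emph{both} endpoints outside that region, so the emptiness of the region --- the only tool you invoke --- cannot exclude such a crossing. Indeed, a bridge joining a vertex of $P$ deep inside $r_i$ (or the $y$-maximal vertex of $C_i$) to an added extremal vertex on $a_i$ is exactly the kind of diagonal the decomposition produces, and nothing you have said prevents it from meeting your spoke. The paper's proof does the missing work by interrogating the endpoints $w_+,w_-$ of the offending bridge: it first argues they cannot be added extremal vertices (the algorithm joins a vertex of $P'\setminus P$ in a room only to the $y$-maximal or $y$-minimal vertex of the corresponding $C_k$), hence they are vertices of $P$; it then shows that wherever $w_+$ and $w_-$ sit relative to the wedge at $u$ bounded by the lines $uu_+$ and $uu_-$ (the hull neighbours of $u$), the vertex $u$ is trapped strictly inside the triangle $w_\pm u_+u_-$ or the convex quadrilateral $w_+u_+u_-w_-$, whose corners all lie in $R_i\cup\{v_i,v_{i+1}\}$ --- contradicting $u\in C_i^*$. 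That ``trap $u$ inside the hull using the bridge's endpoints'' argument is the idea your proposal lacks, and without it (or a substitute) step two does not close. You would also need the separate treatment the paper gives to $u=v_{max}$ and $u=v_{min}$ of $C_i$, where the wedge degenerates and the contradiction comes instead from the extremality of $u$'s $y$-coordinate among the vertices of $C_i$.
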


\begin{proof}
Let $r_i$ be a non-empty room, $a_i$ the corresponding convex arc and
let $u$ be a vertex of $P$ in $C_i^*$ that is not a vertex of any of
the $y$-monotone subpolygons of $P'$ (computed by the algorithm in
\cite{lp-lppsi-77} or \cite{bkos-cgaa-00}) that contain either the
entire arc $a_i$ or $y$-monotone pieces of $a_i$. Let $v_{max}$
(\resp $v_{min}$) be the vertex of $P$ of maximum (\resp minimum)
$y$-coordinate in $C_i$ (ties are broken lexicographically). Let
$\ell_u$ be the line parallel to the $x$-axis passing through
$u$. Consider the following cases:
\begin{enumerate}
\item $u\in{}C_i^*\setminus\{v_{min},v_{max}\}$. In this case $u$ will
  be a vertex in either the left $y$-monotone chain of $C_i$ or
  a vertex in the right $y$-monotone chain of $C_i$. Without loss of
  generality we can assume that $u$ is a vertex in the right
  $y$-monotone chain of $C_i$ (see Figs. \ref{fig:proof-lem8-case1a}
  and \ref{fig:proof-lem8-case1b}). 
  Let $u'$ be the intersection of $\ell_u$ with $a_i$. Let $Q$ (\resp
  $Q'$) be the $y$-monotone subpolygon of $P'$ that contains $u$
  (\resp $u'$); by our assumption $Q\ne{}Q'$. Finally, let $u_+$
  (\resp $u_-$) be the vertex of $C_i$ above (\resp below) $u$ in the
  right $y$-monotone chain of $C_i$.
  
  The line segment $uu'$ cannot intersect any edges of $P$, since this
  would contradict the fact that $u\in{}C_i^*$. Similarly, $uu'$
  cannot contain any vertices of $P'$: if $v$ is a vertex of $P$ in
  the interior of $uu'$, $u$ would be inside the triangle $vu_+u_-$,
  which contradicts the fact that $u\in{}C_i^*$, whereas if $v$ is a
  vertex of $P'\setminus{}P$ in the interior of $uu'$, $P$ would not
  be locally convex at $v$, a contradiction with the fact that $P$ is
  a \pconvex polygon. As a result, and since $Q\ne{}Q'$, there
  exists a bridge edge $e$ intersecting $uu'$. Let $w_+$, $w_-$ be the
  two endpoints of $e$ in $P'$, where $w_+$ lies above the line
  $\ell_u$ and $w_-$ lies below the line $\ell_u$. In fact neither $w_+$
  nor $w_-$ can be a vertex in $P'\setminus{}P$, since the
  algorithm in \cite{lp-lppsi-77} (or \cite{bkos-cgaa-00}) will
  connect a vertex in $P'\setminus{}P$ inside a room $r_k$ with either
  the $y$-maximal or the $y$-minimal vertex of $C_k$ only. Let
  $\ell_+$ (\resp $\ell_-$) be the line passing through the vertices
  $u$ and $u_+$ (\resp $u$ and $u_-$). Finally, let $s$ be the sector
  delimited by the lines $\ell_+$, $\ell_-$ and $a_i$. Now, if $w_+$
  lies inside $s$, then $u$ will be inside the triangle
  $w_+u_+u_-$ (see Fig. \ref{fig:proof-lem8-case1a}). Analogously, if
  $w_-$ lies inside $s$, then $u$ will be inside the triangle
  $w_-u_+u_-$. In both cases we get a contradiction with the fact that
  $u\in{}C_i^*$. If neither $w_+$ nor $w_-$ lie inside $s$, then both
  $w_+$ and $w_-$ have to be vertices inside $r_i$, and moreover $u$
  will lie inside the convex quadrilateral $w_+u_+u_-w_-$; again this
  contradicts the fact that $u\in{}C_i^*$ (see
  Fig. \ref{fig:proof-lem8-case1b}).
\begin{figure}[t]
  \psfrag{vi}[][]{\scriptsize$v_i$}
  \psfrag{vi+1}[][]{\scriptsize$v_{i+1}$}
  \psfrag{ai}[][]{\scriptsize$a_i$}
  \psfrag{lu}[][]{\scriptsize$\ell_u$}
  \psfrag{l+}[][]{\scriptsize$\ell_+$}
  \psfrag{l-}[][]{\scriptsize$\ell_-$}
  \psfrag{u}[][]{\scriptsize$u$}
  \psfrag{up}[][]{\scriptsize$u'$}
  \psfrag{u+}[][]{\scriptsize$u_+$}
  \psfrag{u-}[][]{\scriptsize$u_-$}
  \psfrag{w+}[][]{\scriptsize$w_+$}
  \psfrag{w-}[][]{\scriptsize$w_-$}
  \psfrag{s}[][]{\scriptsize$s$}
  \psfrag{vmaxp}[][]{\scriptsize$v_{max}'$}
  \subfigure[\label{fig:proof-lem8-case1a}]%
  {\includegraphics[width=0.3\textwidth]{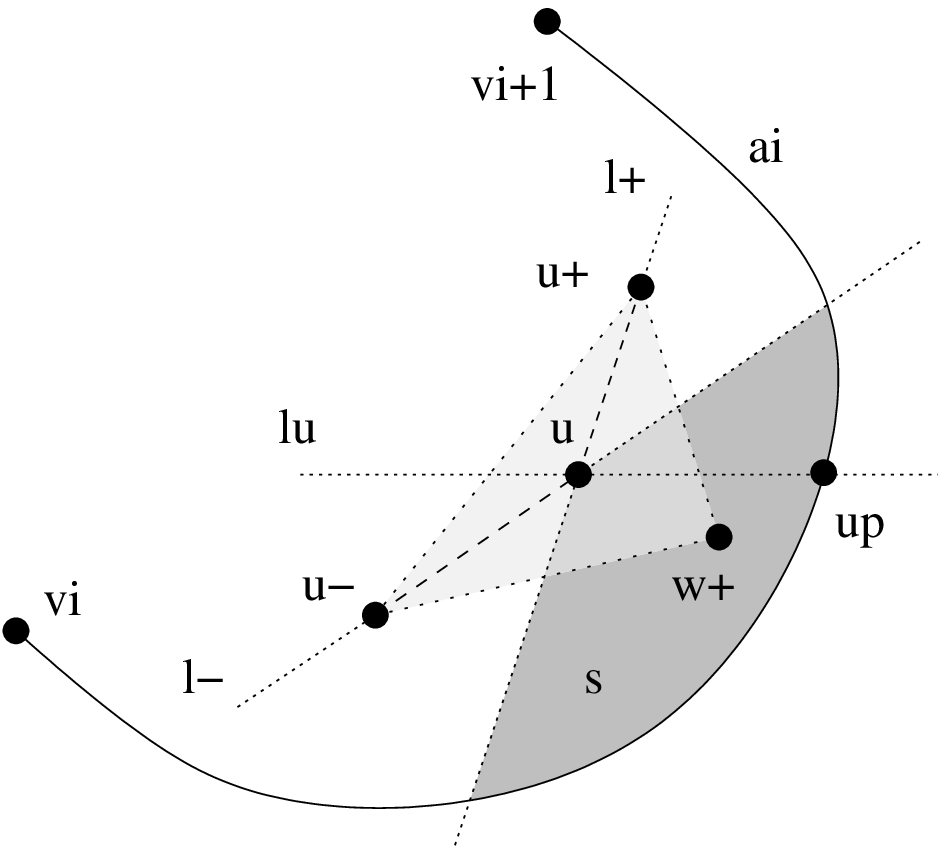}}%
  \hfill\subfigure[\label{fig:proof-lem8-case1b}]%
  {\includegraphics[width=0.3\textwidth]{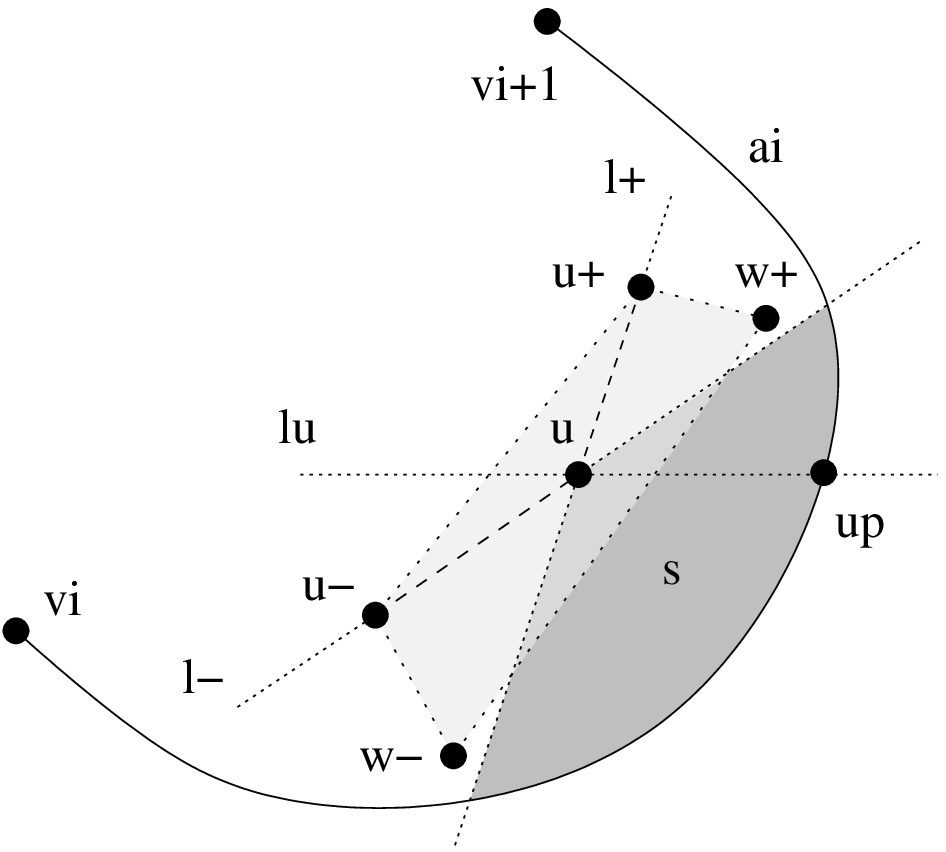}}%
  \hfill\subfigure[\label{fig:proof-lem8-case2}]%
  {\includegraphics[width=0.28\textwidth]{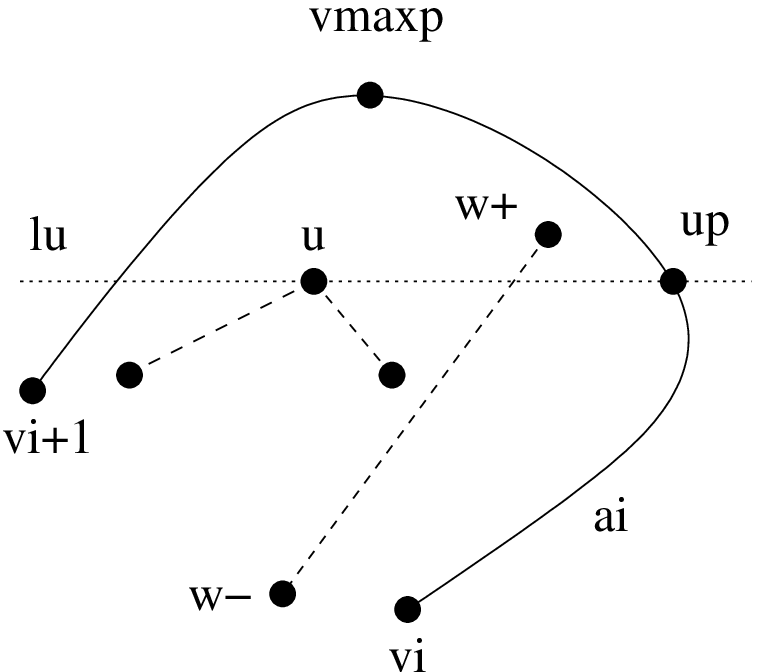}}%
  \caption{Proof of Lemma \ref{lem:sidecomp}. (a) The case
    $u\in{}C_i^*\setminus\{v_{min},v_{max}\}$, with $w_+\in{}s$.
    (b) The case $u\in{}C_i^*\setminus\{v_{min},v_{max}\}$, with
    $w_+,w_-\not\in{}s$. (c) The case $u\equiv{}v_{max}$.}
  \label{fig:proof-lem8}
\end{figure}
\item $u\equiv{}v_{max}$. By the maximality of the $y$-coordinate of
  $u$ in $C_i$, we have that the $y$-coordinate of $u$ is larger than
  or equal to the $y$-coordinates of both $v_i$ and $v_{i+1}$.
  Therefore, the line $\ell_u$ intersects the arc $a_i$ exactly twice,
  and, moreover, $a_i$ has a $y$-maximal vertex of $P'\setminus{}P$ in
  its interior, which we denote by $v_{max}'$ (see
  Fig. \ref{fig:proof-lem8-case2}). Let $u'$ be the intersection of
  $\ell_u$ with $a_i$ that lies to the right of $u$, and let $Q$
  (\resp $Q'$) be the $y$-monotone subpolygon of $P'$ that contains
  $u$ (\resp $u'$). By assumption $Q\ne{}Q'$, which implies that there
  exists a bridge edge $e$ intersecting the line segment
  $uu'$. Notice, that, as in the case
  $u\in{}C_i^*\setminus\{v_{min},v_{max}\}$, the line segment $uu'$
  cannot intersect any edges of $P$, or cannot contain any vertex $v$
  of $P'\setminus{}P$; the former would contradict the fact that
  $u\in{}C_i^*$, whereas as the latter would contradict the
  fact that $P$ is \pconvex. Furthermore, $uu'$ cannot contain
  vertices of $P$ since this would contradict the maximality of the
  $y$-coordinate of $u$ in $C_i$.

  Let $w_+$ and $w_-$ be the endpoints of $e$ above and below
  $\ell_u$, respectively. Notice that $e$ cannot have $v_{max}'$ as
  endpoint, since the only bridge edge that has $v_{max}'$ as endpoint
  is the bridge edge $v_{max}'u$. But then $w_+$ must be a vertex of
  $P$ lying inside $r_i$; this contradicts the maximality of the
  $y$-coordinate of $u$ among the vertices in $C_i$.
\item $u\equiv{}v_{min}$. This case is entirely symmetric to the case
  $u\equiv{}v_{max}$.\proofbox
\end{enumerate}
\end{proof}

An immediate corollary of the above lemma is the following:

\begin{corollary}
\label{cor:sweepsuperset}
For each convex arc $a_i$ of $P$, the set $S_i$ computed by the
sweep algorithm described above is a superset of the set $C_i^*$.
\end{corollary}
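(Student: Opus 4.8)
The plan is to deduce the corollary directly from Lemma~\ref{lem:sidecomp} by tracing the behaviour of the sweep on the relevant $y$-monotone subpolygons. Fix a non-empty room $r_i$ with convex arc $a_i$ (for a degenerate or empty room there are no vertices in $C_i^*$ and the claim is trivial), and let $u\in{}C_i^*$. By Lemma~\ref{lem:sidecomp}, $u$ is a vertex of one of the $y$-monotone subpolygons $Q$ of $P'$ that contains either the whole of $a_i$ or a $y$-monotone piece of $a_i$. It therefore suffices to show that, while sweeping $Q$, the algorithm inserts $u$ into the point set ($S_L$ or $S_R$) currently associated with that piece of $a_i$, and that this set is later merged into $S_i$.

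Without loss of generality assume the piece of $a_i$ lying in $Q$ belongs to the right $y$-monotone chain of $Q$ (the left-chain case is symmetric, with the roles of $S_L$ and $S_R$ exchanged). Since $u\in{}C_i^*\subseteq{}R_i$, the vertex $u$ lies in the interior of $r_i$ or in the interior of its chord, \ie on the interior side of $a_i$; hence $u$ appears on the \emph{left} chain of $Q$ and is processed as a left event. The key geometric point, already isolated in the proof of Lemma~\ref{lem:sidecomp}, is that the horizontal line $\ell_u$ through $u$ meets $a_i$ at a point $u'$ with the segment $uu'$ crossing no edge of $P$; consequently, at the moment $u$ is swept, the current right edge $e_r$ of $Q$ is exactly the piece of $a_i$. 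The algorithm then tests whether $u$ lies inside the room $r_i$ or in the interior of its chord, which holds because $u\in{}R_i$, and therefore adds $u$ to $S_R$.

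It remains to observe that the set $S_R$ associated with this piece of $a_i$ is flushed into $S_i$ precisely when that right piece terminates --- at the right event (or the last event) at its lower endpoint --- and that, because $S_i$ is indexed by the arc $a_i$ rather than by an individual piece, every piece of $a_i$ (recall that a split arc of $P$ may belong to up to three subpolygons) empties its accumulated vertices into the same set $S_i$. Hence every $u\in{}C_i^*$ is eventually placed in $S_i$, giving $C_i^*\subseteq{}S_i$ as claimed.

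The main thing to get right is the second paragraph: one must argue, using Lemma~\ref{lem:sidecomp}, that the horizontal line through $u$ genuinely strikes the opposite chain on the piece of $a_i$, so that at $u$'s event the current opposite edge corresponds to the correct room $r_k=r_i$ against which $u$ is tested. Once this is established, the remainder is bookkeeping on when the sets $S_L$ and $S_R$ are (re)initialised and merged, together with the symmetric treatment of right-chain arcs and of the first/last events.
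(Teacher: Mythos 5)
Your argument is correct and is essentially the paper's own reasoning: the paper states this as an immediate corollary of Lemma~\ref{lem:sidecomp} with no written proof, and your write-up is exactly the spelling-out of that immediacy. In particular you correctly isolate the one non-trivial point --- that the segment $uu'$ from the lemma's proof forces the current opposite edge at $u$'s sweep event to be the relevant piece of $a_i$, so the room-membership test fires --- after which only the $S_L$/$S_R$ flushing bookkeeping remains.
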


Let us now analyze the time and space complexity of Step
\ref{step:pap} of the algorithm sketched at the beginning of this
subsection. Computing the polygonal approximation $\polyapprox{P}$ of
$P$ requires subdividing $P$ into $y$-monotone subpolygons. This
subdivision takes $O(n\log n)$ time and $O(n)$ space. Once we have
the subdivision of $P$ into $y$-monotone subpolygons we need to
compute the sets $S_i$ for each convex arc $a_i$ of $P$. The sets
$S_i$ can be implemented as red-black trees. Inserting an element in
some $S_i$ takes $O(\log{}n)$ time. During the course of our algorithm
we perform only insertions on the $S_i$'s. A vertex $v$ of $P$ is
inserted at most $deg(v)$ times in some $S_i$, where $deg(v)$ is the
degree of $v$ in the $y$-monotone decomposition of $P$. Since the sum
of the degrees of the vertices of $P$ in the $y$-monotone
decomposition of $P$ is $O(n)$, we conclude that the total size of the
$S_i$'s is $O(n)$ and that we perform $O(n)$ insertions on the
$S_i$'s. Therefore we need $O(n\log{}n)$ time and $O(n)$ space to
compute the $S_i$'s. Finally, since $\sum_{i=1}^n|S_i|=O(n)$, the sets
$C_i^*$ can also be computed in total $O(n\log{}n)$ time and $O(n)$
space. The analysis above thus yields the following:

\begin{theorem}
Let $P$ be a \pconvex polygon with $n\ge{}2$ vertices. We can
compute a guarding set for $P$ of size at most
$\lfloor\frac{2n}{3}\rfloor$ in $O(n\log{}n)$ time and $O(n)$ space.
\end{theorem}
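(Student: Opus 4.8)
The plan is to verify that each of the four phases of the algorithm sketched at the beginning of this subsection runs within the claimed resource bounds, and then to add the bounds up; the \emph{correctness} of the output --- that $G_P$ is indeed a vertex guarding set of size at most $\lfloor\frac{2n}{3}\rfloor$ --- is already secured by Lemma \ref{lem:guardingset} and Theorem \ref{thm:ubound}, so only the time and space complexity need attention here.

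First I would dispose of the three inexpensive phases. Phase \ref{step:ctr} costs $O(T(n))$ time and $O(n)$ space, where $T(n)$ is the cost of any polygon-triangulation routine: the constrained triangles are created in linear time, and the leftover subpolygons are triangulated in $O(n\log n)$ time and linear space, so the whole phase is $O(n\log n)$ time. Phase \ref{step:gspap} is the standard three-coloring of $\tr{\polyapprox{P}}$, which runs in time and space linear in $|\polyapprox{P}|$; since Corollary \ref{cor:polyapproxsize} gives $|\polyapprox{P}|\le 3n$, this is $O(n)$. Phase \ref{step:gsp} merely reads off, for each $x\in K_{\compl{P}}$, its green neighbours in $\tr{\polyapprox{P}}$, at a cost of $\sum_{x}deg(x)=O(|\polyapprox{P}|)=O(n)$ time and $O(n)$ space.

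The main obstacle, and where the bulk of the argument lies, is Phase \ref{step:pap}, the construction of the polygonal approximation itself. Here I would proceed in three steps. Splitting $P$ into $y$-monotone \pconvex subpolygons is done by introducing the $y$-extremal points (yielding $P'$ with $|P'|\le 3n$) and then running the standard monotone-decomposition algorithm, for a total of $O(n\log n)$ time and $O(n)$ space. The heart of the matter is computing, for every non-empty room $r_i$, the set $C_i^*$: by Corollary \ref{cor:sweepsuperset} the sweep described above produces a superset $S_i\supseteq C_i^*$, so it suffices to run the sweep and then extract $C_i^*$ from $S_i$. For the complexity I would implement each $S_i$ as a red-black tree supporting $O(\log n)$-time insertions; since a vertex $v$ is inserted at most $deg(v)$ times, where $deg(v)$ is its degree in the $y$-monotone decomposition, and $\sum_v deg(v)=O(n)$, the sweep performs $O(n)$ insertions overall, whence $\sum_i|S_i|=O(n)$ and the sweep runs in $O(n\log n)$ time and $O(n)$ space. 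Finally, because $\sum_i|S_i|=O(n)$, applying an optimal convex-hull algorithm to each $S_i\cup\{v_i,v_{i+1}\}$ to obtain $C_i$, and from it $C_i^*$ and the auxiliary vertices $w_{i,k}$, costs a further $O(n\log n)$ time and $O(n)$ space.

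Summing the four phases then yields the stated $O(n\log n)$ time and $O(n)$ space bound. The only genuinely delicate point is the amortised counting of insertions in Phase \ref{step:pap}: one must be sure that the total work spread across all the sets $S_i$ remains linear, which rests on the containment $S_i\supseteq C_i^*$ of Corollary \ref{cor:sweepsuperset} (so that no relevant vertex of any room is missed) together with the bound $\sum_v deg(v)=O(n)$ on the monotone-decomposition graph; every remaining estimate is routine accounting.
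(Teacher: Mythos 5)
Your proposal follows the paper's own complexity analysis essentially verbatim: the same four-phase accounting, the same $O(T(n))$ bound for the constrained triangulation, the same degree-sum argument for extracting $G_P$, and the same red-black-tree sweep with amortised insertion counting (resting on Corollary \ref{cor:sweepsuperset} and $\sum_v deg(v)=O(n)$) for building the sets $S_i$ and then the $C_i^*$. The argument is correct and matches the paper's proof.
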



\subsection{The lower bound construction}
\label{sec:lowerbound}

\begin{figure}[!b]
\begin{center}
\psfrag{v1}[][]{\small$v_1$}
\psfrag{v2}[][]{\small$v_2$}
\psfrag{v3}[][]{\small$v_3$}
\psfrag{v4}[][]{\small$v_4$}
\psfrag{v5}[][]{\small$v_5$}
\psfrag{v6}[][]{\small$v_6$}
\psfrag{v7}[][]{\small$v_7$}
\psfrag{m}[][]{\small$m$}
\psfrag{a3}[][]{\small$a_3$}
\psfrag{a4}[][]{\small$a_4$}
\subfigure[\label{fig:windmill}]
          {\includegraphics[width=0.49\textwidth]{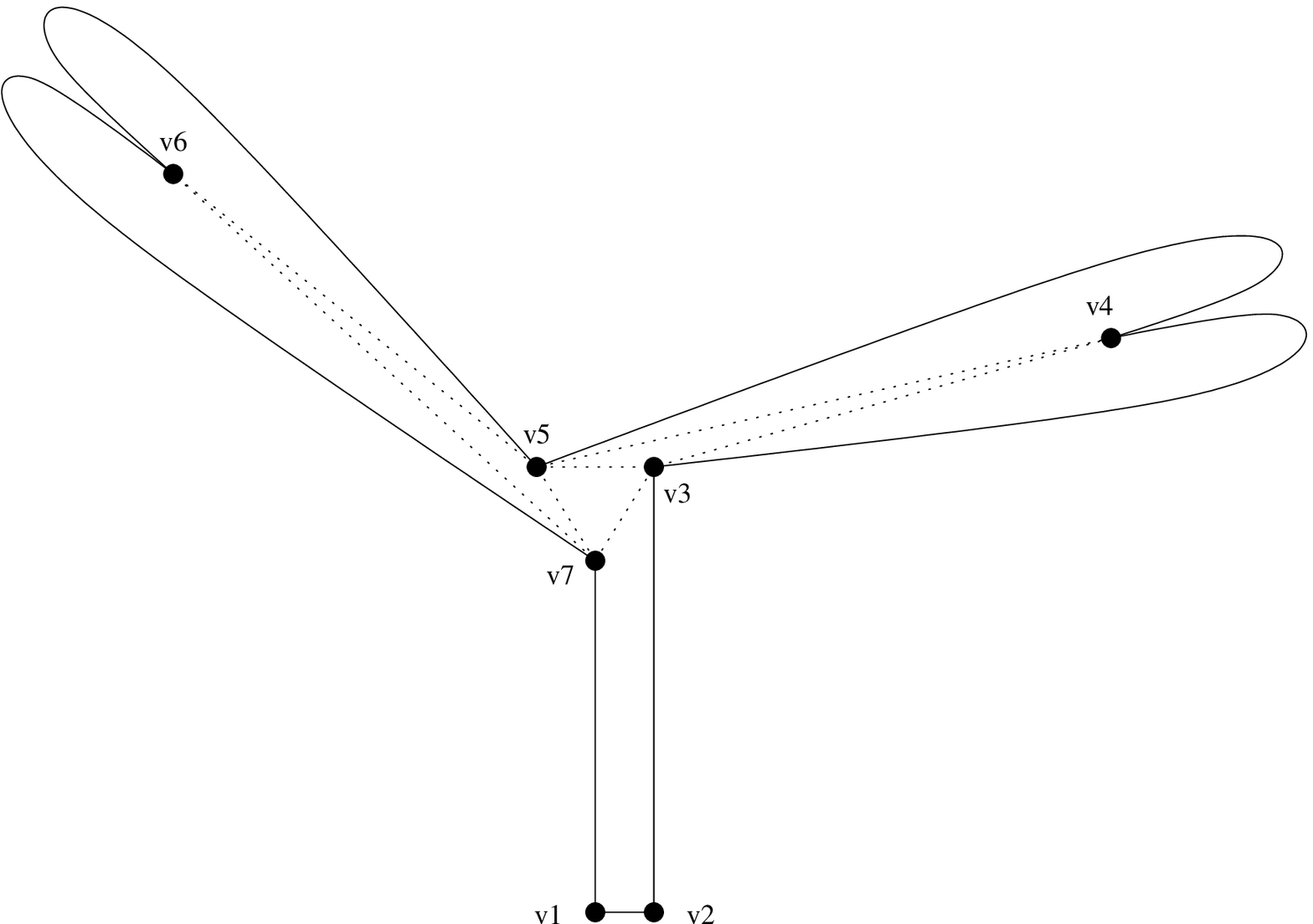}}\hfill%
\subfigure[\label{fig:windmillear}]
          {\includegraphics[width=0.49\textwidth]{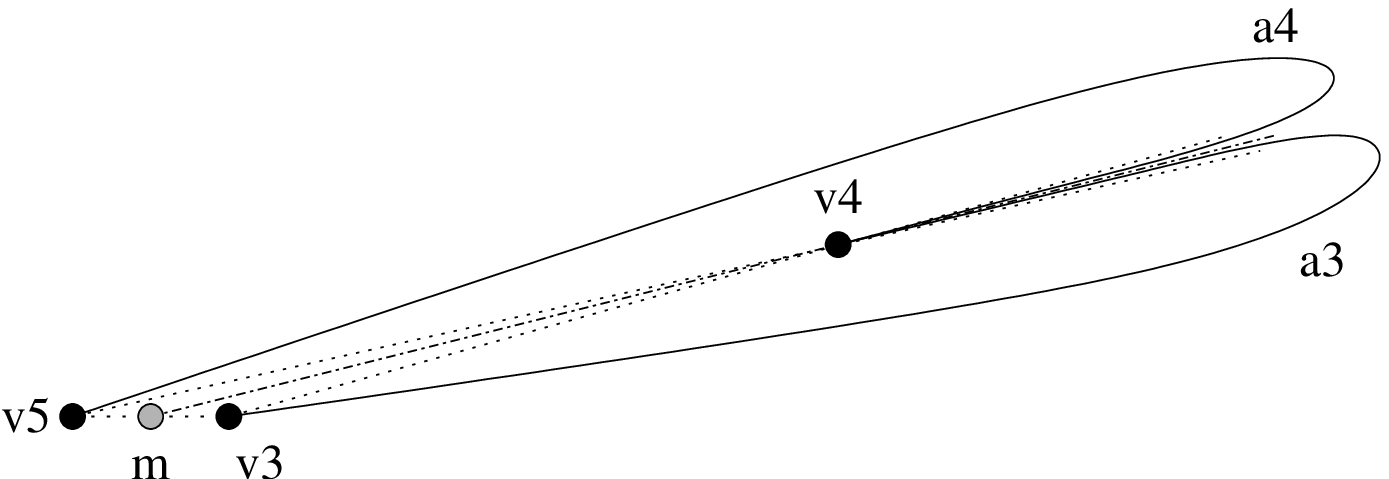}}
\caption{The windmill-like \pconvex polygon $W$ that requires
  at least three vertex guards in order to be guarded. The only
  triplets of guards that guard $W$ are $\{v_3,v_4,v_6\}$,
  $\{v_3,v_5,v_6\}$, $\{v_3,v_5,v_7\}$, $\{v_4,v_5,v_7\}$ and
  $\{v_4,v_6,v_7\}$.}
\label{fig:windmillall}
\end{center}
\end{figure}

\begin{figure}[t]
\begin{center}
\psfrag{vi}[][]{\small$v_i$}
\psfrag{vi+1}[][]{\small$v_{i+1}$}
\psfrag{vi+2}[][]{\small$v_{i+2}$}
\psfrag{vi+3}[][]{\small$v_{i+3}$}
\psfrag{u}[][]{\small$u$}
\includegraphics[width=0.9\textwidth]{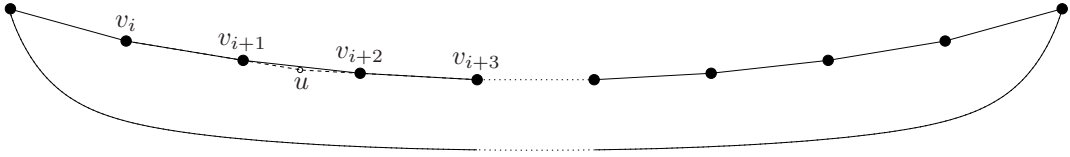}
\caption{The crescent-like \pconvex polygon $C$, that requires
  a guarding set of at least $\lfloor\frac{n}{2}\rfloor$ vertex guards.}
\label{fig:crescent}
\end{center}
\end{figure}

In this section we are going to present a \pconvex polygon
which requires a minimum of $\lfloor\frac{4n}{7}\rfloor-1$ vertex
guards in order to be guarded.

Let us first consider the windmill-like \pconvex polygon $W$
with seven vertices of Fig. \ref{fig:windmill}, a detail of which is
shown in Fig. \ref{fig:windmillear}. The \emph{double ear} defined by
the vertices $v_3$, $v_4$ and $v_5$ and the convex arcs $a_3$ and
$a_4$ is constructed in such a way so that neither
$v_3$ nor $v_5$ can guard both rooms $r_3$ and $r_4$ by
itself. This is achieved by ensuring that $a_3$ (\resp $a_4$)
intersects the line $v_4v_5$ (\resp $v_3v_4$) twice. Note that both
$a_3$ and $a_4$ intersect the line $mv_4$ only at $v_4$, where $m$ is
the midpoint of the line segment $v_3v_5$. The double ear defined
by the vertices $v_5$, $v_6$ and $v_7$ and the convex arcs $a_5$ and
$a_6$ is constructed in an analogous way. Moreover, the vertices
$v_1$, $v_2$, $v_4$ and $v_6$ are placed in such a way so that they do
not (collectively) guard the interior of the triangle $v_3v_5v_7$ (for
example the lengths of the edges $v_1v_7$ and $v_2v_3$ are considered
to be big enough, so that $v_2$ does not see too much of the triangle
$v_3v_5v_7$). As a result of this construction, $W$ cannot be guarded
by two vertex guards, but can be guarded with three. There are
actually only five possible guarding triplets: $\{v_3,v_4,v_6\}$,
$\{v_3,v_5,v_6\}$, $\{v_3,v_5,v_7\}$, $\{v_4,v_5,v_7\}$ and
$\{v_4,v_6,v_7\}$. Any guarding set that contains either $v_1$ or
$v_2$ has cardinality at least four. The vertices $v_1$ and $v_2$ will
be referred to as \emph{base vertices}.

Consider now the crescent-like polygon $C$ with $n$ vertices of
Fig. \ref{fig:crescent}. The vertices of $C$ are in strictly convex
position. This fact has the following implication: if $v_i$, $v_{i+1}$,
$v_{i+2}$ and $v_{i+3}$ are four consecutive vertices of $C$, and $u$
is the point of intersection of the lines $v_iv_{i+1}$ and
$v_{i+2}v_{i+3}$, then the triangle $v_{i+1}uv_{i+2}$ is guarded
if and only if either $v_{i+1}$ or $v_{i+2}$ is in the guarding set of $C$.
As a result, it is easy to see that $C$ cannot be guarded with less
than $\lfloor\frac{n}{2}\rfloor$ vertices, since in this case there
will be at least one edge both endpoints of which would not be in the
guarding set for $C$.

In order to construct the \pconvex polygon that gives us the
lower bound mentioned at the beginning of this section, we are going
to merge several copies of $W$ with $C$. 
More precisely, consider the \pconvex polygon $P$ of Fig.
\ref{fig:lbconstruction} with $n=7k$ vertices. It consists of copies of
the polygon $W$ merged with $C$ at every other linear edge of $C$,
through the base points of the $W$'s.

In order to guard any of the windmill-like subpolygons, we
need at least three vertices per such polygon, none which can be a
base point. This gives a total of $3k$ vertices. On the other hand, in
order to guard the crescent-like part of $P$ we need at least $k-1$
guards among the base points. To see that, notice that there are $k-1$
linear segments connecting base points; if we were to use less than
$k-1$ guards, we would have at least one such line segment $e$, both
endpoints of which would not participate in the guarding set of
$G$. But then, as in the case of $C$, there would be a triangle,
adjacent to $e$, which would not be guarded. Therefore, in order to
guard $P$ we need a minimum of $4k-1=\lfloor\frac{4n}{7}\rfloor-1$
guards, which yields the following theorem.

\begin{figure}[!t]
\begin{center}
\includegraphics[width=0.99\textwidth]{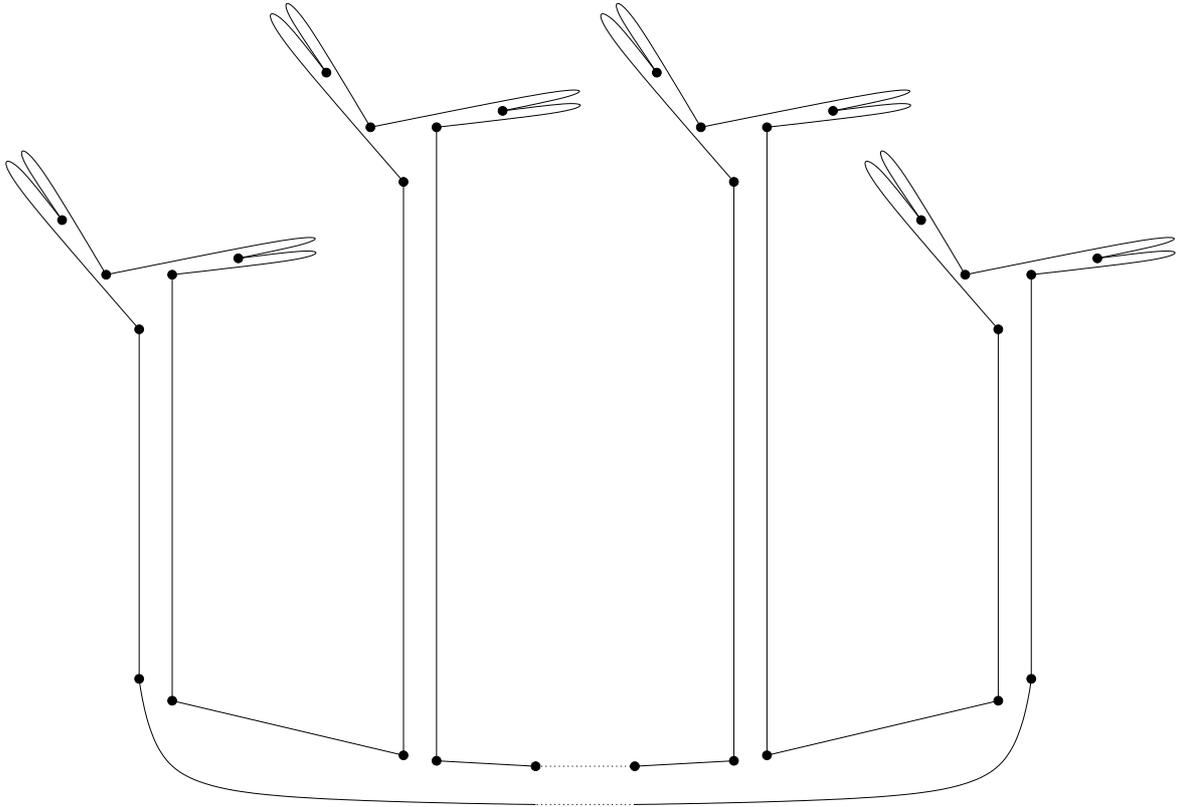}
\caption{The lower bound construction.}
\label{fig:lbconstruction}
\end{center}
\end{figure}

\begin{theorem}
There exists a family of \pconvex polygons with $n$ vertices
any vertex guarding set of which has cardinality at least
$\lfloor\frac{4n}{7}\rfloor-1$.
\end{theorem}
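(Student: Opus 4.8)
The plan is to analyze the explicit construction $P$ of Figure~\ref{fig:lbconstruction}, which glues $k$ copies of the windmill-like polygon $W$ onto the crescent $C$ at alternating linear edges, attaching each copy through its two base vertices, so that $n=7k$. The lower bound will then follow from showing that any vertex guarding set $G$ of $P$ must (i) commit at least three \emph{non-base} guards inside each windmill copy, and (ii) commit at least $k-1$ guards chosen from among the base vertices to cover the crescent portion. Since base and non-base vertices are disjoint, these two requirements add, yielding $|G|\ge 3k+(k-1)=4k-1=\lfloor\frac{4n}{7}\rfloor-1$, which is exactly the claimed bound for $n=7k$.

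For step (i), I would transfer the guarding property already established for the isolated windmill $W$ to each embedded copy. The double-ear gadget $v_3,v_4,v_5$ with arcs $a_3,a_4$ is built so that no single vertex covers both rooms $r_3$ and $r_4$ (because $a_3$ meets line $v_4v_5$ twice, $a_4$ meets $v_3v_4$ twice, and both arcs meet $mv_4$ only at $v_4$), and the placement of $v_1,v_2,v_4,v_6$ is such that they do not collectively cover the triangle $v_3v_5v_7$. Together these force three guards, all drawn from the non-base set, with the five admissible triplets listed in the text. I would check that merging $W$ into $C$ through the base vertices introduces no new line of sight from the rest of $P$ into the double ears of a windmill copy, so that each copy must still be guarded from within by three of its own non-base vertices.

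For step (ii), I would reproduce verbatim the argument used for the crescent $C$. The base vertices lie in strictly convex position, and consecutive base vertices are joined by the $k-1$ linear edges of $C$ that carry no windmill. For each such base edge, playing the role of $v_{i+1}v_{i+2}$, the adjacent edges determine an intersection point $u$ and a thin triangle $v_{i+1}uv_{i+2}$ that is guarded \emph{only} if one of its two base endpoints lies in $G$. Hence, were fewer than $k-1$ base vertices selected, some base edge would have neither endpoint in $G$, leaving its triangle unguarded; therefore at least $k-1$ base vertices are required.

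The main obstacle I anticipate is making the disjointness of the two counts rigorous, \ie guaranteeing that no guard is charged twice. Concretely, I must verify that the three guards forced inside a windmill can never be base vertices (so they do not contribute to the $k-1$ base guards), and conversely that a base vertex selected to cover a crescent triangle $v_{i+1}uv_{i+2}$ cannot substitute for one of the three interior windmill guards. The first fact follows from the admissible-triplet list for $W$, which excludes $v_1$ and $v_2$; the second from the fact that a base vertex sees only a bounded, crescent-side portion of any windmill and none of its double ears. Pinning down these cross-gadget visibilities — ensuring that no long sight line from one copy usefully penetrates another — is the delicate geometric step and would require a careful case analysis of Figure~\ref{fig:lbconstruction}. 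Once disjointness is secured, the additive count $3k+(k-1)=4k-1$ is immediate.
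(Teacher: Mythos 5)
Your proposal follows the paper's own argument exactly: three non-base guards per windmill copy (using the double-ear gadget and the list of admissible triplets), $k-1$ base guards for the crescent part via the unguarded-triangle argument on base edges, and additivity from the disjointness of base and non-base vertices, giving $4k-1=\lfloor\frac{4n}{7}\rfloor-1$. The cross-gadget visibility concerns you flag are real but are likewise handled only informally in the paper (by stipulating the construction's proportions), so there is no substantive difference between the two proofs.
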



\section{Monotone \pconvex polygons}
\label{sec:monotonepiececonvex}

In this section we focus on the subclass of \pconvex polygons
that are monotone.
Let us recall the definition of monotone polygons from Section
\ref{sec:intro}: a curvilinear polygon $P$ is called monotone if there
exists a line $L$ such that any line $L^\perp$ perpendicular to $L$
intersects $P$ at most twice.

In the case of linear polygons monotonicity does not yield
better bounds on the worst case number of point or vertex guards
needed in order to guard the polygon. In both cases, monotone or
possibly non-monotone linear polygons, $\lfloor\frac{n}{3}\rfloor$
point or vertex guards are always sufficient and sometimes
necessary.
In the context of \pconvex polygons the situation is
different. Unlike general (\ie not necessarily monotone) \pconvex
polygons, which require at least $\lfloor\frac{4n}{7}\rfloor-1$
vertex guards and can always be guarded with
$\lfloor\frac{2n}{3}\rfloor$ vertex guards, monotone \pconvex
polygons can always be guarded with $\lfloor\frac{n}{2}\rfloor+1$
vertex or $\lfloor\frac{n}{2}\rfloor$ point guards. These bounds are
tight, since there exist monotone \pconvex polygons that
require that many vertex (see Figs. \ref{fig:monotonevertexguardsoddlb}
and \ref{fig:monotonevertexguardsevenlb}) or point guards (see
Fig. \ref{fig:monotonepointguardslb}). This section is devoted to
the presentation of these tight bounds.

\myparagraph{Vertex guards.}
Let us consider a monotone \pconvex polygon $P$, and let us
assume without loss of generality that $P$ is monotone with respect to
the $x$-axis (see also Fig. \ref{fig:monotonepiececonvex}). Let $u_j$,
$1\le{}j\le{}n$, be the $j$-th vertex of $P$ when considered in the
list of vertices sorted with respect to their $x$-coordinate (ties are
broken lexicographically). Let also $u_0$ (\resp $u_{n+1}$) be the
left-most (\resp right-most) point of $P$.
Let $\ell_j$, $0\le{}j\le{}n+1$ be the vertical line
passing through the point $u_j$ of $P$, and let
$\mathcal{L}=\{\ell_0, \ell_1,\ell_2,\ldots,\ell_{n+1}\}$ be the 
collection of these lines. An immediate consequence of the fact that
$P$ is monotone and \pconvex is the following corollary:

\begin{figure}[!t]
\begin{center}
\psfrag{u0}[][]{$u_0$}
\psfrag{u2p}[][]{$u_2'$}
\psfrag{u10}[][]{$u_{10}$}
\psfrag{v1}[][]{$v_1$}
\psfrag{v2}[][]{$v_2$}
\psfrag{v3}[][]{$v_3$}
\psfrag{v4}[][]{$v_4$}
\psfrag{v5}[][]{$v_5$}
\psfrag{v6}[][]{$v_6$}
\psfrag{v7}[][]{$v_7$}
\psfrag{v8}[][]{$v_8$}
\psfrag{v9}[][]{$v_9$}
\includegraphics[width=0.95\textwidth]{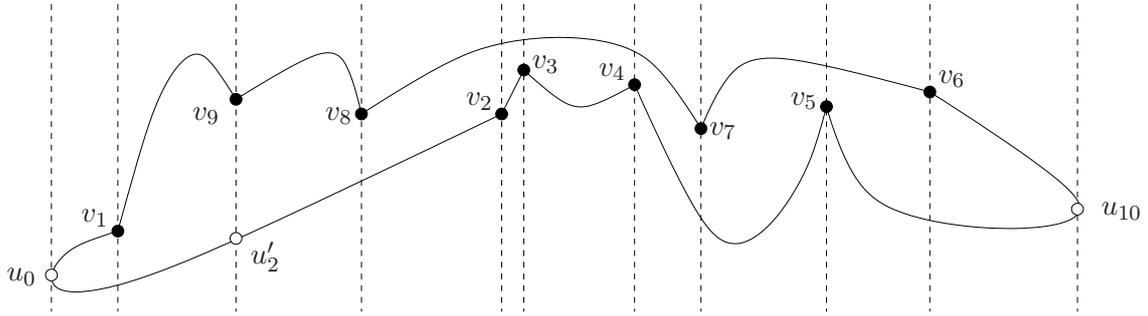}
\caption{A monotone \pconvex polygon $P$ with $n=9$ vertices
  and its vertical decomposition into four-sided convex slabs. The
  points $u_0$ and $u_{10}$ are the left-most and right-most points
  of $P$; $u_2'$ is the projection of $u_2\equiv{}v_9$, along
  $\ell_2$, on the opposite chain of $P$. $P$ can be either guarded with:
  (1) $\lfloor\frac{n}{2}\rfloor+1=5$ vertices, 
  namely the vertex set
  $\{u_1,u_3,u_5,u_7,u_9\}\equiv\{v_1,v_8,v_3,v_7,v_6\}$, or
  $\lfloor\frac{n}{2}\rfloor=4$ points, namely the point set
  $\{u_2',u_4,u_6,u_8\}\equiv\{u_2',v_2,v_4,v_5\}$.}
\label{fig:monotonepiececonvex}
\end{center}
\end{figure}

\begin{corollary}\label{cor:vertical_decomp}
The collection of lines $\mathcal{L}$ decomposes the interior of $P$
into at most $n+1$ convex regions $\kappa_j$, $j=0,\ldots,n$, that
are free of vertices or edges of $P$.
\end{corollary}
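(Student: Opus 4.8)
The plan is to exploit both the $x$-monotonicity of $P$ and its local convexity. Since $P$ is monotone with respect to the $x$-axis, its boundary $\partial P$ splits into two $x$-monotone chains, an \emph{upper chain} and a \emph{lower chain}, that share the extreme points $u_0$ and $u_{n+1}$. First I would observe that, by the very definition of the lines $\ell_j$, every vertex of $P$ lies on some line of $\mathcal{L}$; hence no vertex of $P$ lies strictly between two consecutive lines $\ell_j$ and $\ell_{j+1}$. Consequently, inside each open vertical slab $\kappa_j$ bounded by $\ell_j$ and $\ell_{j+1}$, the portion of the upper chain is a subarc of a single edge of $P$, and likewise the portion of the lower chain; in particular no edge of $P$ crosses the interior of a slab, so each region $\kappa_j$ is free of vertices and edges of $P$, as required.

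It then remains to prove that each $\kappa_j$ is convex. Here I would use the \pconvex hypothesis. Writing $\kappa_j$ as the set bounded on the left and right by (possibly degenerate) vertical segments of $\ell_j$ and $\ell_{j+1}$, above by the upper subarc $y=f(x)$ and below by the lower subarc $y=g(x)$ over the interval $[x_j,x_{j+1}]$, convexity is equivalent to $f$ being concave and $g$ being convex, since $\kappa_j$ is then the intersection of the slab with the hypograph of a concave function and the epigraph of a convex function, each of which is convex. To establish this, note that because the open slab contains no vertex, each subarc belongs to a single convex, $G^1$-continuous edge along which the interior of $P$ lies locally on the same side as the arc with respect to every tangent line. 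For the upper chain, the interior lies below, so the arc must lie below its tangent at each point, i.e. $f$ is concave; the symmetric argument forces $g$ convex for the lower chain. Hence $\kappa_j$ is convex.

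Finally, for the count: the family $\mathcal{L}$ consists of the $n+2$ lines $\ell_0,\ldots,\ell_{n+1}$, which, sorted by abscissa, determine at most $n+1$ vertical slabs; the number drops below $n+1$ precisely when two vertices share an $x$-coordinate and their lines coincide. The two extreme slabs adjacent to $u_0$ and $u_{n+1}$ may degenerate into triangles (one vertical side collapsing to the point $u_0$ or $u_{n+1}$), but this does not affect convexity. This yields the claimed decomposition of the interior of $P$ into at most $n+1$ convex regions, each free of vertices and edges of $P$.

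I expect the convexity step to be the main obstacle. The subtlety is that the \pconvex condition is purely local — it constrains each arc only pointwise relative to its tangent — so the deduction that $f$ is globally concave and $g$ globally convex over the whole slab hinges crucially on the absence of vertices in the open slab, which guarantees that $f$ and $g$ are restrictions of single convex edges rather than concatenations across vertices. Translating the tangent-side condition into graph concavity/convexity, and separately checking the degenerate configurations (vertical tangents at $u_0$ or $u_{n+1}$, or edges containing vertical portions), is where the care is needed, though each such check is short.
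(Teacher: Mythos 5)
Your argument is correct and is precisely the reasoning the paper has in mind: the paper states this corollary without proof, calling it an ``immediate consequence'' of monotonicity and piecewise-convexity, and your write-up (no vertices in the open slabs, hence each slab is bounded above by a subarc of a single edge that is concave as a graph and below by one that is convex, plus the count of $n+2$ lines giving at most $n+1$ slabs) is exactly the elaboration of that. No gap; the only point worth stating slightly more carefully is the two extreme slabs, where the boundary is a single convex arc wrapping around $u_0$ or $u_{n+1}$ rather than a pair of graphs, but convexity there follows directly from the arc being a convex arc with the interior on its inner side.
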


In addition to the fact that the region $\kappa_j$,
$1\le{}j\le{}n-1$, is convex, $\kappa_j$ has on its boundary both
vertices $u_j$ and $u_{j+1}$. This immediately implies that both $u_j$
and $u_{j+1}$ see the entire region $\kappa_j$. As far as $\kappa_0$
and $\kappa_n$ are concerned, they have $u_1$ and $u_n$ on their
boundary, respectively. As a result, $u_1$ sees $\kappa_0$, whereas
$u_n$ sees $\kappa_n$. Hence, in order to guard $P$ it suffices to take
every other vertex $u_j$, starting from $u_1$, plus $u_n$ (if not
already taken). The set
$G=\{u_{2m-1}, 1\le{}m\le{}\lfloor\frac{n}{2}\rfloor\}\cup\{u_n\}$
is, thus, a vertex guarding set for $P$ of size
$\lfloor\frac{n}{2}\rfloor+1$.

A line $L$ with respect to which $P$ is monotone can be computed in
$O(n)$ time if it exists \cite{ds-cgcw-90}. Given $L$, we can compute
the vertex guarding set $G$ for $P$ in $O(n)$ time and $O(n)$ space:
project the vertices of $P$ on $L$ and merge the two sorted (with
respect to their ordering on $L$) lists of vertices in the upper and
lower chain of $P$; then report every other vertex in the merged
sorted list starting from the first vertex, plus the last vertex of
$P$, if it has not already been reported.

\begin{figure}[t]
\begin{center}
  \includegraphics[width=0.99\textwidth]{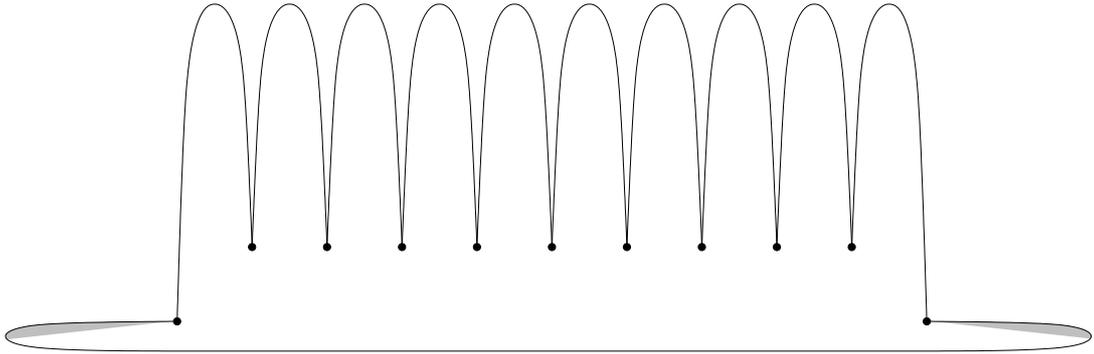}
  \caption{A monotone \pconvex polygon $M_1$ with an odd
    number of vertices that requires $\lfloor\frac{n}{2}\rfloor+1$
    vertex guards in order to be guarded: the shaded regions require
    that at least one of the two endpoints of the bottom-most edge of
    the polygon to be in the guarding set.}
  \label{fig:monotonevertexguardsoddlb}
\end{center}
\end{figure}

\begin{figure}[t]
  \begin{center}
    \psfrag{r1}[][]{\figtextsize$s_1$}
    \psfrag{r2}[][]{\figtextsize$s_2$}
    \psfrag{r3}[][]{\figtextsize$s_3$}
    \psfrag{r4}[][]{\figtextsize$s_4$}
    \psfrag{r5}[][]{\figtextsize$s_5$}
    \psfrag{x1}[][]{\figtextsize$x_1$}
    \psfrag{x2}[][]{\figtextsize$x_2$}
    \psfrag{x3}[][]{\figtextsize$x_3$}
    \psfrag{x4}[][]{\figtextsize$x_4$}
    \psfrag{x5}[][]{\figtextsize$x_5$}
    \psfrag{x6}[][]{\figtextsize$x_6$}
    \psfrag{x7}[][]{\figtextsize$x_7$}
    \psfrag{x8}[][]{\figtextsize$x_8$}
    \includegraphics[width=\textwidth]{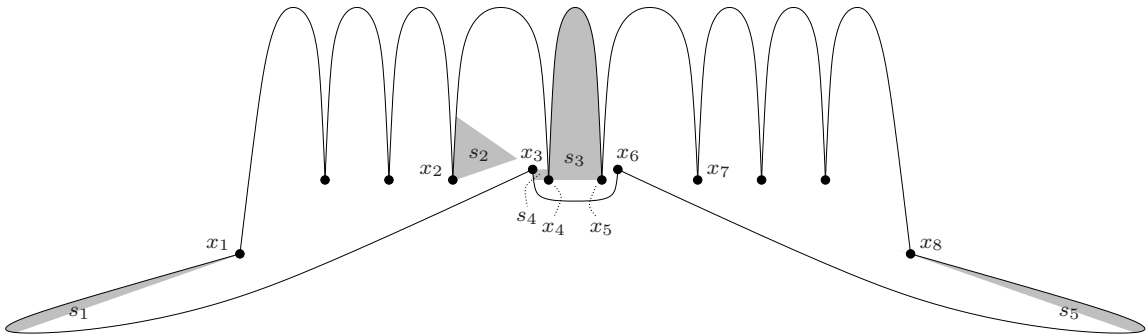}
  \end{center}
  \caption{A monotone \pconvex polygon $M_2$ with an even
    number of vertices that requires $\lfloor\frac{n}{2}\rfloor+1$
    vertex guards in order to be guarded.}
  \label{fig:monotonevertexguardsevenlb}
\end{figure}

The polygons $M_1$ and $M_2$ yielding the lower bound are shown in
Figs. \ref{fig:monotonevertexguardsoddlb} and
\ref{fig:monotonevertexguardsevenlb}. $M_1$ has an odd number of
vertices, whereas $M_2$ has an even number of vertices. Let $G_1$
(\resp $G_2$) be the vertex guarding set for $M_1$ (\resp $M_2$).
Let us first consider $M_1$ (see Fig. \ref{fig:monotonevertexguardsoddlb}). 
Notice that each prong of $M_1$ is fully guarded by either of its two
endpoints; the other vertices of $M_1$ can only partially guard the
prongs that they are not adjacent to. Moreover, the shaded regions of
$M_1$ can only be guarded by $u_1$ or $u_n$. Suppose, now, we can
guard $M_1$ with less than $\lfloor\frac{n}{2}\rfloor+1$ vertex
guards. Then either two consecutive vertices $u_i$ and $u_{i+1}$ of
$M_1$, $1\le{}i\le{}n-1$, will not belong to $G_1$, or $u_1$ and $u_n$
will not belong to $G_1$. In the former case, the prong that has $u_i$
and $u_{i+1}$ as endpoints is only partially guarded by the vertices
in $G_1$, a contradiction. In the latter case, the shaded regions of
$M_1$ are not guarded by the vertices in $G_1$, again a contradiction.

Consider now the polygon $M_2$ (see Fig. \ref{fig:monotonevertexguardsevenlb}). 
The number of vertices of $M_2$ between $x_1$ and $x_2$ is equal to
the number of vertices between $x_7$ and $x_8$, and even in
number. Every prong of $M_2$ between $x_1$ and $x_2$ (\resp between
$x_7$ and $x_8$) can be guarded by its two endpoints only; all other
vertices of $M_2$ guard each such prong only partially. The shaded
region $s_1$ (\resp $s_5$) is guarded only if either $x_1$ or $x_3$
(\resp either $x_6$ or $x_8$) belongs to $G_2$. The prong with
endpoints $x_2$ and $x_4$ can be guarded by either both $x_2$ and
$x_4$, or by $x_3$. If $x_2$ is the only vertex in $G_2$ among $x_2$,
$x_3$ and $x_4$, then the shaded region $s_4$ is not guarded.
Similarly, if $x_4$ is the only vertex in $G_2$ among $x_2$, $x_3$ and
$x_4$, then the shaded region $s_2$ is not guarded. Finally, if
neither $x_4$ nor $x_5$ belong to $G_2$, then the shaded prong $s_3$
is not guarded. Let us suppose now that $M_2$ can be guarded by less
than $\lfloor\frac{n}{2}\rfloor+1$ vertex guards. By our observations
above, it is not possible that two consecutive vertices $u_i$ and
$u_{i+1}$ of $M_2$, $1\le{}i\le{}n-1$, do not belong to $G_2$. Hence
$G_2$ will be a subset of the set
$G_2'=\{u_{2m-1},1\le{}m\le{}\lfloor\frac{n}{2}\rfloor\}$ or a subset
of the set
$G_2''=\{u_{2m},1\le{}m\le{}\lfloor\frac{n}{2}\rfloor\}$. In the
former case, \ie if $G_2\subseteq{}G_2'$, neither $x_6$ nor $x_8$
belong to $G_2$, and thus the region $s_5$ is not guarded, a
contradiction. Similarly, if $G_2\subseteq{}G_2''$, neither $x_1$ nor
$x_3$ belong to $G_2$, and thus the region $s_1$ is not guarded, again
a contradiction. We thus conclude that
$|G_2|\ge{}\lfloor\frac{n}{2}\rfloor+1$.

\myparagraph{Point guards.}
We now turn our attention to guarding $P$ with point guards (refer
again to Fig. \ref{fig:monotonepiececonvex}). Define $G_{even}$ to be
the vertex set $G_{even}=\{u_{2m},1\le{}m\le{}\lfloor\frac{n}{2}\rfloor\}$.
If $u_0\ne{}u_1$, \ie if $\kappa_0\ne\emptyset$, let $e_f$ be the
first (left-most) edge of $P$, and $u_\mu$, $\mu>1$, the right-most
endpoint of $e_f$ (the left-most endpoint of $e_f$ is necessarily $u_1$).
If $u_{n+1}\ne{}u_n$, \ie if $\kappa_{n+1}\ne\emptyset$, let $e_l$
be the last (right-most) edge of $P$, and $u_\nu$, $\nu<n$, the left-most
endpoint of $e_l$ (the right-most endpoint of $e_l$ is necessarily $u_n$).
Finally, let $u_i'$, $2\le{}i\le{}n-1$ be the projection along
$L^\perp$ of $u_i$ on the opposite monotone chain of $P$. Define the
set $G$ according to the following procedure:
\begin{enumerate}
\item[1.] Set $G$ equal to $G_{even}$.
\item[2.] If $u_0\ne{}u_1$ and $\mu>2$, replace $u_2$ in $G$ by $u_2'$.
\item[3.] If $u_{n+1}\ne{}u_n$ and $n$ is odd and $\nu<n-1$, replace
  $u_{2\lfloor\frac{n}{2}\rfloor}$ by $u_{2\lfloor\frac{n}{2}\rfloor}'$.
\end{enumerate}

As in the case of vertex guards, the set $G$ can be computed in linear
time and space: $G_{even}$ can be computed in linear time and space,
whereas determining if $u_2$ (\resp $u_{2\lfloor\frac{n}{2}\rfloor}$)
is to be replaced in $G$ by $u_2'$ (\resp
$u_{2\lfloor\frac{n}{2}\rfloor}'$) takes $O(1)$ time. The following
lemma establishes that $G$ is indeed a point guarding set for $P$.

\begin{lemma}
  The set $G$ defined according to the procedure above is a point
  guarding set for $P$.
\end{lemma}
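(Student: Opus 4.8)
The plan is to use the convex decomposition $\{\kappa_j\}_{j=0}^{n}$ of Corollary~\ref{cor:vertical_decomp}. Since the $\kappa_j$ partition the interior of $P$ and each is convex and free of vertices and edges of $P$, it suffices to show that for every $j$ some point of $G$ sees all of $\kappa_j$. I would first dispose of the regions $\kappa_j$ with $1\le j\le n-1$: each has both $u_j$ and $u_{j+1}$ on its boundary ($u_j$ is an endpoint of the vertical chord $\ell_j\cap P$ and $u_{j+1}$ of $\ell_{j+1}\cap P$), so by convexity each of $u_j,u_{j+1}$ sees all of $\kappa_j$. Exactly one of $j,j+1$ is even and lies in $\{2,4,\dots,2\lfloor\frac n2\rfloor\}$, so the corresponding vertex is in $G_{even}$. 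The only vertices of $G_{even}$ the procedure ever moves are $u_2$ and $u_{2\lfloor n/2\rfloor}$, and each is replaced by the opposite endpoint $u_i'$ of the same vertical chord $\ell_i\cap P$; hence $u_i'$ still lies on the common boundary of $\kappa_{i-1}$ and $\kappa_i$ and, by convexity, still guards them. Therefore $\kappa_1,\dots,\kappa_{n-1}$ remain guarded after Steps~2 and~3.

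It remains to guard the extremal regions $\kappa_0$ and $\kappa_n$, which are nonempty precisely when $u_0\ne u_1$ and $u_{n+1}\ne u_n$; the only vertex of $P$ on their boundary is $u_1$, \resp $u_n$. When $n$ is even, $u_n\in G_{even}$ already guards $\kappa_n$, so the genuinely new work is $\kappa_0$ (always) and $\kappa_n$ (when $n$ is odd); by the left--right symmetry of the construction I would argue $\kappa_0$ and transcribe the argument for $\kappa_n$ using $e_l,u_\nu$ in place of $e_f,u_\mu$. Let $e_f$ be the leftmost edge, $u_1,u_\mu$ its endpoints and $r_f$ its room. If $\mu=2$ (Step~2 does not fire) the guard is $u_2$, an endpoint of $e_f$. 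If $\mu>2$ (Step~2 fires) the guard is $u_2'$; since $u_\mu$ is the first vertex of the monotone chain opposite to $u_1$, the vertices $u_2,\dots,u_{\mu-1}$ lie on the same chain as $u_1$, so the vertical projection $u_2'$ of $u_2$ onto the opposite chain is a point of the arc $e_f$. In either case the chosen guard lies on $e_f$, that is, on $\partial r_f$.

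The crux, and the step I expect to be the main obstacle, is the purely geometric claim $\kappa_0\subseteq r_f$, which then yields the result: $r_f$ is bounded by a convex arc and its chord $u_1u_\mu$, hence is a convex subset of $P$, so a guard on $\partial r_f$ sees every point of $r_f$, the connecting segment staying inside $r_f\subseteq P$. To establish the containment I would note that every point of $\kappa_0$ has $x$-coordinate at most $x(u_1)$, whereas the chord $u_1u_\mu$ has $x$-extent $[x(u_1),x(u_\mu)]$; thus $\kappa_0$ lies entirely on the arc side of that chord and is contained in $r_f$. The symmetric statement $\kappa_n\subseteq r_l$ handles $\kappa_n$ for odd $n$ via Step~3. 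What remains is bookkeeping: one must check that Steps~2 and~3 never move the same vertex in conflicting ways --- an issue only for very small $n$ (notably $n=3$, where $u_2=u_{2\lfloor n/2\rfloor}$), dispatched by direct inspection --- and that no region other than $\kappa_0,\kappa_n,\kappa_{i-1},\kappa_i$ depended on a moved vertex, which holds since every remaining $\kappa_j$ is covered by an even vertex of index between $4$ and $n-2$.
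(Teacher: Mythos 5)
Your proof follows essentially the same route as the paper's: it uses the vertical convex decomposition of Corollary~\ref{cor:vertical_decomp}, guards each interior slab $\kappa_j$ by the even-indexed vertex of $G_{even}$ (or its vertical projection $u_i'$, which lies on the same chord $\ell_i\cap P$ and hence still on the boundary of $\kappa_{i-1}$ and $\kappa_i$), and handles $\kappa_0$ and $\kappa_n$ via the containments $\kappa_0\subseteq r_f$ and $\kappa_n\subseteq r_l$ with a guard placed on the boundary of these convex rooms. The differences are purely organizational, and you actually supply slightly more justification than the paper does for why $u_2'$ lies on $e_f$ when $\mu>2$ and for the containment $\kappa_0\subseteq r_f$.
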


\begin{proof}
Every convex region $\kappa_i$, $3\le{}i\le{}n-3$ is guarded by either
$u_i$ or $u_{i+1}$, since one of the two is in $G$.

Now consider the convex regions $\kappa_0$, $\kappa_1$ and
$\kappa_2$. Both $u_2$ and $u_2'$ lie on the common boundary of
$\kappa_1$ and $\kappa_2$. Since either $u_2$ or $u_2'$ is in $G$, we
conclude that $\kappa_1$ and $\kappa_2$ are guarded.
If $\kappa_0=\emptyset$, \ie if $u_0\equiv{}u_1$, $\kappa_0$ is
vacuously guarded. Suppose $\kappa_0\ne\emptyset$, \ie $u_0\ne{}u_1$.
Let $r_f$ be the room of $P$ corresponding to the edge $e_f$. Clearly,
$\kappa_0\subseteq{}r_f$. We distinguish between the cases $\mu=2$ and
$\mu>2$. If $\mu=2$, then $u_2\in{}G$ guards $r_f$ and thus $\kappa_0$.
If $\mu>2$, the point $u_2'\in{}G$ is a point on $e_f$. Therefore,
$u_2'$ guards $r_f$ and thus $\kappa_0$.

Finally, we consider the convex regions $\kappa_{n-2}$, $\kappa_{n-1}$
and $\kappa_n$. If $\kappa_n=\emptyset$, \ie $u_{n+1}\equiv{}u_n$,
$\kappa_n$ is vacuously guarded. Suppose, now, that 
$\kappa_n\ne\emptyset$, \ie $u_{n+1}\ne{}u_n$. Let $r_l$ be the room
of $P$ corresponding to the edge $e_l$. Clearly,
$\kappa_n\subseteq{}r_l$. We distinguish between the cases
``$n$ even'' and ``$n$ odd''.
If $n$ is even, then both
$u_{n-2}\equiv{}u_{2\lfloor\frac{n}{2}\rfloor-2}$ and
$u_n\equiv{}u_{2\lfloor\frac{n}{2}\rfloor}$ belong to $G$. This
immediately implies that all three $\kappa_{n-2}$, $\kappa_{n-1}$ and
$\kappa_n$ are guarded: $\kappa_{n-2}$ is guarded by $u_{n-2}$, whereas
$\kappa_{n-1}$ and $\kappa_n$ are guarded by $u_n$.
If $n$ is odd, either $u_{n-1}\equiv{}u_{2\lfloor\frac{n}{2}\rfloor}$ or
$u_{n-1}'\equiv{}u_{2\lfloor\frac{n}{2}\rfloor}'$ belongs to $G$. Since both
$u_{n-1}$ and $u_{n-1}'$
lie on the common boundary of $\kappa_{n-2}$ and $\kappa_{n-1}$, we
conclude that both $\kappa_{n-2}$ and $\kappa_{n-1}$ are guarded.
To prove that $\kappa_n$ is guarded, we further distinguish between
the cases $\nu=n-1$ and $\nu<n-1$. If $\nu=n-1$, then
$u_{n-1}\in{}G$ is an endpoint of $r_l$, and thus guards
$\kappa_n$. If $\nu<n-1$, the point $u_{n-1}'\in{}G$ is a point on
$e_l$. Therefore, $u_{n-1}'$ guards $r_l$ and thus $\kappa_n$.
\proofbox
\end{proof}

\begin{figure}[t]
\begin{center}
\includegraphics[width=0.99\textwidth]{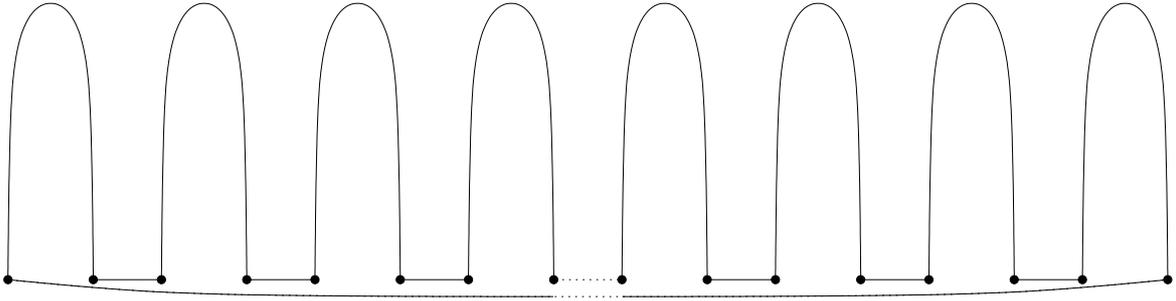}
\caption{A comb-like monotone \pconvex polygon that
  requires $\lfloor\frac{n}{2}\rfloor$ point guards in order to be
  guarded: one point guard is required per prong.}
\label{fig:monotonepointguardslb}
\end{center}
\end{figure}

As far as the minimum number of point guards required to guard a
monotone \pconvex polygon is concerned, the polygon $M$,
shown in Fig. \ref{fig:monotonepointguardslb}, yields the sought for
lower bound.
Notice that is very similar to the well known comb-like linear polygon
that establishes the lower bound on the number of point or vertex
guards required to guard a linear polygon. In our case it is easy to
see that we need at least one point guard per prong of the polygon,
and since there are $\lfloor\frac{n}{2}\rfloor$ prongs we conclude
that we need at least $\lfloor\frac{n}{2}\rfloor$ point guards in
order to guard $M$.

We are now ready to state the following theorem that summarizes the
results of this section.

\begin{theorem}
Given a monotone \pconvex polygon $P$ with $n\ge{}2$
vertices, $\lfloor\frac{n}{2}\rfloor+1$ vertex (\resp
$\lfloor\frac{n}{2}\rfloor$ point) guards are always sufficient and
sometimes necessary in order to guard $P$. Moreover, we can compute a
vertex (\resp point) guarding set for $P$ of size
$\lfloor\frac{n}{2}\rfloor+1$ (\resp $\lfloor\frac{n}{2}\rfloor$)
in $O(n)$ time and $O(n)$ space.
\end{theorem}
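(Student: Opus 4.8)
The plan is to assemble the four separate bounds---vertex sufficiency, vertex necessity, point sufficiency, point necessity---together with the complexity claims, since each has essentially been established in the preceding discussion. For the vertex upper bound I would invoke Corollary \ref{cor:vertical_decomp}: the lines in $\mathcal{L}$ cut the interior of $P$ into convex regions $\kappa_j$, each free of vertices and edges. Since $\kappa_j$ ($1\le{}j\le{}n-1$) has both $u_j$ and $u_{j+1}$ on its boundary, and $\kappa_0$ (\resp $\kappa_n$) has $u_1$ (\resp $u_n$) on its boundary, convexity guarantees that any such boundary vertex sees the whole region. Picking every other vertex starting from $u_1$ and adjoining $u_n$ therefore guards all of the $\kappa_j$, hence all of $P$; this set has size $\lfloor\frac{n}{2}\rfloor+1$, giving sufficiency.

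For the point upper bound I would appeal directly to the preceding lemma, which shows that the set $G$ built from $G_{even}$ by the two conditional projection-replacements is a point guarding set of cardinality exactly $\lfloor\frac{n}{2}\rfloor$. The necessity halves come from the explicit families: $M_1$ and $M_2$ of Figures \ref{fig:monotonevertexguardsoddlb} and \ref{fig:monotonevertexguardsevenlb} force $\lfloor\frac{n}{2}\rfloor+1$ vertex guards (no two consecutive vertices may be omitted, else a prong is only partially seen, and the shaded end-regions force one endpoint of the extreme edge into the guarding set), while the comb-like polygon $M$ of Figure \ref{fig:monotonepointguardslb} needs one point guard per prong, \ie $\lfloor\frac{n}{2}\rfloor$ in total. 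Combining the matching upper and lower bounds yields the ``always sufficient and sometimes necessary'' statements for both guard types.

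For the complexity claim I would recall that a direction $L$ of monotonicity is computable in $O(n)$ time \cite{ds-cgcw-90}; projecting the vertices onto $L$ and merging the two sorted chains reduces the construction of either guarding set to linear-time list manipulation. The vertex set is read off by taking every other vertex of the merged order plus the last, while the point set needs only the $O(1)$ tests deciding whether $u_2$ or $u_{2\lfloor\frac{n}{2}\rfloor}$ is replaced by its projection; both therefore run in $O(n)$ time and $O(n)$ space. The one place demanding care---and the closest thing to an obstacle---is the bookkeeping at the ends of the chain: correctly handling the possibly-empty regions $\kappa_0,\kappa_n$ and the parity of $n$ in the point-guard construction, which is precisely what the conditional replacements in the preceding lemma are designed to cover.
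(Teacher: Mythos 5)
Your proposal is correct and follows essentially the same route as the paper: the theorem is a summary statement whose proof is distributed over the section, and you assemble exactly the same ingredients — the vertical decomposition of Corollary \ref{cor:vertical_decomp} for the vertex upper bound, the preceding lemma on the set $G$ for the point upper bound, the polygons $M_1$, $M_2$ and the comb-like $M$ for the two lower bounds, and the sorted-merge argument for the $O(n)$ time and space claims. Nothing is missing.
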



\section{\Pconcave polygons}
\label{sec:piececoncave}

In this section we deal with the problem of guarding \pconcave
polygons using point guards. Guarding a \pconcave
polygon with vertex guards may be impossible even for very simple
configurations (see Fig. \ref{fig:concave_no_vertex_guards}). In
particular we prove the following:

\begin{theorem}\label{thm:pconcave}
Let $P$ be a \pconcave polygon with $n$ vertices. $2n-4$ point
guards are always sufficient and sometimes necessary in order to guard
$P$.
\end{theorem}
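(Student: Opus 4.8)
The plan is to prove the two bounds separately, following the strategy outlined in Section~\ref{sec:intro}. For sufficiency I would reduce the problem of guarding $P$ to that of guarding a combinatorially simpler locally concave polygon $Q\subseteq P$ whose interior splits into faces with pairwise disjoint interiors. First I would construct $Q$ by pushing each convex arc of $P$ deeper into the interior, keeping its two endpoints (the original vertices of $P$) fixed, until it becomes tangent to as many neighbouring arcs as the configuration allows; this is the curvilinear analogue of the tangency-maximisation used when illuminating families of disjoint convex bodies \cite{f-icd-77}. The crucial property to establish at this stage is that any guarding set for $Q$ is automatically a guarding set for $P$: each arc of $Q$ is a convex arc lying between the corresponding arc of $P$ and the chord of its room, so a point inside a face of $Q$ that sees the bounding arc also sees the thin curvilinear sector of $P\setminus Q$ lying just behind it, by exactly the convexity argument used in the proof of Theorem~\ref{thm:gsapprox2orig}.

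Next I would set up the combinatorial correspondence that produces the count $2n-4$. Let $R$ be the linear polygon on the $n$ vertices of $P$; any triangulation $\tr{R}$ of $R$ has exactly $n-2$ triangles. I would argue that the tangencies created in $Q$ induce such a triangulation, so that the bounded faces of $Q$ are in $1$--$1$ correspondence with the triangles of $\tr{R}$, and in particular that there are exactly $n-2$ of them. Each face is a curvilinear triangle whose three sides are convex arcs that are concave when viewed from inside the face, so it behaves like a cell enclosed by convex obstacles. I would then show that every such face is guarded by two point guards, placed for instance at two of its tangency corners, while the inward-bulging arcs may genuinely force two. Summing over the $n-2$ faces and invoking the visibility-transfer property above gives a guarding set for $P$ of size at most $2(n-2)=2n-4$.

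For necessity I would exhibit a family of \pconcave polygons matching this bound. The construction should carve the interior, by deep inward-bulging arcs, into $n-2$ curvilinear-triangular pockets such that (i) each pocket really needs two point guards, since no single point can see around the two concave arcs bounding it, and (ii) no point can contribute to two distinct pockets simultaneously, so the guards of different pockets must be distinct. Together these force at least $2(n-2)=2n-4$ guards. I would also recall Fig.~\ref{fig:concave_no_vertex_guards} to explain why the statement is phrased for \emph{point} guards: for \pconcave polygons vertex guards can fail to guard the interior at all, so the vertex-guard framework of the previous sections does not apply here.

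The main obstacle I anticipate is the tangency-maximisation step together with the proof that it yields precisely a triangulation of $R$. I must show that the deformed arcs can be made pairwise tangent in a consistent, non-crossing fashion, that the resulting bounded faces are genuine curvilinear triangles bounded by three arcs, and that their number is exactly $n-2$ rather than merely at most that. Establishing that two guards always suffice per face, and the matching fact that two are sometimes necessary per face with no sharing across faces, is the geometric heart of the argument; once these are in place, the counting through $\tr{R}$ is routine.
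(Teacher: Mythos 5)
Your overall architecture coincides with the paper's: grow the pockets $A_i$ (bounded by $a_i$ and its chord) into larger convex pockets $S_i$ bounded by new arcs $\kappa_i$ with tangencies maximized, as in \cite{f-icd-77}; identify the tangency graph with a triangulation graph $\tr{R}$ of an $n$-gon $R$, so that the interior of $Q$ decomposes into $n-2$ pseudo-triangular faces; spend two guards per face; and match this with a polygon whose interior is carved into $n-2$ pseudo-triangular pockets, each needing two unshareable guards (the paper's Fig.~\ref{fig:concavelb}). One small slip: since $A_i\subseteq S_i$, the arc $\kappa_i$ of $Q$ lies \emph{beyond} $a_i$ as seen from the chord, not ``between the arc of $P$ and the chord''; the residual region $P\setminus Q$ is the union of the crescents $S_i\setminus A_i$.

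The genuine gap is in how you discharge the guarding of these crescents and of the faces themselves. First, your claimed reduction --- ``any guarding set for $Q$ is automatically a guarding set for $P$'' --- does not follow from the convexity argument of Theorem~\ref{thm:gsapprox2orig}: there the residual pieces are \emph{convex} sectors, seen in full from any single point of an adjacent triangle, whereas here $S_i\setminus A_i$ is a non-convex crescent wrapped around the convex obstacle $A_i$, and a point of $Q$ that sees one end of the crescent is blocked by $A_i$ from seeing the other end. Second, placing the two guards of a face ``at two of its tangency corners'' fails: the corners of each face are cusps formed by two mutually tangent arcs, and (exactly as in your own lower-bound discussion, and as in Fig.~\ref{fig:nonconvexinfinite}) the neighbourhood of such a cusp is visible only from points on the common tangent line of the two arcs, which does not in general contain the apex of a \emph{different} cusp. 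The paper resolves both problems at once by an explicit placement: the guards are the vertices of the tangent polygons $\Pi_i$, i.e.\ the intersection points of consecutive common tangent lines. Each such point lies on two of the three tangent lines bounding a face, so two of the three corner points of a face cover all three cusps and the whole face; and because these points sit on tangents circumscribing the $S_i$, they also sweep the crescents $S_i\setminus A_i$. Without replacing your two claims by such a concrete placement (or an equivalent argument), the sufficiency half of the proof does not go through; the necessity half of your plan is sound and matches the paper.
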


\begin{proof}
To prove the sufficiency of $2n-4$ point guards we essentially apply the
technique in \cite{f-icd-77} for illuminating disjoint compact convex
sets --- please refer to Fig. \ref{fig:pconcaveub}. We denote by $A_i$
the convex object delimited by $a_i$ and the chord $v_iv_{i+1}$ of
$a_i$. Let $t_i(v_j)$ be the tangent line to $a_i$ at $v_j$,
$j=i,i+1$, and let $b_{i+1}$ be the bisecting ray of $t_i(v_{i+1})$,
$t_{i+1}(v_{i+1})$ pointing towards the interior of $P$. 

Construct a set of locally convex arcs
$\mathcal{K}=\{\kappa_1,\kappa_2,\ldots,\kappa_n\}$ that lie entirely
inside $P$ as such that (cf. \cite{f-icd-77}):
\begin{enumerate}
\item[(a)] the endpoints of $\kappa_i$ are $v_i$, $v_{i+1}$,
\item[(b)] $\kappa_i$ is tangent to $b_i$ (\resp $b_{i+1}$) at $v_i$
  (\resp $v_{i+1}$),
\item[(c)] if $S_i$ is the locally convex object defined by $\kappa_i$
  and its chord $v_iv_{i+1}$, then $A_i\subseteq{}S_i$, $1\le{}i\le{}n$,
\item[(d)] the arcs $\kappa_i$ are pairwise non-crossing, and
\item[(e)] the number of tangencies between the elements of
  $\mathcal{K}$ is maximized.
\end{enumerate}
Let $Q$ be the \pconcave polygon defined by the sequence of
the arcs in $\mathcal{K}$.

\begin{figure*}[t]
\begin{center}
  \psfrag{v1}[][]{\small$v_1$}
  \psfrag{v2}[][]{\small$v_2$}
  \psfrag{v3}[][]{\small$v_3$}
  \psfrag{v4}[][]{\small$v_4$}
  \psfrag{v5}[][]{\small$v_5$}
  \psfrag{v6}[][]{\small$v_6$}
  \psfrag{v7}[][]{\small$v_7$}
  \psfrag{v8}[][]{\small$v_8$}
  \psfrag{v9}[][]{\small$v_9$}
  \psfrag{v10}[][]{\small$v_{10}$}
  \psfrag{v11}[][]{\small$v_{11}$}
  \psfrag{r1}[][]{\textcolor{red}{\small$u_1$}}
  \psfrag{r2}[][]{\textcolor{red}{\small$u_2$}}
  \psfrag{r3}[][]{\textcolor{red}{\small$u_3$}}
  \psfrag{r4}[][]{\textcolor{red}{\small$u_4$}}
  \psfrag{r5}[][]{\textcolor{red}{\small$u_5$}}
  \psfrag{r6}[][]{\textcolor{red}{\small$u_6$}}
  \psfrag{r7}[][]{\textcolor{red}{\small$u_7$}}
  \psfrag{r8}[][]{\textcolor{red}{\small$u_8$}}
  \psfrag{r9}[][]{\textcolor{red}{\small$u_9$}}
  \psfrag{r10}[][]{\textcolor{red}{\small$u_{10}$}}
  \psfrag{r11}[][]{\textcolor{red}{\small$u_{11}$}}
  \includegraphics[width=0.9\textwidth]{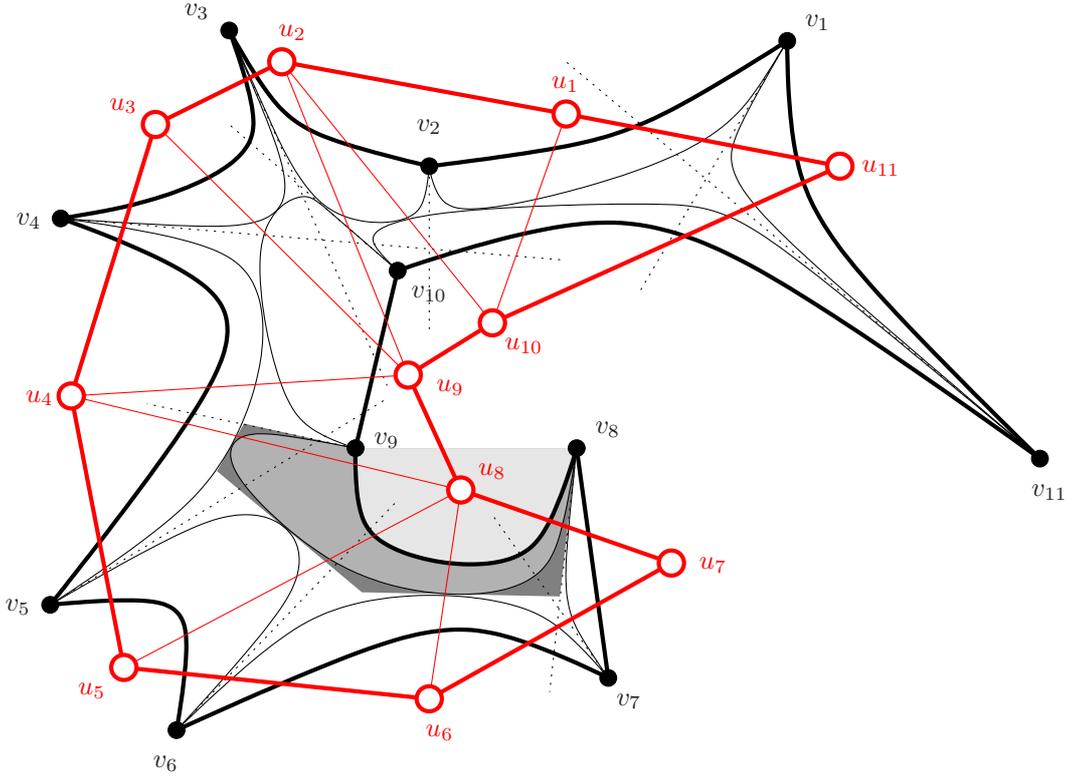}
  \caption{The proof for the upper bound of Theorem
    \ref{thm:pconcave}. The polygon $P$ is shown with thick solid
    curvilinear arcs. The arcs $\kappa_i$ are shown as thin solid
    arcs. The dotted rays are the bisecting rays $b_i$, whereas the
    dashed ray is the ray $r_8(v_9)$. The regions $A_8$,
    $S_8\setminus{}A_8$ and $\Pi_8\setminus{}S_8$ are also shown using
    three levels of gray; note that $\Pi_8$ has one reflex vertex at
    $v_9$. The graph $\Gamma$ (\ie the triangulation graph $\tr{R}$)
    is shown in red: the node $u_i$ corresponds to the arc $a_i$ and
    the polygon $R$ is depicted via thick segments.}
  \label{fig:pconcaveub}
\end{center}
\end{figure*}

Suppose now that $\kappa_i$ and $\kappa_{\sigma(j)}$ are tangent,
$1\le{}j\le{}m$, and let $\ell_{i,\sigma(j)}$ be the common tangent to
$\kappa_i$ and $\kappa_{\sigma(j)}$. Let $s_{i,\sigma(j)}$ be the
line segment on $\ell_{i,\sigma(j)}$ between the points of
intersection of $\ell_{i,\sigma(j)}$ with $\ell_{i,\sigma(j-1)}$ and
$\ell_{i,\sigma(j+1)}$. Let $\Pi_i$ be the polygonal region defined by
the chord $v_iv_{i+1}$ and the line segments
$s_{i,\sigma(j)}$. $\Pi_i$ is a linear polygon with at most two reflex
vertices (at $v_i$ and/or $v_{i+1}$).
It is easy to see that placing guards on the vertices of
the $\Pi_i$'s guards both $P$ and $Q$. Let $G_Q$ be the guard set
of $P$ constructed this way. Construct, now, a planar graph $\Gamma$
with vertex set $\mathcal{K}$. Two vertices $\kappa_i$ and $\kappa_j$
of $\Gamma$ are connected via an edge if
$\kappa_i$ and $\kappa_j$ are tangent. The graph $\Gamma$ is a planar
graph combinatorially equivalent to the triangulation graph $\tr{R}$
of a polygon $R$ with $n$ vertices. The edges of $\Gamma$ connecting
the arcs $\kappa_i$, $\kappa_{i+1}$, $1\le{}i\le{}n$, are the boundary
edges of $R$, whereas all other edges of $\Gamma$ correspond to
diagonals in $\tr{R}$. Let $\intrr{Q}$ denote the interior of
$Q$. Observing that $\intrr{Q}$ consists of a number of faces that are
in 1--1 correspondence with the triangles in $\tr{R}$, we conclude
that $\intrr{Q}$ consists of $n-2$ faces, each containing three
guards of $G_Q$. It fact, each face of $\intrr{Q}$ can actually be
guarded by only two of the three guards it contains and thus we can
eliminate one of them per face of $\intrr{Q}$. The new guard set
$G$ of $Q$ constructed above is also a guard set for $P$ and
contains $2(n-2)$ point guards.

To prove the necessity, refer to the \pconcave polygon $P$ in
Fig. \ref{fig:concavelb}. Each one of the pseudo-triangular regions in
the interior of $P$ requires exactly two point guards in order to be
guarded. Consider for example the pseudo-triangle $\tau$ shown in
gray in Fig. \ref{fig:concavelb}. We need one point along each one of
the lines $l_1$, $l_2$ and $l_3$ in order to guard the
regions near the corners of $\tau$, which implies that we need at
least two points in order to guard $\tau$ (two out of the three
points of intersection of the lines $l_1$, $l_2$ and $l_3$).
The number of such pseudo-triangular regions is exactly
$n-2$, thus we need a total of $2n-4$ point guards to guard $P$.
\proofbox
\end{proof}

\begin{figure*}[t]
\begin{center}
  \subfigure[\label{fig:concave_no_vertex_guards}]
  {\includegraphics[width=0.17\textwidth]{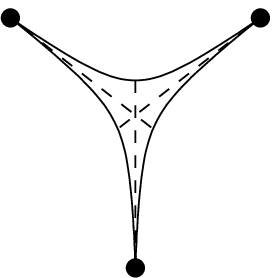}}
  \hfil
  \subfigure[\label{fig:concavelb}]
  {\psfrag{l1}[][]{\small$l_1$}
   \psfrag{l2}[][]{\small$l_2$}
   \psfrag{l3}[][]{\small$l_3$}
   \includegraphics[width=0.75\textwidth]{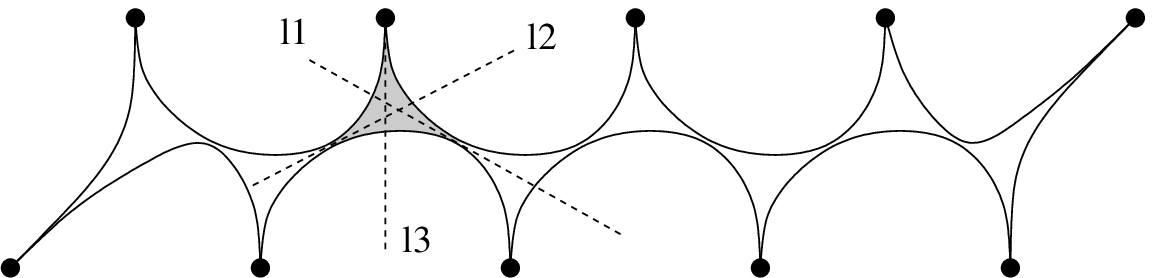}}
  \caption{(a) A \pconcave polygon $P$ that cannot be guarded
    solely by vertex guards. Two consecutive edges of $P$ have a common
    tangent at the common vertex and as a result the three vertices of
    $P$ see only the points along the dashed segments. (b) A
    \pconcave polygon $P$ that requires $2n-4$ point guards in
    order to be guarded.}
  \label{fig:concave_polygons}
\end{center}
\end{figure*}


\section{Locally convex and general polygons}
\label{sec:otherlb}

We have so far been dealing with the cases of \pconvex and
\pconcave polygons. In this section we will present results
about locally convex, monotone locally convex and general polygons.

\myparagraph{Locally convex polygons.}
The situation for locally convex polygons is much less interesting, as
compared to \pconvex polygons, in
the sense that there exist locally convex polygons that require $n$
vertex guards in order to be guarded. Consider for example the locally
convex polygon of Fig. \ref{fig:locallyconvexall}. Every room in this
polygon cannot be guarded by a single guard, but rather it requires
both vertices of every locally convex edge to be in any guarding set
in order for the corresponding room to be guarded. As a result it
requires $n$ vertex guards. Clearly, these $n$ guards are also
sufficient, since any one of them guards also the central convex part
of the polygon.
More interestingly, even if we do not restrict ourselves to vertex
guards, but rather allow guards to be any point in the interior or the
boundary of the polygon, then the locally convex polygon in
Fig. \ref{fig:locallyconvexall} still requires $n$ guards. This stems
from the fact that the rooms of this polygon have been constructed in
such a way so that the kernel of each room is the empty set (\ie
they are not star-shaped objects). However, we can guard each room
with two guards, which can actually be chosen to be the endpoints of
the locally convex arcs.

\begin{figure}[!t]
\begin{center}
\psfrag{v1}[][]{$v_1$}
\psfrag{v2}[][]{$v_2$}
\psfrag{a1}[][]{$a_1$}
\psfrag{a2}[][]{$a_2$}
\psfrag{l}[][]{$\ell$}
\subfigure[\label{fig:locallyconvexall}]
          {\includegraphics[width=0.3\textwidth]{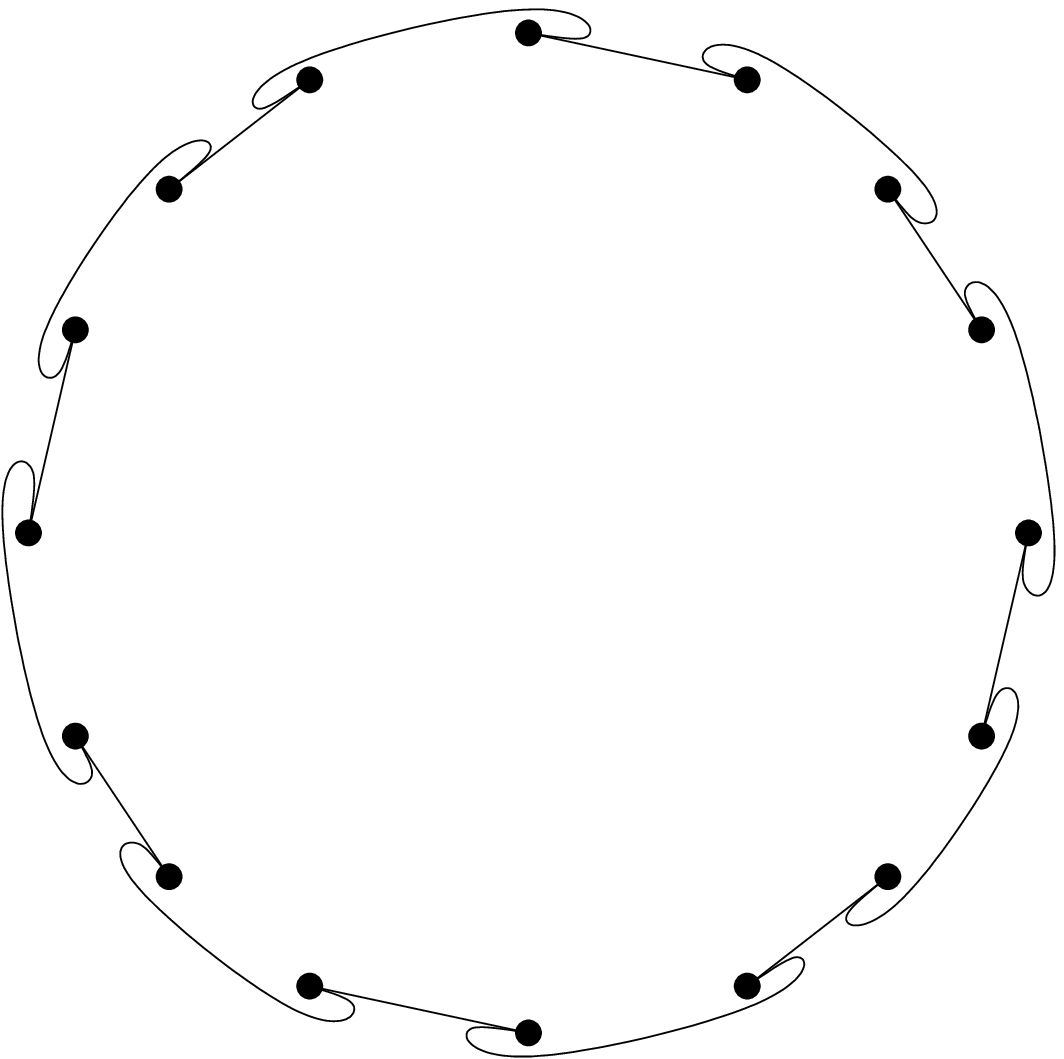}}\hfil%
\subfigure[\label{fig:nonconvexinfinite}]%
          {\includegraphics[width=0.3\textwidth]{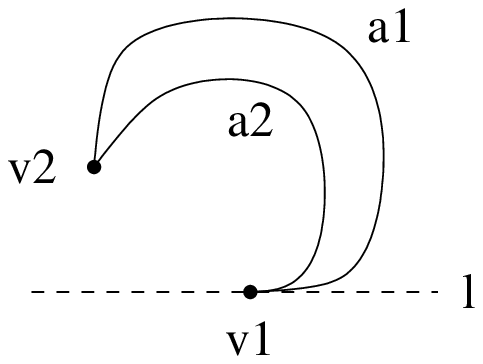}}
\caption{(a) A locally convex polygon with $n$ vertices that requires $n$
  vertex or point guards in order to be guarded. (b) A non-convex
  polygon that cannot be guarded by vertex guards, and which requires
  an infinite number of point guards.}
\label{fig:extensions}
\end{center}
\end{figure}

In fact the $n$ vertices of a locally convex polygon are not only
necessary (in the worst case), but also always sufficient.
Consider a point $q$ inside a locally convex polygon $P$ and let $\rho_q$
be an arbitrary ray emanating from $q$. Let $w_q$ be the first point
of intersection of $\rho_q$ with the boundary of $P$ as we walk on $\rho_q$
away from $q$. If $w_q$ is a vertex of $P$ we are done: $q$ is visible
by one of the vertices of $P$. Otherwise, rotate $\rho_q$ around $q$ in
the, say, counterclockwise direction, until the line segment $qw_q$
hits a feature $f$ of $P$ (if multiple features of $P$ are hit at the
same time, consider the one closest to $q$ along $\rho_q$). $f$ cannot
be a point in the interior of an edge of $P$ since then $P$ would have
to be locally concave at $f$. Therefore, $f$ has to be a vertex of
$P$, \ie $q$ is guarded by $f$. We can thus state the following
theorem:

\begin{theorem}
Let $P$ be a locally convex polygon with $n\ge{}2$ vertices. Then $n$
vertex (the $n$ vertices of $P$) or point guards are always
sufficient and sometimes necessary in order to guard $P$.
\end{theorem}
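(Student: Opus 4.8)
The plan is to prove the two directions separately: \emph{sufficiency}, that the $n$ vertices of $P$ always form a guarding set, and \emph{necessity}, that some locally convex $n$-gon defeats every collection of fewer than $n$ points. All of the geometric content sits in sufficiency, which I would prove by showing that \emph{every} point of $P$ sees at least one vertex; the full vertex set then guards $P$.

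First I would fix a point $q$ in the interior of $P$ and cast an arbitrary ray $\rho_q$ from $q$, letting $w_q$ be the first point at which $\rho_q$ meets $\partial P$, so that the open segment $qw_q$ avoids the boundary. If $w_q$ is a vertex we are done. Otherwise I would rotate $\rho_q$ about $q$ continuously, following $w_q$ as it slides along $\partial P$, and halt at the first angle at which the closed segment $qw_q$ acquires a second contact with $\partial P$ at some feature $f$ (taking the contact nearest $q$ if several occur simultaneously). Because $P$ has finitely many features, such a critical event occurs at a finite angle, and at that angle the relative interior of the subsegment $qf$ is still free of $\partial P$; hence $q$ sees $f$, and it remains only to identify $f$ as a vertex.

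The hard part will be exactly this identification: ruling out that the first obstruction $f$ lies in the relative interior of an edge $a_i$. At such a contact the segment $qw_q$ would be tangent to $a_i$ at $f$ from \emph{within} $P$. But local convexity means that near $f$ the interior of $P$ lies on the same side of the tangent line $\ell$ as the $\varepsilon$-neighborhood of $f$ on $a_i$, so the arc curves into the interior; consequently the tangent line $\ell$ leaves $P$ immediately on both sides of $f$, and no segment lying in the interior of $P$ can graze $a_i$ tangentially at an interior point of the arc. This contradiction forces $f$ to be a vertex of $P$, \ie $q$ is guarded by $f$. The same argument applies to points of $\partial P$ with only notational changes (a ray cast into the interior from a boundary point), so the $n$ vertices guard all of $P$.

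For necessity I would exhibit the locally convex polygon of Fig.~\ref{fig:locallyconvexall}, whose $n$ rooms are shaped so that each has empty kernel and is therefore not star-shaped. Since no single point sees an entire room, each room demands at least two guards; and because each guard --- whether a vertex or, by the construction, an interior point placed to serve a room --- can contribute to at most the two rooms flanking it, a token count over the $n$ rooms (total demand $2n$, at most two units supplied per guard) forces at least $n$ guards, for vertex \emph{and} point guards alike. As the $n$ vertices realize this bound (each room receiving both of its endpoints), $n$ guards are simultaneously necessary and sufficient, completing the proof.
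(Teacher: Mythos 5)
Your proof is correct and follows essentially the same route as the paper: the same rotating-ray argument for sufficiency (with the first obstruction forced to be a vertex because a tangential contact at an edge-interior point would contradict local convexity), and the same empty-kernel-rooms construction for necessity. Your explicit two-units-per-room counting merely spells out what the paper leaves implicit.
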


\myparagraph{Monotone locally convex polygons.}
As far as monotone locally convex polygons are concerned, it easy to
see that $\lfloor\frac{n}{2}\rfloor+1$ vertex or point guards are
always sufficient. Let $P$ be a locally convex polygon. As in the case
of \pconvex polygons, assume without loss of generality that
$P$ is monotone with respect to the $x$-axis. Let $u_1,\ldots,u_n$
be the vertices of $P$ sorted with respect to their $x$-coordinate.
To prove our sufficiency result, it suffices to consider the vertical
decomposition of $P$ into at most $n+1$ convex regions $\kappa_i$,
$0\le{}i\le{}n$. Corollary \ref{cor:vertical_decomp} remains valid.
As a result, the vertex set
$G=\{u_{2m-1}, 1\le{}m\le{}\lfloor\frac{n}{2}\rfloor\}\cup\{u_n\}$ 
is a guarding set for $P$ of size $\lfloor\frac{n}{2}\rfloor+1$: every
convex region $\kappa_i$, $1\le{}i\le{}n-1$ is guarded by either $u_i$ or
$u_{i+1}$, since at least one of $u_i$, $u_{i+1}$ is in $G$; moreover,
$u_1$ and $u_n$ guard $\kappa_0$ and $\kappa_n$, respectively. As in
the case of \pconvex polygons, $G$ can be computed in linear
time and space.

In fact, the upper bound on the number of vertex/point guards for $P$
just presented is also a worst case lower bound. Consider the locally
convex polygons $T_1$ and $T_2$ of Fig.
\ref{fig:monotonelocallypointguardslb}, each consisting of $n$
vertices. $T_1$ has an odd number
of vertices, while the number of vertices of $T_2$ is even. It is
readily seen that both $T_1$ and $T_2$ need at least one point guard
per prong (including the right-most prong of $T_1$ and both the
left-most and right-most prongs of $T_2$). Since the number of prongs
in either $T_1$ or $T_2$ is $\lfloor\frac{n}{2}\rfloor+1$, we conclude
that $T_1$ and $T_2$ require at least $\lfloor\frac{n}{2}\rfloor+1$
point guards in order to be guarded. Summarizing our results about
monotone locally convex polygons:

\begin{theorem}
Given a monotone locally convex polygon $P$ with $n\ge{}2$
vertices, $\lfloor\frac{n}{2}\rfloor+1$ vertex or point guards are
always sufficient and sometimes necessary in order to guard
$P$. Moreover, we can compute a vertex guarding set for $P$ of size
$\lfloor\frac{n}{2}\rfloor+1$ in $O(n)$ time and $O(n)$ space.
\end{theorem}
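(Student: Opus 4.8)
The plan is to prove sufficiency and necessity separately and then address the running time, reusing the vertical-decomposition machinery already developed for monotone \pconvex polygons. For sufficiency I would assume without loss of generality that $P$ is monotone with respect to the $x$-axis, sort its vertices $u_1,\ldots,u_n$ by $x$-coordinate, and erect the vertical lines $\ell_j$ through each $u_j$. The first step is to confirm that Corollary \ref{cor:vertical_decomp} carries over verbatim to the locally convex setting: because $P$ is monotone, each slab between consecutive lines $\ell_j$ and $\ell_{j+1}$ is bounded above and below by a single sub-arc of the upper and lower chains, and local convexity (the boundary curving away from the interior) forces the upper sub-arc to be concave and the lower one convex as functions of $x$, so the region $\kappa_j$ is convex and free of vertices and edges of $P$.

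Given this decomposition, I would argue that the set $G=\{u_{2m-1}:1\le m\le\lfloor\frac{n}{2}\rfloor\}\cup\{u_n\}$ guards $P$. Each interior slab $\kappa_j$, $1\le j\le n-1$, has both $u_j$ and $u_{j+1}$ on its boundary, and since $\kappa_j$ is convex either endpoint sees all of it; as at least one of any two consecutive indices is odd, one of $u_j,u_{j+1}$ lies in $G$. The boundary slabs $\kappa_0$ and $\kappa_n$ are guarded by $u_1$ and $u_n$, both in $G$. A short index count shows $|G|=\lfloor\frac{n}{2}\rfloor+1$ for both parities of $n$. Since vertex guards are a special case of point guards, the same cardinality suffices for point guards.

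For necessity I would exhibit the comb-like locally convex polygons $T_1$ (odd $n$) and $T_2$ (even $n$) of Fig. \ref{fig:monotonelocallypointguardslb}. The key observation is that each prong is a thin spike whose tip region can be seen only from points inside that prong, so no single guard can cover two distinct prongs; hence at least one guard per prong is required, even if we allow arbitrary point guards. Counting the prongs of the construction gives $\lfloor\frac{n}{2}\rfloor+1$, which establishes the point-guard lower bound and in turn forces the same vertex-guard lower bound, since any vertex guarding set is also a point guarding set.

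The main obstacle, though minor, is making the transfer of Corollary \ref{cor:vertical_decomp} rigorous: in the \pconvex case each edge is a single convex arc, whereas here an edge is only locally convex, so I must confirm that monotonicity together with local convexity still yields convex, feature-free slabs --- in particular that neither chain re-enters a slab and that the bounding sub-arcs curve the required way. Finally, the algorithmic claim follows as in the \pconvex case: a monotonicity line $L$ can be found in $O(n)$ time (cf. \cite{ds-cgcw-90}), after which projecting the vertices onto $L$, merging the two sorted chains, and reporting every other vertex plus the last one produces $G$ in $O(n)$ time and $O(n)$ space.
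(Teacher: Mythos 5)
Your proposal is correct and follows essentially the same route as the paper: the vertical decomposition through the sorted vertices into convex, feature-free slabs (Corollary \ref{cor:vertical_decomp}), the guarding set $G=\{u_{2m-1}:1\le m\le\lfloor\frac{n}{2}\rfloor\}\cup\{u_n\}$, the comb-like polygons $T_1$ and $T_2$ with one guard per prong for the lower bound, and the merge-based $O(n)$ algorithm. The only difference is that you spell out why the decomposition survives the passage from \pconvex to locally convex edges (concave upper and convex lower bounding sub-arcs per slab), a detail the paper asserts without elaboration.
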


\begin{figure}[t]
\begin{center}
\includegraphics[width=0.99\textwidth]{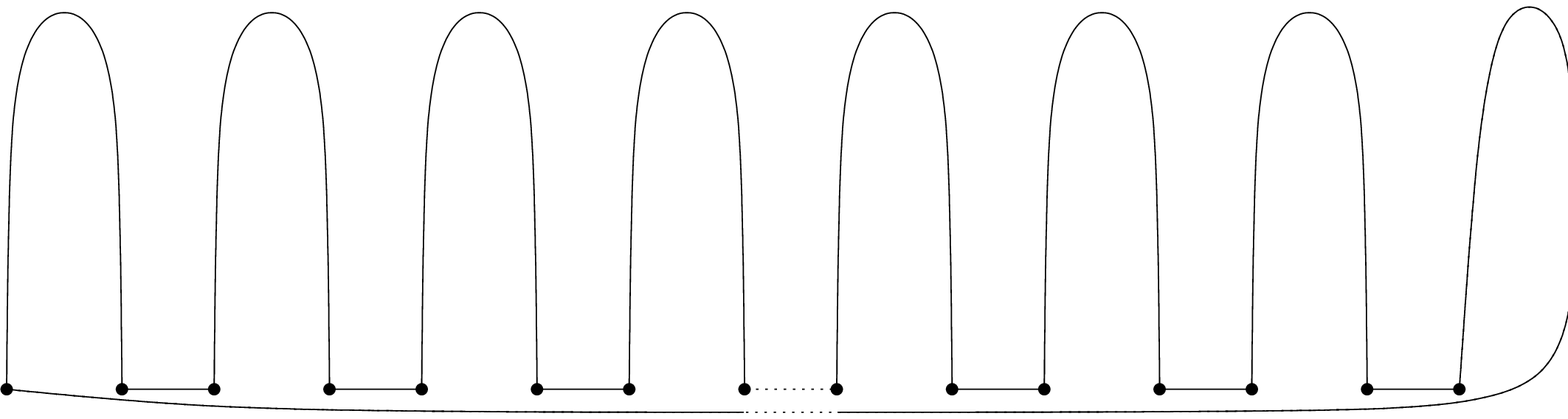}\\[5pt]
\includegraphics[width=0.99\textwidth]{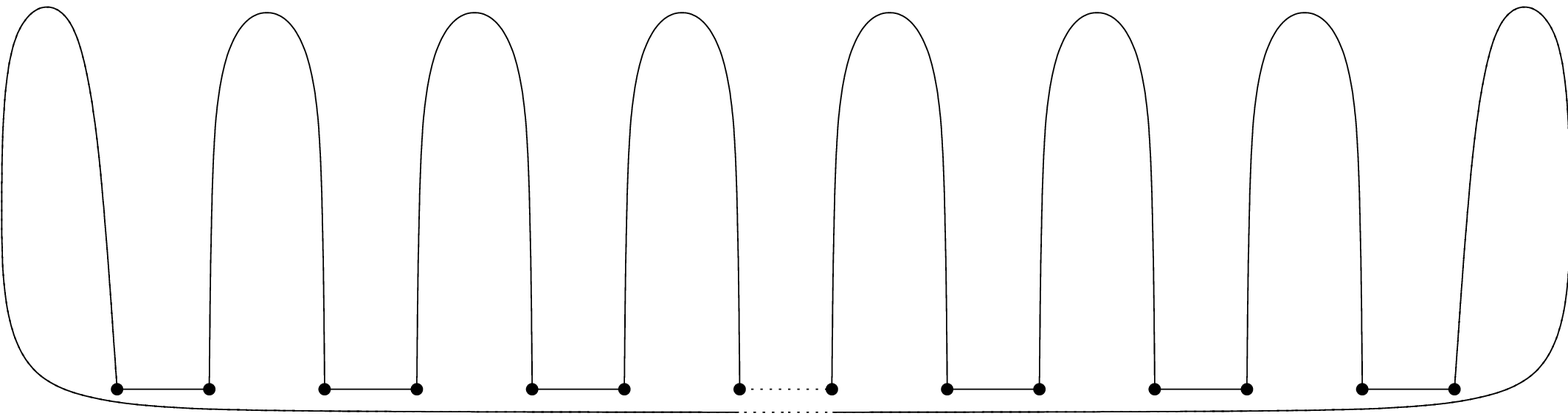}
\caption{Two comb-like monotone locally convex polygons $T_1$ (top)
  and $T_2$ (bottom) with an odd and even number of vertices,
  respectively. Both polygons require $\lfloor\frac{n}{2}\rfloor+1$ point
  guards in order to be guarded: one point guard is required per prong.}
\label{fig:monotonelocallypointguardslb}
\end{center}
\end{figure}

\begin{remark}\sl
  The results presented in this section about locally convex polygons
  are in essence the same with known results on the number of reflex
  vertices required to guard linear polygons. In particular, it is
  known that if a linear polygon $P$ has $r\ge{}1$ reflex vertices,
  $r$ vertex guards placed on these vertices are always sufficient and
  sometimes necessary in order to guard $P$ \cite{o-agta-87}, whereas
  if $P$ is a monotone linear polygon, $\lfloor\frac{r}{2}\rfloor+1$
  among its $r$ reflex vertices are always sufficient and sometimes
  necessary in order to guard $P$ \cite{a-agpiv-84}. In our setting,
  the $r$ reflex vertices of the linear polygon $P$ are the $n$
  vertices of our locally convex polygons, and the locally convex
  polylines connecting the reflex vertices of $P$ are our locally
  convex edges. Clearly, the analogy only refers to the combinatorial
  complexity of guarding sets, since for our algorithmic analysis we
  have assumed that the polygon edges have constant complexity.

  In the context we have just described, \ie seeing linear polygons
  as locally convex polygons the vertices of which are the reflex
  vertices of the linear polygons, it also possible to ``translate''
  the results of Section \ref{sec:piececonvex} as follows:
  \begin{quote}
    Consider a linear polygon $P$ with $r\ge{}2$ reflex vertices. If
    $P$ can be decomposed into $c\ge{}r$ convex polylines pointing
    towards the exterior of the polygon, then $P$ can be
    guarded with at most $\lfloor\frac{2c}{3}\rfloor$ vertex guards.
  \end{quote}
  The analogous ``translation'' for the results of Section
  \ref{sec:piececoncave} is as follows:
  \begin{quote}
    Consider a linear polygon $P$ with $n$ vertices, $r$ of
    which are reflex. If $P$ can be decomposed into $c\ge{}n-r$ convex
    polylines pointing towards the interior of the polygon, then $P$
    can be guarded with at most $2c-4$ point guards.
  \end{quote}

\end{remark}

\myparagraph{General polygons.}
The class of general polygons poses difficulties. Consider
the non-convex polygon $N$ of Fig. \ref{fig:nonconvexinfinite}, which
consists of two vertices $v_1$ and $v_2$ and two convex arcs $a_1$ and
$a_2$. The two arcs are tangent to a common line $\ell$ at $v_1$. It
is readily visible that $v_1$ and $v_2$ cannot guard the interior of
$N$. In fact, $v_1$ cannot guard any point of $N$ other than itself.
Even worse, any finite number of guards, placed anywhere in $N$,
cannot guard the polygon. To see that, consider the vicinity of
$v_1$. Assume that $N$ can be guarded by a finite number of guards,
and let $g\ne v_1$ be the guard closest to $v_1$ with respect to
shortest paths within $N$. Consider the line $\ell_g$ passing through $g$
that is tangent to $a_2$ (among the two possible tangents we are
interested in the one the point of tangency of which is closer to
$v_1$). Let $s_g$ be the sector of $N$ delimited by $a_1$, $a_2$
and $\ell_g$. $s_g$ cannot contain any guarding point, since such a
vertex would be closer to $v_1$ than $g$. Since $s_g$ is not guarded
by $v_1$, we conclude that $s_g$ is not guarded at all, which
contradicts our assumption that $N$ is guarded by a finite set of
guards.


\section{Summary and future work}
\label{sec:summary}

In this paper we have considered the problem of guarding a polygonal
art gallery, the walls of which are allowed to be arcs of curves
(our results are summarized in Table \ref{tbl:results}).
We have demonstrated that if we allow these arcs to be locally convex
arcs, $n$ (vertex or point) guards are always sufficient and sometimes
necessary. If these arcs are allowed to be non-convex, then an
infinite number of guards may be required.
In the case of \pconvex polygons with $n$ vertices, we have
shown that it is always possible to guard the polygon with
$\lfloor\frac{2n}{3}\rfloor$ vertex guards, whereas
$\lfloor\frac{4n}{7}\rfloor-1$ vertex guards are sometimes
necessary. Furthermore, we have described an $O(n\log n)$ time and
$O(n)$ space algorithm for computing a vertex guarding set of size at
most $\lfloor\frac{2n}{3}\rfloor$.
For \pconcave polygons, we have shown that $2n-4$
point guards are always sufficient and sometimes necessary. Finally,
in the special case of monotone \pconvex polygons,
$\lfloor\frac{n}{2}\rfloor+1$ vertex or $\lfloor\frac{n}{2}\rfloor$
point guards are always sufficient and sometimes necessary, whereas
for monotone locally convex polygons $\lfloor\frac{n}{2}\rfloor+1$
vertex or point guards are always sufficient and sometimes necessary.

Up to now we have not found a \pconvex polygon that requires
more than $\lfloor\frac{4n}{7}\rfloor+O(1)$ vertex guards, nor have we
devised a polynomial time algorithm for guarding a \pconvex
polygon with less than $\lfloor\frac{2n}{3}\rfloor$ vertex guards. Closing
the gap between then two complexities remains an open problem. Another
open problem is the worst case maximum number of point guards required to
guard a \pconvex polygon. In this case our lower bound
construction fails, since it is possible to guard the corresponding
polygon with $\lfloor\frac{3n}{7}\rfloor+O(1)$ point guards. On the
other hand, the comb-like polygon shown in
Fig. \ref{fig:monotonepointguardslb}, requires
$\lfloor\frac{n}{2}\rfloor$ point guards. Clearly, our algorithm that
computes a guarding set of at most $\lfloor\frac{2n}{3}\rfloor$ vertex
guards is still applicable.

\begin{table}[t]
\begin{center}
\begin{tabular}{|c|c|c|c|c|}\hline
&\multicolumn{4}{c|}{\sl{}Bounds by guard type}\\\cline{2-5}
\sl{}Polygon type&\multicolumn{2}{c|}{\sl{}Vertex}&
\multicolumn{2}{c|}{\sl{}Point}\\\cline{2-5}
&\sl{}Upper&\sl{}Lower&\sl{}Upper&\sl{}Lower\\\hline
\hline
\Pconvex&\small$\lfloor\frac{2n}{3}\rfloor$&
\small$\lfloor\frac{4n}{7}\rfloor-1$&
\small$\lfloor\frac{2n}{3}\rfloor$&
\small$\lfloor\frac{n}{2}\rfloor$\\\hline
Monotone \pconvex&
\multicolumn{2}{c|}{\small$\lfloor\frac{n}{2}\rfloor+1$}&
\multicolumn{2}{c|}{\small$\lfloor\frac{n}{2}\rfloor$}\\\hline
Locally convex&\multicolumn{4}{c|}{$n$}\\\hline
Monotone locally convex&\multicolumn{4}{c|}{$\lfloor\frac{n}{2}\rfloor+1$}
\\\hline
\Pconcave&\multicolumn{2}{c|}{\sc{}not always possible}&
\multicolumn{2}{c|}{$2n-4$}\\\hline
General&\multicolumn{2}{c|}{\sc{}not always possible}&
\multicolumn{2}{c|}{$\infty$}\\\hline
\end{tabular}
\caption{The results in this paper: worst case upper and lower bounds
  on the number of vertex or point guards needed in order to guard
  different types of curvilinear polygons.}
\label{tbl:results}
\end{center}
\end{table}

Other types of guarding problems have been studied in the literature,
which either differ on the type of guards (e.g., edge or mobile
guards), the topology of the polygons considered (e.g., polygons with
holes) or the guarding model (e.g., the fortress problem or 
the prison yard problem, mentioned in Section \ref{sec:intro}); see
the book by O'Rourke \cite{o-agta-87}, the survey paper by Shermer
\cite{s-rrag-92} of the book chapter by Urrutia \cite{u-agip-00} for
an extensive list of the variations of the art gallery problem with
respect to the types of guards or the guarding model. It would be
interesting to extend these results to the families of curvilinear
polygons presented in this paper.

Last but not least, in the case of general polygons, is it possible
to devise an algorithm for computing a guarding set of finite
cardinality, if the polygon does not contain cusp-like
configurations such as the one in Fig. \ref{fig:nonconvexinfinite}?


\section*{Acknowledgements}
The authors wish to thank Ioannis Z. Emiris, Hazel Everett and
G{\"u}nter Rote for useful discussions about the
problem. \ACSacknowledgement.

\bibliographystyle{abbrv}
\bibliography{art_gallery}

\end{document}